\documentclass[11pt,a4paper]{article}
\usepackage[utf8]{inputenc}
\usepackage[T1]{fontenc}
\usepackage{lmodern}
\usepackage[english]{babel}
\usepackage[margin=2.5cm]{geometry} 
\usepackage{graphicx} 
\usepackage{booktabs} 
\usepackage{natbib} 
\usepackage{amsmath} 
\usepackage{amssymb} 
\usepackage{setspace} 
\usepackage{caption} 
\usepackage[colorlinks=true, citecolor=blue, linkcolor=blue, urlcolor=blue]{hyperref} 
\usepackage{amsmath,amssymb,amsthm,bm,mathtools}
\usepackage{bbm}
\usepackage{booktabs}
\usepackage{array}
\usepackage{multirow}
\usepackage{algorithm}
\usepackage{algpseudocode}
\usepackage{graphicx}
\usepackage{longtable, booktabs, pdflscape}
\usepackage{tikz}
\usepackage{float}
\usetikzlibrary{arrows.meta,positioning,shapes.geometric,calc}
\setcitestyle{round}

\let\oldfootnote\footnote
\renewcommand{\footnote}{\fontsize{9}{11}\selectfont\oldfootnote}

\newcommand{\bbeta}{\bm{\beta}}

\newcommand{\T}{^{\mathsf T}}
\newcommand{\given}{\mid}
\DeclareMathOperator*{\argmin}{arg\,min}
\newcommand{\E}{\mathbb{E}}
\newcommand{\prob}{\mathbb{P}}
\newcommand{\R}{\mathbb{R}}
\newcommand{\N}{\mathcal{N}}
\newcommand{\obs}{\mathrm{obs}}
\newcommand{\mis}{\mathrm{mis}}
\newcommand{\mcM}{\mathcal{M}}

\newcommand{\supp}{\mathrm{supp}}

\newtheorem{assumption}{Assumption}
\newtheorem{theorem}{Theorem}

\theoremstyle{remark}

\title{Substantive-Model-Compatible Multiple Imputation for Cox Regression with a Diverging Number of Covariates}

\author{
    Zhilin Zhang \\
    Department of Biostatistics, University of Michigan, Ann Arbor \\
    \and
    Yi Li  \thanks{Corresponding author: yili@umich.edu} \\
     Department of Biostatistics, University of Michigan, Ann Arbor 
}

\date{} 

\begin{document}

\maketitle

\begin{abstract}
Modern biomedical survival studies with high-dimensional genomic and clinical predictors are  challenged by missing covariates. Existing methods conduct inference through penalization and debiasing when the number of covariates diverges with sample size, but they are typically developed with fully observed covariates. Conversely, substantive-model-compatible multiple imputation methods, particularly substantive-model-compatible fully conditional specification (SMC-FCS), provide principled handling of missing covariates while preserving compatibility with the Cox model, yet current methodology and theory remain largely restricted to fixed-dimensional settings.
To address these limitations, we propose a semiparametric multiple imputation framework for inference in  Cox regression with missing covariates of a diverging dimension. Missing covariates are imputed through a high-dimensional SMC-FCS procedure driven by Cox-model likelihood contributions, with rejection sampling used to enforce substantive-model compatibility and ridge-regularized  posterior draws used to stabilize the imputation models. The algorithm  stabilizes the Cox estimator through an imputation-regularized optimization iteration and then generates multiply imputed datasets from a stabilized  chain. Inference for low-dimensional linear functionals or contrasts, $c^\top \bbeta$,  is obtained by combining debiased estimators and within-imputation variance estimates through Rubin's rules.
We establish consistency and asymptotic normality of the resulting pooled estimator under a diverging-dimensional regime. Simulation studies demonstrate favorable finite-sample performance, and an application to the Boston Lung Cancer Survival Cohort illustrates the practical utility of the proposed method for high-dimensional survival studies with incomplete covariates.

\end{abstract}


\textbf{Keywords:} Cox proportional hazards model,  Statistical inference, Multiple imputation, Debiased lasso

\textbf{Mathematics Subject Classification (2020):} 62N02, 62J07, 62F12, 62P10.

\section{Introduction}

As a motivating example, we consider the Harvard School of Public Health (HSPH)  subgroup of the Boston Lung Cancer Survival Cohort (BLCS), a large prospective study of  lung cancer survivors investigating cancer genetics. The analytical population consists of 984 European-ancestry patients with pathologically confirmed non-small cell lung cancer (NSCLC), germline genotype measurements, and  survival information. The analysis includes 53 target single nucleotide polymorphisms (SNPs) and 10 clinical covariates, with the goal of identifying factors associated with overall survival. Although the overall missing-cell proportion is only 1.17\%, and no variable has more than 5\% missingness, the accumulated missing-data pattern is substantial: complete-case analysis would retain fewer than 441 patients, eliminating more than half of the cohort. More broadly, modern biomedical survival studies increasingly combine censored outcomes with high-dimensional molecular and clinical predictors, requiring methods that simultaneously address regularized estimation, incomplete covariates, and valid statistical inference.
\vspace{1ex}

Substantial progress has been made for Cox regression \citep{cox1972regression} with a diverging number of covariates ($p \to \infty$ with $p<n$). In these settings, the standard partial likelihood estimator becomes unstable, motivating penalized methods such as the LASSO \citep{tibshirani1997lasso} and SCAD \citep{fan2002variable} for estimation and variable selection. These methods introduced shrinkage bias and  debiasing or desparsification methods were  developed to remove biases due to  penalization and enable confidence intervals and hypothesis testing for low-dimensional regression components \citep{zhang2014confidence,vandegeer2014desparsifying}.  For survival outcomes, \citet{xia2022statistical} established an  inferential method for Cox models with diverging covariate dimension. Unlike earlier debiased methods \citep{fang2017decorrelated,kong2021debiased}, which relied on sparsity assumptions on the inverse information matrix, their projection-based approach remains valid under strong covariate dependence and dense information structures.
These methods are typically developed under the idealized assumption that all covariates are fully observed. As seen in our motivating example, missingness is common in clinical and epidemiologic studies. Ad hoc approaches such as complete-case analysis or single imputation can induce bias, reduce efficiency, and fail to account for uncertainty in the missing values \citep{little2019statistical}.
\vspace{1ex}

A parallel literature addresses missing covariates in survival analysis through Multiple Imputation (MI) \citep{rubin1987multiple}. Standard MI approaches, including Fully Conditional Specification (FCS) \citep{vanbuuren2006fully,white2011multiple}, provide flexibility for mixed data types but often fail to preserve compatibility with the substantive Cox model. This incompatibility, or lack of congeniality \citep{meng1994multiple},  biases hazard ratio estimation \citep{white2009imputing}.
To address this issue, \citet{bartlett2015multiple}, \citet{keogh2018case}, and \citet{keogh2021missing} proposed Substantive-Model-Compatible Fully Conditional Specification (SMC-FCS), which explicitly enforces compatibility by constructing imputations consistent with the  Cox model, providing a principled way for preserving valid Cox regression inference.
\vspace{1ex}

The existing SMC-FCS methodology and theory are largely restricted to fixed-dimensional settings. The imputation models, computational procedures, and theoretical guarantees underlying SMC-FCS do not directly extend to regimes where the number of covariates diverges with sample size. In particular, Rubin’s rules rely on classical large-sample approximations that may break down in these settings, and the interaction between imputation uncertainty and  regularization remains poorly understood.
\vspace{1ex}

Under the diverging settings,  some  work has estimated  models with a diverging number of missing covariates. \citet{wang2015highdim} proposed an EM algorithm with sparsity-enforcing truncation steps and established convergence rates together with asymptotic normality for low-dimensional components. \citet{liang2018imputation} introduced the Imputation-Regularized Optimization (IRO) algorithm, which alternates between imputation and regularized optimization, and established consistency of the averaged estimator under general missingness mechanisms.  However, these works may not apply to  Cox regression.

\vspace{1ex}

 To address this gap, we propose a semiparametric multiple imputation method for inference in Cox regression models with a diverging number of covariates subject to missingness. The proposed method integrates  Cox regression with substantive-model-compatible multiple imputation through a debiased LASSO estimator, where missing covariates are imputed using an SMC-FCS procedure adapted to diverging settings. Imputation is driven by Cox-model likelihood contributions, with rejection sampling used to enforce substantive-model compatibility. Ridge-regularized Bayesian posterior draws stabilize the conditional imputation models when the covariate dimension is large relative to the sample size, and a two-phase algorithm first stabilizes the Cox coefficients through an IRO-type iteration before generating multiply imputed datasets from a thinned posterior chain.  For low-dimensional functionals $c^\top\bbeta$, inference is conducted by applying Rubin's rules to combine the debiased point estimates and within-imputation variance estimators across imputations.

\vspace{1ex}

 The proposed method offers several  advantages. First, it provides a unified approach to handling missing covariates and Cox regression within a coherent inferential framework. Second, by combining debiasing with SMC-FCS, it enables valid inference for scientifically meaningful low-dimensional targets even when the covariate dimension diverges with sample size, whereas existing methods either ignore substantive-model compatibility or fail to propagate imputation uncertainty into inference.  Third, we establish consistency and asymptotic normality of the Rubin-pooled estimator by combining large-sample multiple-imputation theory with high-dimensional Cox inference under the triangular-array regime $p_n\to\infty$ as $n \to\infty$. 
  Finally, the method is computationally feasible and broadly applicable, as illustrated through simulation studies and an analysis of the Boston Lung Cancer Survival Cohort with missing covariates.

\vspace{1ex}

The  paper is organized as follows. Section~\ref{sec:pre} introduces the setup and  the related debiased lasso work. Section~\ref{sec:method} presents the proposed substantive-model-compatible debiased lasso framework, including the iterative imputation-and-optimization procedure, stationary SMC-FCS construction, debiased Cox lasso estimation, and Rubin pooling strategy. Section~\ref{sec:theory} establishes the main theoretical results. Sections~\ref{sec:simulation} and~\ref{sec:data} present simulation studies and a lung cancer survival application, respectively. Section~\ref{sec:disc} concludes with discussion and future directions. All proofs are given in the Appendix.

\section{Preamble}
\label{sec:pre}

\subsection{Setup and Notation}
\label{sec:setup}
 
 Consider a clinical study with $n$ independent subjects, where the $i$th subject contributes $(Y_i, \Delta_i, X_i, R_i)$ for $i = 1, \ldots, n$. The observed time is $Y_i = \min(T_i, C_i)$, where $T_i$ denotes the failure time and $C_i$ an independent right-censoring time, with event indicator $\Delta_i = I(T_i \le C_i)$. The covariate vector $X_i \in \R^p$ may be partially observed, with $R_i = (R_{i1}, \ldots, R_{ip}) \in \{0,1\}^p$ denoting the missingness pattern, where $R_{ij} = 1$ if $X_{ij}$ is observed and $0$ otherwise. We assume  that $T_i$ follows the Cox proportional hazards model
\[
\lambda(t \given X_i) = \lambda_0(t)\exp(X_i\T \beta^0), \qquad t \ge 0,
\]
where $\lambda_0(\cdot)$ is an unspecified baseline hazard and $\beta^0 \in \R^p$ is the regression parameter of interest. Also denote by $\Lambda_0(t) = \int_0^t \lambda_0(s)ds$ the cumulative baseline hazard. 
We consider a diverging dimensional regime in which the covariate dimension
$p=p_n$ diverges with the sample size $n$, with
\(
p_n \to \infty,
\,
p_n <n,
\)
while the true regression coefficient $\beta^0$ remains sparse with
$s_0=\|\beta^0\|_0 \ll n$.

\vspace{1ex}
For each subject, we partition the covariates into observed and missing components as $X_i = (X_i^{\obs}, X_i^{\mis})$, where $X_i^{\obs} = \{X_{ij}: R_{ij}=1\}$ and $X_i^{\mis} = \{X_{ij}: R_{ij}=0\}$. Let $\mcM = \{j : \sum_{i=1}^n (1-R_{ij}) > 0\}$ denote the set of partially observed variables. Missingness is assumed to be missing at random (MAR), namely $P(R_i \mid X_i, Y_i, \Delta_i) = P(R_i \mid X_i^{\obs}, Y_i, \Delta_i)$, so that the missingness mechanism is ignorable for likelihood-based inference.

\vspace{1ex}
We refer to a \emph{completed covariate matrix} as $X = (X_1, \ldots, X_n)\T \in \R^{n \times p}$. Let $\ell_n(\beta;X)$ denote the negative Cox partial log-likelihood normalized by $n$, i.e.,
\[
\ell_n(\beta;X)
=
-
\frac1n
\sum_{i=1}^n
\Delta_i
\left[
X_i^\top\beta
-
\log
\left\{
\sum_{j:Y_j\ge Y_i}
\exp(X_j^\top\beta)
\right\}
\right],
\]
and let $\dot\ell_n(\beta;X)$ denote its gradient with respect to $\beta$. Define the population information matrix as $$\Sigma_{\beta^0} = \E[\{X_i-\eta_0(Y_i;\beta^0)\}^{\otimes 2}\Delta_i],$$ where
\[
\eta_0(t;\beta^0)
=
\frac{
\E\!\left[
X_i \exp(X_i\T\beta^0) I(Y_i \ge t)
\right]
}{
\E\!\left[
\exp(X_i\T\beta^0) I(Y_i \ge t)
\right]
}
\]
is the population risk set weighted covariate mean at time $t$. Its sample analogue is
\begin{equation} \label{sigma}
\widehat\Sigma_{\beta}
=
\frac{1}{n}
\sum_{i=1}^n
\Delta_i
\bigl\{
X_i-\widehat\eta_n(Y_i;\beta,X)
\bigr\}^{\otimes 2},
\end{equation}
with
\(
\widehat\eta_n(t;\beta,X)
=
\frac{
\sum_{i=1}^n
X_i \exp(X_i\T\beta) I(Y_i \ge t)
}{
\sum_{i=1}^n
\exp(X_i\T\beta) I(Y_i \ge t)
}.
\)

\vspace{1ex}

Let $\Theta_{\beta^0}=\Sigma_{\beta^0}^{-1}$ denote the inverse population information matrix. The main inferential targets are low-dimensional linear functionals such as $c^\top\beta^0$, where $c\in\R^p$ is a fixed loading vector.

\subsection{Review of debiased lasso inference with no missingness \citep{xia2022statistical}}
 
 When  $X$ is fully observed, the Cox lasso estimator is
\[
\widehat\beta_{L_1}
=
\argmin_{\beta \in \R^p}
\bigl\{
\ell_n(\beta; X) + \lambda_n\|\beta\|_1
\bigr\}.
\]
 As penalization introduces shrinkage bias, confidence intervals based directly on $\widehat\beta_{L_1}$ are invalid.
To remove this bias, \citet{xia2022statistical} proposed the debiased estimator
\[
\widehat b
=
\widehat\beta_{L_1}
-
\widehat\Theta\, \dot\ell_n(\widehat\beta_{L_1}; X),
\]
where $\widehat\Theta \in \R^{p \times p}$ estimates the inverse information matrix $\Theta_{\beta^0} = \Sigma_{\beta^0}^{-1}$. A key feature of their construction is the estimation of $\widehat\Theta$;  \citet{xia2022statistical} replaced $\ell_0$-sparsity assumptions with an $\ell_1$-constraint and estimate each row of $\widehat\Theta$ by solving
\begin{equation}
\label{eq:xia-qp}
\widehat\Theta_{j\cdot}
=
\argmin_{m \in \R^p}
m\T \widehat\Sigma_{\widehat\beta_{L_1}}\, m
\quad
\text{subject to}
\quad
\bigl\|
\widehat\Sigma_{\widehat\beta_{L_1}}\, m - e_j
\bigr\|_\infty
\le
\gamma_n,
\qquad
j = 1, \ldots, p,
\end{equation}
where $e_j$ is the $j$th canonical basis vector and
\( \widehat\Sigma_{\widehat\beta_{L_1}}
\) is as defined in \eqref{sigma}.
The estimator $\widehat\Theta$ is then assembled row by row from the $p$ optimization problems in \eqref{eq:xia-qp}.

\vspace{1ex}

For inference on a fixed contrast $c\T\beta^0$ with $\|c\|_2 = 1$ and $\|c\|_1 \le a_\ast < \infty$, Theorem~1 of  \citep{xia2022statistical} establishes
\[ 
\sqrt{n}\, c\T\bigl(\widehat b - \beta^0\bigr) \big/ \sqrt{c\T \widehat\Theta\, c} \overset{d}{\longrightarrow} \N(0, 1),
\]
under bounded covariates, bounded eigenvalues of $\Sigma_{\beta^0}$, the tuning rates $\lambda_n \asymp \sqrt{\log p / n}$ and $\gamma_n \asymp \|\Theta_{\beta^0}\|_{1,1} s_0 \lambda_n$, and the rate condition $\|\Theta_{\beta^0}\|_{1,1}^2\, p\, s_0\, \log(p)/\sqrt{n} \to 0$. The $(1,1)$-induced operator norm of a $p \times p$ matrix $A=(a_{ij})$ is defined as $\|A\|_{(1,1)}=\max_{1\le j\le p}\sum_{i=1}^p |a_{ij}|$.
The factor $\|\Theta_{\beta^0}\|_{1,1}$ in the rate condition replaces the $\ell_0$-sparsity assumption that earlier debiased-lasso constructions imposed: the inverse information matrix is allowed to be dense provided that its $(1,1)$-norm grows slowly enough relative to $n$. Variance estimation uses $c\T \widehat\Theta\, c / n$, and a standard normal reference distribution yields confidence intervals.

\section{Proposed Substantive-Model-Compatible Debiased Lasso}
\label{sec:method}

 A major challenge in survival analysis with missing covariates is preserving compatibility between the imputation procedure and the substantive Cox model while accommodating high-dimensional regularization and valid post-selection inference. To address this, we propose substantive-model-compatible debiased lasso (SMC-DBL) that integrates substantive-model-compatible multiple imputation, penalized Cox regression, and debiased inference. The procedure iteratively updates the imputations and substantive Cox model until approximate compatibility is achieved. Specifically, missing covariates are updated through an SMC-FCS mechanism anchored to the Cox likelihood, while the regression coefficient and baseline cumulative hazard are repeatedly re-estimated from the evolving completed data. After convergence and burn-in, independently generated completed datasets are analyzed using the debiased Cox lasso  \citep{xia2022statistical}, and the resulting estimators are combined using Rubin's rules. Additional implementation details are summarized in Algorithm~\ref{alg:smcdbl}. 

 \subsection{SMC-FCS sweeps}
\label{sec:sweep}

Fix a missing variable $j \in \mcM$ and condition on the current completed design $X$, the current Cox coefficient $\beta$, and the current baseline cumulative hazard estimator $\widehat\Lambda_0$. Let $\obs_j = \{i : R_{ij} = 1\}$ index the rows on which $X_j$ is observed, set $n_{\obs,j} = |\obs_j|$, write $y_{\obs,j} = (X_{ij})_{i \in \obs_j}$, and let $Z_{\obs,j}$ be the row-centred submatrix of $X_{-j}$ restricted to $\obs_j$.  We handle continuous and discrete covariates separately.

 \vspace{1ex}
 
For continuous $X_j$, the working model is Gaussian. With ridge multiplier $\lambda_{\mathrm{ridge}}$ chosen as in Section~\ref{sec:tuning}, the ridge point estimate and residual variance are
\[
\widehat\gamma_j
=
\left(
\frac{Z_{\obs,j}\T Z_{\obs,j}}{n_{\obs,j}}
+
\lambda_{\mathrm{ridge}} I
\right)^{-1}
\frac{Z_{\obs,j}\T y_{\obs,j}}{n_{\obs,j}},
\qquad
\widehat\sigma^2_j
=
\frac{1}{\mathrm{df}}
\bigl\|
y_{\obs,j}
-
\widehat\alpha_j
-
Z_{\obs,j}\widehat\gamma_j
\bigr\|_2^2,
\]
with $\mathrm{df} = \max(n_{\obs,j}-p,1)$. A Bayes-type draw of the working-model parameters is then obtained via
\[
\sigma^{2,\star}_j
\sim
\frac{\mathrm{df}\,\widehat\sigma^2_j}
{\chi^2_{\mathrm{df}}},
\qquad
\gamma^\star_j
\sim
\N\!\left(
\widehat\gamma_j,\,
\frac{\sigma^{2,\star}_j}{n_{\obs,j}}
\left(
\frac{Z_{\obs,j}\T Z_{\obs,j}}{n_{\obs,j}}
+
\lambda_{\mathrm{ridge}} I
\right)^{-1}
\right),
\]
and the corresponding intercept $\alpha^\star_j$ is reconstructed from the centring constants.

\vspace{1ex}

For each subject $i$ with $R_{ij}=0$, a proposal
$x^\star
\sim
\N(
\alpha^\star_j + X_{i,-j}\T\gamma^\star_j,
\sigma^{2,\star}_j)$
is drawn and truncated to $[-K,K]$. Let $x_{\mathrm{curr}}$ denote the current imputed value. 
The proposal is accepted with probability
\begin{equation}
\label{eq:mhratio}
\alpha(x_{\mathrm{curr}},x^\star)
=
\min\!\left\{
1,\,
\frac{
f_{\mathrm{Cox}}
(Y_i,\Delta_i\mid x^\star,X_{i,-j};
\beta,\widehat\Lambda_0)
}{
f_{\mathrm{Cox}}
(Y_i,\Delta_i\mid x_{\mathrm{curr}},X_{i,-j};
\beta,\widehat\Lambda_0)
}
\right\},
\end{equation}
where
\[
f_{\mathrm{Cox}}(Y_i,\Delta_i\mid X_i;\beta,\widehat\Lambda_0)
=
\{
\Delta\widehat\Lambda_0(Y_i)\exp(X_i^\top\beta)
\}^{\Delta_i}
\exp\!\left[
-
\widehat\Lambda_0(Y_i)\exp(X_i^\top\beta)
\right],
\]
with
\(
\Delta\widehat\Lambda_0(Y_i)
=
\widehat\Lambda_0(Y_i)-\widehat\Lambda_0(Y_i-)
\)
denoting the jump of the Breslow baseline cumulative hazard estimator at the observed failure time. For censored observations, it follows that 
$\Delta\widehat\Lambda_0(Y_i)=0$ with
$\Delta_i=0$, and we follow the convention of $0^0=1$.  

\vspace{1ex}

The accept-reject rule in equation~\eqref{eq:mhratio} is the Metropolis-Hastings analogue of the rejection-sampling acceptance ratio used by~\cite{bartlett2015multiple}, with the Gaussian working model serving as the proposal distribution.

\vspace{1ex}

For binary, ordinal categorical, or unordered categorical $X_j$, the working model is logistic, proportional-odds, or multinomial logistic, respectively, each fitted by ridge-penalized maximum likelihood using the observed rows. Bayesian draws of the working-model parameters are obtained from a Laplace approximation centred at the ridge estimator, with covariance given by the inverse regularized observed information matrix. Candidate draws $X_{ij}^\star$ for missing entries are then sampled from the corresponding conditional distribution, and the Cox-likelihood ratio in equation~\eqref{eq:mhratio} again determines acceptance.

\subsection{Substantive-model Updates} \label{sub-update}

  In our proposed procedure, an \emph{inner sweep} is one full pass of the SMC-FCS chain over all incomplete covariates conditional on the current $(\beta,\widehat\Lambda_0)$, while an \emph{outer sweep} is one complete IRO iteration consisting of $S_{\mathrm{in}}$ inner sweeps followed by updating the regression coefficient and baseline cumulative hazard estimates using the updated completed dataset. Algorithmically, the procedure uses two levels of indexing: $\ell=0,1,2,\ldots$ indexes the outer IRO iterations within a chain, while $m=1,\ldots,M$ indexes the independently initialized inferential chains (which will be later combined through Rubin pooling).
  
\vspace{1ex}

 Specifically, for chain $m$ at outer iteration $\ell$, given the current model parameters
\(
(\beta_m^{(\ell)}, \widehat\Lambda_{0m}^{(\ell)})
\),
run $S_{\mathrm{in}}$ inner sweeps as described in Section~\ref{sec:sweep}, and let $\widetilde X_m^{(\ell+1)}$ denote the completed dataset obtained after the final inner sweep. We update the estimate of the regression coefficients  by
\[
\beta_m^{(\ell+1)}
=
\argmin_\beta
\left\{
\ell_n(\beta;\widetilde X_m^{(\ell+1)})
+
\lambda_n\|\beta\|_1
\right\},
\]
and the corresponding Breslow baseline cumulative hazard estimator is updated by
\[
\widehat\Lambda_{0m}^{(\ell+1)}(t)
=
\sum_{r:Y_r\le t,\Delta_r=1}
\left[
\sum_{i:Y_i\ge Y_r}
\exp\left\{
(\widetilde X_{im}^{(\ell+1)})^\top
\beta_m^{(\ell+1)}
\right\}
\right]^{-1}.
\]
The outer IRO iterations continue until
\[
\|\beta_m^{(\ell+1)}-\beta_m^{(\ell)}\|_1
+
\|\widehat\Lambda_{0m}^{(\ell+1)}
-
\widehat\Lambda_{0m}^{(\ell)}\|_\infty
<
\varepsilon,
\]
where $\varepsilon>0$ is a pre-specified tolerance,  or the maximum number of iteration steps is reached. 

At the final iteration step,  set
    $(\beta^\star_m,
    \widehat\Lambda^\star_{0m},
    \widetilde X^\star_m)
    =
    (\beta^{(\ell+1)}_m,
    \widehat\Lambda^{(\ell+1)}_{0m},
    \widetilde X^{(\ell+1)}_m)$.

 \subsection{Per-chain debiased lasso and pooling}
 Starting from $\widetilde X^\star_m$, run $T_0$ additional SMC-FCS sweeps conditional on fixed $(\beta^\star_m,\widehat\Lambda^\star_{0m})$ and ridge level $\widehat\lambda_{\mathrm{ridge}}$.  We retain the terminal completed dataset as $\widetilde X^{(m)}$, and  fit the Cox lasso on $\widetilde X^{(m)}$:
    $$\widehat\beta^{(m)}_{L_1}
    =
    \argmin_\beta
    \{
    \ell_n(\beta;\widetilde X^{(m)})
    +
    \widehat\lambda_n\|\beta\|_1
    \}.$$
 Let
\(
\widehat\Sigma^{(m)}
=
\ddot\ell_n^{(m)}
\bigl(
\widehat\beta_{L_1}^{(m)}
\bigr)
\)
denote the observed Hessian of the Cox partial log-likelihood evaluated at $\widehat\beta_{L_1}^{(m)}$. An approximate inverse information matrix $\widehat\Theta^{(m)}$ is constructed row by row by solving the nodewise quadratic programme
\begin{equation}
\label{eq:qp}
\widehat\Theta_k^{(m)}
=
\argmin_{u\in\mathbb R^p}
u^\top\widehat\Sigma^{(m)}u
\quad
\text{subject to}
\quad
\|
\widehat\Sigma^{(m)}u-e_k
\|_\infty
\le
\gamma_n,
\qquad
k=1,\ldots,p,
\end{equation}
where $e_k$ is the $k$th canonical basis vector. In practice, equation~\eqref{eq:qp} is solved on the positive-eigenvalue subspace of $\widehat\Sigma^{(m)}$, with a ridge fallback used when infeasibility is encountered.

The per-chain debiased estimator is
\[
\widehat\beta_{\mathrm{db}}^{(m)}
=
\widehat\beta_{L_1}^{(m)}
-
\widehat\Theta^{(m)}
\dot\ell_n^{(m)}
\bigl(
\widehat\beta_{L_1}^{(m)}
\bigr).
\]

Pooling across the $M$ retained chains follows Rubin's rules. Define
\[
\bar\beta_M
=
\frac1M
\sum_{m=1}^M
\widehat\beta_{\mathrm{db}}^{(m)},
\qquad
\widehat V_W
=
\frac1M
\sum_{m=1}^M
\frac{\widehat\Theta^{(m)}}{n},
\]
\[
\widehat V_B
=
\frac1{M-1}
\sum_{m=1}^M
\left(
\widehat\beta_{\mathrm{db}}^{(m)}
-
\bar\beta_M
\right)
\left(
\widehat\beta_{\mathrm{db}}^{(m)}
-
\bar\beta_M
\right)^\top,
\qquad
\widehat V_{\mathrm{total}}
=
\widehat V_W
+
\left(
1+\frac1M
\right)
\widehat V_B.
\]

The resulting $(1-\alpha)$ confidence interval for the $k$th component is
\[
\bar\beta_{M,k}
\pm
z_{\alpha/2}
\sqrt{
\widehat V_{\mathrm{total},kk}
},
\qquad
k=1,\ldots,p,
\]
where $z_{\alpha/2}$ is the upper $\alpha/2$ standard normal quantile.

\vspace{1ex}
The standard normal reference, rather than a Barnard--Rubin $t$ approximation, is justified by the asymptotic regime considered in Theorem~\ref{thm:clt}.  Inference for a fixed-dimensional linear contrast $c^\top\beta^0$ proceeds by replacing $\bar\beta_M$ with $c^\top\bar\beta_M$ and $\widehat V_{\mathrm{total}}$ with $c^\top\widehat V_{\mathrm{total}}c$ and applying the corresponding normal approximation; see  Theorem~\ref{thm:clt}.
\subsection{Tuning of the hyper-parameters}
\label{sec:tuning}
 
Two hyper-parameters enter the procedure. The Cox lasso penalty $\lambda_n$ is selected by five-fold cross-validation on the partial likelihood at each Cox fit, which delivers the rate $\lambda_n \asymp \sqrt{\log p / n}$ required by \citet{huang2013oracle}. The ridge penalty is fixed at
\(
\lambda_{\mathrm{ridge}} \;=\; \lambda_n^2 (p-1),
\)
a scaling that aligns with the ridge prediction consistency requirement of Assumption~\ref{ass:ridge} and that admits the closed-form bias-variance decomposition used in Section~\ref{sec:theory}. The debiased-lasso constraint radius is set to $\gamma_n = a\,\sqrt{\log p / n}$ with a fixed multiplier $a = 0.5$, an interior value located in a three-point sensitivity analysis over $(a = 0.25, 0.5, 1.0)$. Empirical exploration of cross-validation surfaces for $a$ on thresholded debiased estimates produced essentially flat objectives across the interior of any reasonable grid, which is consistent with the order-one scaling implied by \citet{xia2022statistical}.

 \begin{algorithm}[H]
\caption{SMC-DBL for Cox Regression with Iterated Baseline Hazard}
\label{alg:smcdbl}
\begin{algorithmic}[1]

\Require Observed data $\{(Y_i,\Delta_i,X_i^{\obs},R_i)\}_{i=1}^n$; number of imputations $M$; inner sweeps $S_{\mathrm{in}}$; burn-in length $T_0$; outer tolerance $\varepsilon$; CV folds $K_{\mathrm{cv}}$; nominal ridge level $\lambda_{\mathrm{ridge}}^{(0)}$.
\Ensure Pooled debiased estimator $\bar\beta_M$, variance estimator $\widehat V_{\mathrm{total}}$, and confidence intervals.

\Statex
\State \textbf{Phase 0: Tuning.}

\State Initialise missing entries by random draws from observed values within each incomplete covariate.

\State Run a preliminary IRO loop with inner SMC-FCS sweeps, iteratively updating both the Cox lasso coefficient $\beta$ and the Breslow baseline cumulative hazard $\widehat\Lambda_0$, until the joint outer iterates converge. Denote the resulting completed data by $\widetilde X^{(0)}$.

\State Choose $\widehat\lambda_n$ by $K_{\mathrm{cv}}$-fold cross-validation of the Cox partial likelihood on $\widetilde X^{(0)}$.

\State Choose $\widehat b$ by cross-validation for the working conditional models of the incomplete covariates using only rows where the target covariate is observed.

\State Set $\widehat\lambda_{\mathrm{ridge}}=\widehat b\,p\log(p)/n$ and $\gamma_n=a\sqrt{\log p/n}$.

\State Discard the preliminary completed dataset $\widetilde X^{(0)}$.

\Statex
\State \textbf{Phases 1--3: Independent inferential chains.}

\For{$m=1,\ldots,M$}

    \Statex \quad \textbf{Phase 1.$m$: IRO convergence.}

    \State Initialise missing entries independently  and form the completed dataset $\widetilde X^{(0)}_m$.

    \State Fit the Cox lasso:
    $\beta^{(0)}_m
    =
    \argmin_\beta
    \{
    \ell_n(\beta;\widetilde X^{(0)}_m)
    +
    \widehat\lambda_n\|\beta\|_1
    \}$.

    \State Compute  
    $$\widehat\Lambda^{(0)}_{0m}(t)
    =
    \sum_{r:Y_r\le t,\Delta_r=1}
    \Big[
    \sum_{i:Y_i\ge Y_r}
    \exp\{(\beta^{(0)}_m)^\top \widetilde X^{(0)}_{im}\}
    \Big]^{-1}.$$

    \For{$\ell=0,1,2,\ldots$ }

        \State \emph{I-step.}
        Conditional on
        $(\beta^{(\ell)}_m,\widehat\Lambda^{(\ell)}_{0m})$,
        run $S_{\mathrm{in}}$ SMC-FCS sweeps: in each sweep,  visit every incomplete $X_j$ once and update its missing entries from a conditional distribution proportional to
        $f_j(x_j\mid X_{-j};\widehat\alpha_j,\widehat\lambda_{\mathrm{ridge}})
        \,
        f_{\mathrm{Cox}}
        \{Y,\Delta\mid X_j=x_j,X_{-j};
        \beta^{(\ell)}_m,
        \widehat\Lambda^{(\ell)}_{0m}\}$.   Retain the terminal completed dataset and denote it  by $\widetilde X^{(\ell+1)}_m$.

        \State \emph{RO-step.}
        Refit the Cox lasso:
        $\beta^{(\ell+1)}_m
        =
        \argmin_\beta
        \{
        \ell_n(\beta;\widetilde X^{(\ell+1)}_m)
        +
        \widehat\lambda_n\|\beta\|_1
        \}$.

        \State Update 
        $$\widehat\Lambda^{(\ell+1)}_{0m}(t)
        =
        \sum_{r:Y_r\le t,\Delta_r=1}
        \Big[
        \sum_{i:Y_i\ge Y_r}
        \exp\{(\beta^{(\ell+1)}_m)^\top
        \widetilde X^{(\ell+1)}_{im}\}
        \Big]^{-1}.$$

        \State Stop if
        $\|\beta^{(\ell+1)}_m-\beta^{(\ell)}_m\|_1
        +
        \sup_{t\le\tau}
        |
        \widehat\Lambda^{(\ell+1)}_{0m}(t)
        -
        \widehat\Lambda^{(\ell)}_{0m}(t)
        |
        <
        \varepsilon$.

    \EndFor

    \State Set
    $(\beta^\star_m,
    \widehat\Lambda^\star_{0m},
    \widetilde X^\star_m)
    =
    (\beta^{(\ell+1)}_m,
    \widehat\Lambda^{(\ell+1)}_{0m},
    \widetilde X^{(\ell+1)}_m)$.

\algstore{smcdblbreak}

\end{algorithmic}
\end{algorithm}

\begin{algorithm}[H]
\begin{algorithmic}[1]

\algrestore{smcdblbreak}

    \Statex \quad \textbf{Phase 2.$m$: Burn-in.}

    \State Starting from $\widetilde X^\star_m$, run $T_0$ additional SMC-FCS sweeps conditional on fixed $(\beta^\star_m,\widehat\Lambda^\star_{0m})$ and ridge level $\widehat\lambda_{\mathrm{ridge}}$.

    \State Retain the terminal completed dataset as $\widetilde X^{(m)}$.

    \Statex \quad \textbf{Phase 3.$m$: Debiased lasso inference.}

    \State Fit the Cox lasso on $\widetilde X^{(m)}$:
    $\widehat\beta^{(m)}_{L_1}
    =
    \argmin_\beta
    \{
    \ell_n(\beta;\widetilde X^{(m)})
    +
    \widehat\lambda_n\|\beta\|_1
    \}$.

    \State Solve the nodewise quadratic programs with constraint radius $\gamma_n$ to obtain $\widehat\Theta^{(m)}$.

    \State Compute the debiased estimator
    $\widehat\beta^{(m)}_{\mathrm{db}}
    =
    \widehat\beta^{(m)}_{L_1}
    -
    \widehat\Theta^{(m)}
    \dot\ell_n
    (\widehat\beta^{(m)}_{L_1};
    \widetilde X^{(m)})$.

    \State Estimate the corresponding within-imputation covariance matrix $\widehat V^{(m)}$.

\EndFor

\Statex
\State \textbf{Phase 4: Rubin pooling.}

\State Compute the pooled estimator
$\bar\beta_M
=
M^{-1}
\sum_{m=1}^M
\widehat\beta^{(m)}_{\mathrm{db}}$.

\State Compute the within-imputation covariance
$\widehat V_W
=
M^{-1}
\sum_{m=1}^M
\widehat V^{(m)}$.

\State Compute the between-imputation covariance
$\widehat V_B
=
(M-1)^{-1}
\sum_{m=1}^M
(\widehat\beta^{(m)}_{\mathrm{db}}-\bar\beta_M)
(\widehat\beta^{(m)}_{\mathrm{db}}-\bar\beta_M)^\top$.

\State Compute the Rubin variance estimator
$\widehat V_{\mathrm{total}}
=
\widehat V_W
+
(1+M^{-1})
\widehat V_B$.

\State Construct confidence intervals using the diagonal entries of $\widehat V_{\mathrm{total}}$.

\end{algorithmic}
\end{algorithm}

\section{Theoretical Results}
\label{sec:theory}
 
 We begin with the assumptions required for the asymptotic analysis. Throughout, the covariate dimension is allowed to diverge with the sample size, and we write $p=p_n$.

\begin{assumption}
\label{ass:dimension}
The covariate dimension satisfies 
\(
p_n=O(n^\kappa)
\)
for some fixed $\kappa\in(0,1)$. All stochastic orders are understood under this triangular-array regime.
\end{assumption}

\begin{assumption}
\label{ass:bounded}
There exists $K<\infty$, independent of $(n,p_n)$, such that $\|X_i\|_\infty\le K$ almost surely for all $i=1,\ldots,n$.
\end{assumption}

\begin{assumption}
\label{ass:linearpred}
There exists $K_1<\infty$, independent of $(n,p_n)$, such that $|X_i^\top\beta^0|\le K_1$ uniformly for all $i=1,\ldots,n$ almost surely.
\end{assumption}

\begin{assumption}
\label{ass:followup}
The maximum follow-up time satisfies $0<\tau<\infty$ and $\pi_0=\prob(Y\ge\tau)>0$.
\end{assumption}

\begin{assumption}
\label{ass:baseline}
The true baseline cumulative hazard $\Lambda_0^0$ is continuous and nondecreasing on $[0,\tau]$, and satisfies $\Lambda_0^0(\tau)\le C_\Lambda<\infty$.
\end{assumption}

\begin{assumption}
\label{ass:eigen}
The population information matrix
\(
\Sigma_{\beta^0}
=
\E\bigl[
\{X_i-\eta_0(Y_i;\beta^0)\}^{\otimes2}\Delta_i
\bigr]
\)
satisfies
\(
0<\zeta_{\min}
\le
\zeta_{\min}(\Sigma_{\beta^0})
\le
\zeta_{\max}(\Sigma_{\beta^0})
\le
\zeta_{\max}<\infty.
\)
\end{assumption}

\begin{assumption}
\label{ass:working}
For each partially observed covariate $X_j$, let $q_{ij}(\cdot)$ denote the working conditional imputation distribution for subject $i$, chosen according to the data type of $X_j$:
\begin{enumerate}
\item If $X_j$ is continuous, $q_{ij}(\cdot)$ is a Gaussian density with ridge-estimated mean $\hat\mu_{ij}$ and variance $\hat\sigma_j^2$, truncated to $[-K,K]$.

\item If $X_j$ is binary with support $\{0,1\}$, $q_{ij}(\cdot)$ is a Bernoulli mass function with success probability $\hat\pi_{ij}=\Pr(X_{ij}=1\mid X_{i,-j})$, estimated by ridge-penalized logistic regression.

\item If $X_j$ is an
ordered categorical variable, e.g., an additively coded SNP with support $\{0,1,2\}$, $q_{ij}(\cdot)$ is an ordinal logistic mass function with probabilities $\hat\pi_{ij}^{(0)},\hat\pi_{ij}^{(1)},\hat\pi_{ij}^{(2)}$, estimated by ridge-penalized proportional-odds regression.

\item If $X_j$ is categorical with $k_j\ge3$ unordered levels, $q_{ij}(\cdot)$ is a categorical mass function with probabilities $\hat\pi_{ij}^{(\ell)}$, estimated by ridge-penalized multinomial logistic regression.
\end{enumerate}
In all cases, there exist constants $0<c_\theta\le C_\theta<\infty$, independent of $(n,p_n,i,j)$, such that $c_\theta\le q_{ij}(x)\le C_\theta$ for all $x$ in the support of $X_j$. For discrete working models, these bounds are enforced by numerical clipping, $\hat\pi_{ij}^{(\ell)}\in[\epsilon,1-\epsilon]$, for some fixed $\epsilon>0$.
\end{assumption}

\begin{assumption}
\label{ass:missing}
Let $d_n=|\{(i,j):R_{ij}=0\}|$ denote the total number of missing entries and let $n_{\mis}=|\{i:\sum_{j=1}^{p_n}(1-R_{ij})>0\}|$ denote the number of subjects with at least one missing covariate. The following hold:
\begin{enumerate}
\item $d_n\le \bar r\,n p_n$ for some $\bar r\in(0,1)$ bounded away from $1$.
\item $n_{\mis}/n\le \bar r_{\mathrm{sub}}$ for some $\bar r_{\mathrm{sub}}\in(0,1)$.
\item The missingness mechanism is MAR:
\(
\Pr(R_i\mid X_i,Y_i,\Delta_i)
=
\Pr(R_i\mid X_i^{\mathrm{obs}},Y_i,\Delta_i).
\)
\end{enumerate}
\end{assumption}

\begin{assumption}
\label{ass:ridge}
For each partially observed covariate $X_j$, let $\alpha_j^0$ denote the true parameter vector of the corresponding working imputation model, where $\alpha_j^0=\gamma_j^0$ for continuous variables, $\alpha_j^0$ is the logistic or multinomial regression coefficient vector for binary and categorical variables, and $\alpha_j^0$ additionally includes threshold parameters for ordinal logistic models. Let
\(
s_\ast=\max_{j\in\mathcal M}\|\alpha_j^0\|_0 .
\)
Under $p_n=O(n^\kappa)$, assume
\(
s_\ast\log n=o(\sqrt n).
\)
The ridge-regularized working-model estimators satisfy, uniformly over $j\in\mathcal M$,
\(
\|\widehat\alpha_j-\alpha_j^0\|_2
=
O_p\!\left(\sqrt{\frac{s_\ast\log n}{n}}\right),
\)
and the corresponding prediction error satisfies
\(
\max_{j\in\mathcal M}
\frac1n
\sum_{i=1}^n
\{X_{i,-j}^\top(\widehat\alpha_j-\alpha_j^0)\}^2
=
o_p(n^{-1/2}).
\)
\end{assumption}

\begin{assumption}
\label{ass:sparsity}
Let $s_0=\|\beta^0\|_0$, $s_\ast=\max_{j\in\mathcal M}\|\alpha_j^0\|_0$, and $\Theta_{\beta^0}=\Sigma_{\beta^0}^{-1}$. Under $p_n=O(n^\kappa)$ for some $\kappa\in(0,1)$, assume
\(
s_0\log n=o(\sqrt n),
\)
and
\(
\|\Theta_{\beta^0}\|_{1,1}^2
(\log n)\cdot
\max(n^\kappa s_0,s_\ast^2)
=
o(n),
\)
where $\|\cdot\|_{1,1}$ is the matrix operator norm induced by the $\ell_1$ vector norm.
\end{assumption}

\begin{assumption}
\label{ass:Sin}
Let $\varepsilon_a=c_fc_\theta/(C_fC_\theta)>0$ be the one-coordinate minorisation constant from Theorem~\ref{thm:ergodicity}. Since a full SMC-FCS sweep is a composition over $d_n$ missing coordinates, the sweep-level minorisation constant is $\varepsilon_a^{d_n}$. The number of inner sweeps satisfies
\(
S_{\mathrm{in}}
\ge
c_S
\frac{\log(n p_n)}{\varepsilon_a^{d_n}},
\qquad c_S>1.
\)
 
\end{assumption}

\begin{assumption}
\label{ass:Lout}
The number of outer IRO iterations $\ell$ satisfies
\(
\ell
\ge
c_\ell
\frac{\log n}{\log(1/\lambda_\ast)},
\, c_\ell>1,
\)
where $\lambda_\ast\in(0,1)$ is the local stability constant from Assumption~\ref{ass:contract}.
\end{assumption}

\begin{assumption}
\label{ass:contract}
Let $\theta=(\beta,\Lambda_0)$, and let $\mathcal M(\theta)$ denote the population IRO map induced by the stationary SMC-FCS law followed by the Cox lasso and Breslow updates. There exist a fixed point $\theta^\ast=(\beta^\ast,\Lambda_0^\ast)$, a neighbourhood $\mathcal N(\theta^\ast)$, a constant $\lambda_\ast\in(0,1)$, and a deterministic sequence $a_n=o(1)$ such that
\[
\|\mathcal M(\theta)-\theta^\ast\|_{\mathcal H}
\le
\lambda_\ast
\|\theta-\theta^\ast\|_{\mathcal H}
+
a_n,
\qquad
\theta\in\mathcal N(\theta^\ast),
\]
where
\(
\|\theta-\theta^\ast\|_{\mathcal H}
=
\|\beta-\beta^\ast\|_1
+
\|\Lambda_0-\Lambda_0^\ast\|_\infty .
\)
\end{assumption}

\begin{assumption} 
\label{ass:burnin}
After the outer IRO iterates have reached the local neighbourhood of $\theta^\ast$, each retained chain is run for $T_0$ additional SMC-FCS sweeps conditional on the limiting substantive-model parameters. The burn-in length satisfies
\(
T_0
\ge
c_T
\frac{\log(n p_n)}{\varepsilon_a^{d_n}},
\qquad c_T>1.
\)
Thus the retained completed datasets are asymptotically sampled from the stationary imputation distribution associated with $(\beta^\ast,\Lambda_0^\ast)$.
\end{assumption}

 Assumptions~\ref{ass:dimension}--\ref{ass:eigen} are standard regularity conditions for high-dimensional Cox regression under a diverging-dimensional regime with $p_n=O(n^\kappa)$ \citep{xia2022statistical}. Assumption~\ref{ass:working} imposes nondegeneracy of the working conditional imputation models and yields the uniform minorisation condition required for geometric ergodicity of the inner Metropolis--Hastings kernel in Theorem~\ref{thm:ergodicity}, following the substantive-model-compatible FCS framework of \citet{bartlett2015multiple}. Assumption~\ref{ass:missing} formalizes the MAR mechanism and excludes degenerate missingness regimes as the number of missing covariates diverges with $n$. Assumptions~\ref{ass:ridge} and~\ref{ass:sparsity} specify the sparsity and regularization regime required for high-dimensional debiasing and asymptotic normality under diverging dimension \citep{xia2022statistical}. In particular, these assumptions control the stochastic error of the ridge-based working models, the Cox lasso estimation error, and the nodewise inverse-information approximation as $p_n\to\infty$.
Assumptions~\ref{ass:Sin} and~\ref{ass:Lout} are algorithmic conditions ensuring sufficient inner-chain mixing and convergence of the outer IRO iterates under increasing missing-data dimension \citep{wang1998mi,hughes2014joint}. Assumption~\ref{ass:contract} is a local stochastic stability condition on the coupled imputation-and-optimization map that controls the asymptotic behavior of the outer IRO iterates. The assumption is supported in practice by the convexity of the Cox lasso update, ridge-stabilized working models, bounded linear predictors, and empirical stabilization of successive iterates, although formal verification of the full coupled map is generally difficult. Finally, Assumption~\ref{ass:burnin} ensures that the retained completed datasets used for the Rubin pooling are approximately sampled from the stationary imputation distribution associated with the limiting substantive Cox model, with the burn-in length increasing appropriately with the diverging missing-data dimension.

 \begin{theorem}[Geometric ergodicity of the inner SMC-FCS chain]
\label{thm:ergodicity}
 Fix $\beta$ satisfying $\|\beta-\beta^0\|_1\le r$ for some fixed $r>0$, and let $\widehat\Lambda_0$ denote the Breslow baseline cumulative hazard estimator computed from the current completed data and the current value of $\beta$. Let $d_n=|\{(i,j):R_{ij}=0\}|$ denote the total number of missing covariate entries. Under Assumptions~\ref{ass:dimension}--\ref{ass:working} and~\ref{ass:missing}, with probability tending to one as $n\to\infty$, the inner SMC-FCS sweep kernel $P_{\beta,\widehat\Lambda_0}$ admits a unique stationary distribution $\nu_{\beta,\widehat\Lambda_0}$. Moreover, there exist constants $C_0<\infty$ and $\varepsilon_a\in(0,1)$, independent of $(n,p_n)$, such that $\rho_{d_n}=1-\varepsilon_a^{d_n}\in(0,1)$ and
\[
\left\|
P_{\beta,\widehat\Lambda_0}^S(x_0,\cdot)
-
\nu_{\beta,\widehat\Lambda_0}
\right\|_{\mathrm{TV}}
\le
C_0
\rho_{d_n}^S
\]
for every initial state $x_0\in\mathcal X$ and every $S\ge1$.
Consequently, for any target total-variation accuracy $\delta>0$, it suffices that $S\ge \log(C_0/\delta)/\varepsilon_a^{d_n}$. In particular, because $d_n\le \bar r\,n p_n$ under Assumption~\ref{ass:missing}, the mixing rate deteriorates with the diverging missing-data dimension, requiring the number of inner sweeps to increase accordingly with $(n,p_n)$.
\end{theorem}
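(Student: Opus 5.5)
The plan is a Doeblin (uniform minorisation) argument: establish a one-coordinate minorisation, chain it across the $d_n$ coordinates of a sweep, and then invoke the standard coupling bound for uniformly ergodic kernels.

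Everything is done conditional on the observed data and on the event $\mathcal E_n$, which has probability tending to one. On $\mathcal E_n$ the Breslow jumps and the cumulative hazard $\widehat\Lambda_0(\tau)$ are bounded above and below uniformly in $n$. This event comes from the bounded covariates (Assumption~\ref{ass:bounded}), the bounded linear predictor (Assumption~\ref{ass:linearpred}), $\|\beta-\beta^0\|_1\le r$, and $\pi_0>0$ (Assumption~\ref{ass:followup}). Together these give
\[
|X_i^\top\beta|\le K_1+Kr \quad\text{for every completed state in } \mathcal X=[-K,K]^{d_n}\ (\text{or the discrete supports}),
\]
and the risk-set sums are of order $n$, so $n\,\Delta\widehat\Lambda_0(Y_i)$ is bounded away from $0$ and $\infty$ with high probability.

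\textbf{Step 1: two-sided bound on the Cox factor.} Because $\Delta\widehat\Lambda_0(Y_i)$ does not depend on the coordinate being updated, it cancels in the ratio \eqref{eq:mhratio}. Hence there are constants $0<c_f\le C_f<\infty$, independent of $(n,p_n)$, such that the $x$-dependent part of $f_{\mathrm{Cox}}$ lies in $[c_f,C_f]$. Explicitly,
\[
\exp\{\Delta_i x\beta_j\}\exp\{-\widehat\Lambda_0(Y_i)e^{X_i^\top\beta}\}
\]
is bounded above and below because $\widehat\Lambda_0(\tau)=O_p(1)$.

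\textbf{Step 2: one-coordinate minorisation.} The single-site MH kernel for coordinate $(i,j)$ has proposal $q_{ij}$ and acceptance ratio bounded below by $c_f/C_f$. Therefore
\[
K_{ij}(x,\cdot)\ge \int q_{ij}(y)\,\alpha(x,y)\,\cdot\,dy \ge \frac{c_f}{C_f}\,q_{ij}(\cdot).
\]
Since $q_{ij}\ge c_\theta$ on its support and $\le C_\theta$ by Assumption~\ref{ass:working}, this gives $K_{ij}(x,\cdot)\ge \varepsilon_a\,\mu_{ij}(\cdot)$. Here $\mu_{ij}$ is the normalised uniform (or counting) measure on the support and $\varepsilon_a=c_fc_\theta/(C_fC_\theta)$, after absorbing the support length into the constants.

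One point needs care. The proposal parameters $(\alpha_j^\star,\sigma_j^{2\star})$ are redrawn, so $q_{ij}$ depends on the current $X_{i,-j}$. The bound must therefore hold uniformly in the current state, which is exactly what the uniform bounds in Assumption~\ref{ass:working} provide.

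\textbf{Step 3: sweep-level minorisation.} A sweep is the composition $P=\prod_{(i,j)}K_{ij}$ over the $d_n$ missing entries. Apply each coordinate's minorisation in turn: the $(i,j)$ coordinate of the output is drawn from $\mu_{ij}$, which is independent of the past, with probability at least $\varepsilon_a$. Multiplying these gives
\[
P(x,\cdot)\ge\varepsilon_a^{d_n}\,\bigotimes_{(i,j)}\mu_{ij}(\cdot)\qquad\text{for every }x.
\]
The argument is an induction on the number of updated coordinates. Its key ingredient is that a lower bound by a product measure on the already-updated coordinates is preserved by later updates, because each later update is bounded below independently of the values it conditions on.

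Doeblin's theorem then gives, for every $x_0$,
\[
\|P^S(x_0,\cdot)-\nu\|_{\mathrm{TV}}\le(1-\varepsilon_a^{d_n})^S .
\]
This yields existence and uniqueness of $\nu_{\beta,\widehat\Lambda_0}$ and the bound with $C_0=1$, say. Finally, using $\log(1/\rho_{d_n})\ge\varepsilon_a^{d_n}$, the sufficient condition $S\ge\log(C_0/\delta)/\varepsilon_a^{d_n}$ follows.

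\textbf{Main obstacle.} I expect the hard part to be Steps 1 and 2: making the bounds uniform in the state while $\widehat\Lambda_0$ is random. Two issues arise.

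\begin{enumerate}
\item $\widehat\Lambda_0$ is computed from the current completed data. If it were recomputed inside the sweep, the kernel would not be time-homogeneous. I will therefore treat $\widehat\Lambda_0$ as fixed during the inner chain, as the algorithm does, and establish its boundedness on $\mathcal E_n$ via $\pi_0>0$ together with a uniform LLN for the risk-set sums.
\item In the continuous case the Gaussian proposal truncated to $[-K,K]$ must have density bounded below. This requires $\sigma_j^{2\star}$ to stay away from $0$, which I will take from Assumption~\ref{ass:working} rather than derive.
\end{enumerate}
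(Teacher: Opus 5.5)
Your proposal is correct and follows essentially the same route as the paper. The steps match: the linear-predictor bound $K_1+Kr$ and the Breslow bounds on the high-probability risk-set event give a two-sided bound $[c_f,C_f]$ on the Cox factor; this yields a one-coordinate minorisation with $\varepsilon_a=c_fc_\theta/(C_fC_\theta)$, composition over the $d_n$ missing entries, the Doeblin bound $(1-\varepsilon_a^{d_n})^S$ with $C_0=1$, and the logarithm inequality for the required number of sweeps. The one small difference is that you minorise each coordinate kernel by a fixed uniform or counting measure on the support, whereas the paper uses the state-dependent conditional target $\pi_{ij}(\cdot\mid X_{-(ij)})$; with your choice, the product measure from the sweep composition is genuinely independent of the current state, which is what Doeblin's condition requires.
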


 The results imply that the inner SMC-FCS chain converges geometrically fast to a unique stationary distribution conditional on the current substantive-model parameters. Consequently, after a logarithmic number of inner sweeps, the completed datasets used in the outer IRO updates become asymptotically insensitive to initialization, providing the key mixing condition needed for the convergence of the outer iterates established in the next theorem.

 \begin{theorem}[Convergence of the IRO iterates]
\label{thm:rate}
Let $(\widetilde\beta^{(\ell)},\widehat\Lambda_0^{(\ell)})$ denote the outer IRO iterates, where $\widetilde\beta^{(\ell)}$ is the Cox lasso estimator computed from the completed data after the $\ell$th imputation update and $\widehat\Lambda_0^{(\ell)}$ is the corresponding Breslow baseline cumulative hazard estimator. Under Assumptions~\ref{ass:dimension}--\ref{ass:missing}, \ref{ass:Sin}, \ref{ass:Lout}, and~\ref{ass:contract}, suppose $\sup_{n,\ell}\E\|\widetilde\beta^{(\ell)}\|_1<\infty$. Then, as $n\to\infty$ and $\ell\to\infty$ satisfying Assumption~\ref{ass:Lout},
\[
\|\widetilde\beta^{(\ell)}-\beta^\ast\|_1
+
\|\widehat\Lambda_0^{(\ell)}-\Lambda_0^\ast\|_\infty
=
o_p(1),
\]
for almost every observed-data sequence, where $(\beta^\ast,\Lambda_0^\ast)$ is the fixed point of the population IRO map induced by the stationary SMC-FCS law together with the Cox lasso and Breslow updates. Moreover, the finite-sweep approximation error from the inner SMC-FCS chain is asymptotically negligible under Assumption~\ref{ass:Sin}, despite the diverging missing-data dimension $d_n$.
\end{theorem}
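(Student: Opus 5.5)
The plan is to write the outer IRO recursion as a noisy version of the population map $\mathcal M$ and unroll the contraction in Assumption~\ref{ass:contract}. Write $\theta^{(\ell)}=(\widetilde\beta^{(\ell)},\widehat\Lambda_0^{(\ell)})$ and $\|\cdot\|_{\mathcal H}$ as in Assumption~\ref{ass:contract}. Decompose one outer step as
\[
\theta^{(\ell+1)}-\theta^\ast=\{\mathcal M(\theta^{(\ell)})-\theta^\ast\}+\{\widehat{\mathcal M}_n^{\infty}(\theta^{(\ell)})-\mathcal M(\theta^{(\ell)})\}+\{\widehat{\mathcal M}_n^{S_{\mathrm{in}}}(\theta^{(\ell)})-\widehat{\mathcal M}_n^{\infty}(\theta^{(\ell)})\}.
\]
Here $\widehat{\mathcal M}_n^{\infty}$ is the sample IRO map, which draws the completed data exactly from the stationary law $\nu_{\beta,\widehat\Lambda_0}$ and then applies the Cox lasso and Breslow updates. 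The map $\widehat{\mathcal M}_n^{S_{\mathrm{in}}}$ is the one actually run, with only $S_{\mathrm{in}}$ inner sweeps. The first term is bounded by $\lambda_\ast\|\theta^{(\ell)}-\theta^\ast\|_{\mathcal H}+a_n$. The goal is to show that the second and third terms are $o_p(1)$ uniformly in $\ell$, call this $b_n$. Then, as long as the iterates stay in $\mathcal N(\theta^\ast)$, induction gives
\[
\|\theta^{(\ell)}-\theta^\ast\|_{\mathcal H}\le\lambda_\ast^{\ell}\|\theta^{(0)}-\theta^\ast\|_{\mathcal H}+\frac{a_n+b_n}{1-\lambda_\ast}.
\]
Under Assumption~\ref{ass:Lout}, $\lambda_\ast^\ell\le n^{-c_\ell}\to0$. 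The hypothesis $\sup\E\|\widetilde\beta^{(\ell)}\|_1<\infty$, combined with Markov's inequality and the bound $\widehat\Lambda_0\le n\cdot\max_i\exp(-X_i^\top\beta)/(n\pi_0)$, keeps the initial distance $O_p(1)$. This yields the claim.

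The finite-sweep term is handled with Theorem~\ref{thm:ergodicity}. By that theorem and Assumption~\ref{ass:Sin},
\[
\|P^{S_{\mathrm{in}}}(x_0,\cdot)-\nu\|_{\mathrm{TV}}\le C_0\exp(-\varepsilon_a^{d_n}S_{\mathrm{in}})\le C_0(np_n)^{-c_S}.
\]
I will couple the $S_{\mathrm{in}}$-sweep completed dataset with a stationary draw so that the two agree except on an event of probability at most $C_0(np_n)^{-c_S}$. On that event, both outputs are bounded: $\|\beta\|_1$ is controlled via the $L_1$ penalty and bounded covariates, and $\widehat\Lambda_0$ via Assumptions~\ref{ass:bounded}, \ref{ass:linearpred} and~\ref{ass:followup}. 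So the third term is $O_p((np_n)^{-c_S}\cdot\mathrm{poly}(n))$. To make this negligible I will need $c_S$ large enough, or simply convergence in probability through the coupling event itself. A union bound over the $O(\log n)$ outer iterations from Assumption~\ref{ass:Lout} keeps this uniform in $\ell$. This is the step where I will state the "negligibility despite diverging $d_n$" explicitly: the $\varepsilon_a^{d_n}$ factor is exactly absorbed by the lower bound on $S_{\mathrm{in}}$.

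The main obstacle is the sampling term $\widehat{\mathcal M}_n^\infty-\mathcal M$, that is, showing that a single stationary imputation followed by the Cox lasso concentrates around its population counterpart uniformly over $\theta\in\mathcal N(\theta^\ast)$. The plan has two parts.
\begin{enumerate}
\item For the lasso component, treat the completed-data score $\dot\ell_n(\beta;\widetilde X)$ as a sum of terms that, conditional on the observed data, are bounded and weakly dependent (the imputations are conditionally independent across subjects given the parameters and the observed data). A Bernstein inequality over $p_n$ coordinates then gives $\|\dot\ell_n-\E\dot\ell_n\|_\infty=O_p(\sqrt{\log p_n/n})$. Combining this with the restricted-eigenvalue condition inherited from Assumption~\ref{ass:eigen} and the standard Cox-lasso oracle inequality \citep{huang2013oracle} gives an $\ell_1$ deviation of $O_p(s_0\sqrt{\log p_n/n})=o(1)$ by Assumption~\ref{ass:sparsity}.
\item The Breslow component is Lipschitz in $\beta$ in sup norm, by bounded covariates and $\pi_0>0$. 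It also has a uniform law of large numbers over $[0,\tau]$, via Glivenko--Cantelli for monotone risk-set processes.
\end{enumerate}
The hardest point is the dependence of the stationary law on the estimated $\widehat\Lambda_0$ and on the ridge working models. Here I would use Assumption~\ref{ass:ridge} to replace the estimated working models by their targets at cost $o_p(n^{-1/4})$, and the bounded-ratio structure of the acceptance probability \eqref{eq:mhratio} to show that $\nu_{\beta,\widehat\Lambda_0}$ is TV-Lipschitz in $(\beta,\widehat\Lambda_0)$. This transfers pointwise concentration to uniform concentration over $\mathcal N(\theta^\ast)$ via a finite net. The "almost every observed-data sequence" qualifier then follows by working conditionally on the observed data throughout.
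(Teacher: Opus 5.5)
Your proposal follows the paper's proof essentially step for step. The paper likewise writes $\theta^{(\ell+1)}=\widehat{\mathcal M}_n(\theta^{(\ell)})$ and bounds $\sup_{\theta\in\mathcal N(\theta^\ast)}\|\widehat{\mathcal M}_n(\theta)-\mathcal M(\theta)\|_{\mathcal H}=o_p(1)$ using the finite-sweep bound from Theorem~\ref{thm:ergodicity} and Assumption~\ref{ass:Sin}, coordinatewise Bernstein with a union bound over the $p_n$ score coordinates, and the $\ell_1$-to-sup-norm Lipschitz property of the Breslow estimator, then unrolls the contraction of Assumption~\ref{ass:contract} using $\lambda_\ast^\ell=O(n^{-c_\ell})$ and the moment hypothesis for the initial term. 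The differences are minor: the paper absorbs the finite-sweep error into the score bound rather than using an explicit coupling, gets only $o_p(1)$ for the $\beta$-component by convex argmin stability (all that is needed; your $s_0\sqrt{\log p_n/n}$ oracle rate would additionally require $\mathcal M_\beta(\theta)$ itself to be $s_0$-sparse), and takes uniformity over $\mathcal N(\theta^\ast)$ directly from the union-bound step, so it has no analogue of your ridge-replacement and TV-Lipschitz net argument (which would not have a dimension-free constant as $d_n\to\infty$, and per-step bounds using the fresh imputation randomness at each outer iteration suffice anyway).
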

 
 Theorem~\ref{thm:rate} shows that the coupled imputation-and-optimization procedure converges to a stable fixed point of the population IRO map for both the regression coefficient and the baseline cumulative hazard. Combined with Theorem~\ref{thm:ergodicity}, this implies that the finite-sweep SMC-FCS updates introduce asymptotically negligible error, so the completed datasets behave asymptotically as draws from a stationary imputation mechanism compatible with the substantive Cox model. This stationary regime provides the foundation for the high-dimensional estimation and inference results developed in Theorems~\ref{thm:lassorate} and~\ref{thm:clt}.

 \begin{theorem}[Cox lasso rate under IRO stationarity]
\label{thm:lassorate}
Suppose the conditions of Theorem~\ref{thm:rate} hold together with Assumptions~\ref{ass:eigen}, \ref{ass:ridge}, and \ref{ass:sparsity}. Assume the outer IRO iterates have reached stationarity at the fixed point $(\beta^\ast,\Lambda_0^\ast)$ under the triangular-array regime $p=p_n\to\infty$. Let $(\widetilde\beta^{(\ell)},\widehat\Lambda_0^{(\ell)})$ denote the Cox lasso estimator and Breslow baseline cumulative hazard estimator obtained at the $\ell$th outer IRO iteration from the corresponding completed dataset. Then, as $n\to\infty$,
\[
\|\widetilde\beta^{(\ell)}-\beta^\ast\|_1
=
O_p\!\left(
s_0\sqrt{\frac{\log p_n}{n}}
\right).
\]

If additionally
\(
\|\beta^\ast-\beta^0\|_1
=
O\!\left(
s_0\sqrt{\frac{\log p_n}{n}}
\right),
\)
then it holds that
\[
\|\widetilde\beta^{(\ell)}-\beta^0\|_1
=
O_p\!\left(
s_0\sqrt{\frac{\log p_n}{n}}
\right).
\]

Moreover,
\[
\|\widehat\Lambda_0^{(\ell)}-\Lambda_0^\ast\|_\infty
=
o_p(1).
\]

The above rates hold uniformly over the diverging-dimensional sequence $p_n=O(n^\kappa)$ and account for both the stationary imputation variability and the finite-sweep approximation error from the inner SMC-FCS chain.
\end{theorem}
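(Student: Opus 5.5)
The plan is to treat the stationary completed dataset as a fixed design and run the standard Cox-lasso oracle-inequality argument of \citet{huang2013oracle} and \citet{xia2022statistical}, centred at the IRO fixed point $\beta^\ast$ rather than at $\beta^0$.

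\textbf{Step 1 (score bound).} Condition on the completed design $\widetilde X^{(\ell)}$, which by Theorem~\ref{thm:rate} and Assumption~\ref{ass:burnin} is, up to total-variation error $C_0\rho_{d_n}^{S_{\mathrm{in}}}=O((np_n)^{-c_S+1})=o(1)$, a draw from the stationary law $\nu_{\beta^\ast,\Lambda_0^\ast}$. The key input is a sup-norm bound on the score at the fixed point,
\[
\|\dot\ell_n(\beta^\ast;\widetilde X^{(\ell)})\|_\infty
=
O_p\bigl(\sqrt{\log p_n/n}\bigr).
\]
To obtain it, I use the fixed-point property: under the stationary imputation law, $\beta^\ast$ is by construction the (population) minimiser of the expected completed-data partial likelihood. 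Hence the score is a centred sum up to a bias term. Write it in counting-process form, $\dot\ell_n(\beta^\ast)=-n^{-1}\sum_i\int_0^\tau\{X_i-\widehat\eta_n(t;\beta^\ast)\}dM_i^\ast(t)$. Here $M_i^\ast$ are the martingales associated with the completed-data compensator $\Lambda_0^\ast$. Bounded covariates (Assumption~\ref{ass:bounded}) and bounded follow-up (Assumptions~\ref{ass:followup}--\ref{ass:baseline}) allow a martingale Bernstein/Hoeffding inequality coordinatewise, followed by a union bound over $p_n=O(n^\kappa)$ coordinates. The imputation-induced dependence across subjects through $(\widehat\alpha_j,\widehat\Lambda_0)$ is handled by Assumption~\ref{ass:ridge}: its prediction error $o_p(n^{-1/2})$ perturbs the score by at most $o_p(\sqrt{\log p_n/n})$ after Cauchy--Schwarz. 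I expect this step to be the main obstacle, because exact centring of the completed-data score requires identifying $\beta^\ast$ as a stationary-law score root. My fallback is to accept a bias $a_n$ from Assumption~\ref{ass:contract} and to require $a_n=O(\sqrt{\log p_n/n})$.

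\textbf{Step 2 (restricted curvature).} On the cone $\{\|h_{S^c}\|_1\le 3\|h_S\|_1\}$, with $S=\supp(\beta^\ast)$, I show a compatibility condition for $\ddot\ell_n(\beta^\ast;\widetilde X^{(\ell)})$. First, the population Hessian under the stationary law is close to $\Sigma_{\beta^0}$ near $\beta^\ast$, so its eigenvalues are bounded below by $\zeta_{\min}/2$ (Assumption~\ref{ass:eigen}). Second, the entrywise deviation $\|\widehat\Sigma-\Sigma\|_{\max}=O_p(\sqrt{\log p_n/n})$, multiplied by $16 s_0$, is $o(1)$ since $s_0\log n=o(\sqrt n)$ (Assumption~\ref{ass:sparsity}). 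Third, Huang et al.'s lower bound $\ell_n(\beta^\ast+h)-\ell_n(\beta^\ast)-\dot\ell_n^\top h\ge h^\top\ddot\ell_n h\, e^{-2K\|h\|_1}/ (2K\|h\|_1)^2\cdot(\ldots)$ controls the nonquadratic remainder. Boundedness of the linear predictors (Assumption~\ref{ass:linearpred} plus $\|X\|_\infty\le K$) keeps that factor bounded away from zero.

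\textbf{Step 3 (oracle inequality and transfer to $\beta^0$).} With $\lambda_n\ge 2\|\dot\ell_n(\beta^\ast)\|_\infty$, the usual basic inequality yields $\|\widetilde\beta^{(\ell)}-\beta^\ast\|_1\le C s_0\lambda_n$ on an event of probability tending to one. With $\lambda_n\asymp\sqrt{\log p_n/n}$ this gives the first claim. The second claim follows by the triangle inequality with the extra condition on $\|\beta^\ast-\beta^0\|_1$.

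\textbf{Step 4 (baseline hazard).} For $\widehat\Lambda_0^{(\ell)}$, write the Breslow difference against its counterpart at $\beta^\ast$. Each term is bounded by $\sup_t|S^{(0)}_n(t;\widetilde\beta)-S^{(0)}_n(t;\beta^\ast)|/\{S^{(0)}_n\}^2$. The numerator is at most $K\|\widetilde\beta-\beta^\ast\|_1 e^{2K_1}$, which is $o_p(1)$ by Step 3 and $s_0\log n=o(\sqrt n)$. The denominator is bounded below via $\pi_0>0$ (Assumption~\ref{ass:followup}) and a Glivenko--Cantelli bound on risk sets. The remaining term $\widehat\Lambda_0(\cdot;\beta^\ast)-\Lambda_0^\ast$ is $o_p(1)$ uniformly on $[0,\tau]$ by the standard Breslow consistency argument under the stationary law. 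I absorb the finite-sweep error by the total-variation bound of Theorem~\ref{thm:ergodicity}.
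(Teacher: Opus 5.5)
Your skeleton (sup-norm score bound at $\beta^\ast$, basic inequality, cone $\|\Delta_{S^c}\|_1\le3\|\Delta_S\|_1$, restricted curvature, $\|\Delta\|_1=O_p(s_0\lambda_n)$, triangle inequality) is the paper's. For the baseline claim the paper simply quotes Theorem~\ref{thm:rate}, which already gives $\|\widehat\Lambda_0^{(\ell)}-\Lambda_0^\ast\|_\infty=o_p(1)$. Your Step~4 is therefore unnecessary, and it would itself need a Breslow law of large numbers for cross-subject-dependent imputed covariates, plus a bound on $X_i^\top\beta^\ast$ that Assumption~\ref{ass:linearpred} gives only at $\beta^0$.

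The real problem is Step~1. There you try to prove what the paper only asserts (sub-exponential completed-data scores, Bernstein, union bound), and your mechanism fails on three counts.
(i) $\beta^\ast$ is not an unpenalized minimiser. The population IRO map contains the lasso step, $\beta^\ast=\argmin_b\{\E_{\nu_{\theta^\ast}}\ell_n(b;\widetilde X)+\lambda_n\|b\|_1\}$. The KKT condition then gives $\E_{\nu_{\theta^\ast}}\dot\ell_n(\beta^\ast;\widetilde X)=-\lambda_n z^\ast$ with $z^\ast\in\partial\|\beta^\ast\|_1$, whose sup-norm is exactly $\lambda_n$ whenever $\beta^\ast\neq0$. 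This bias scales with $\lambda_n$, so enlarging the constant in $\lambda_n$ never yields $\|\dot\ell_n(\beta^\ast)\|_\infty\le\lambda_n/2$. Your fallback is misdirected: $a_n$ in Assumption~\ref{ass:contract} is the contraction defect of $\mathcal M$, not the mean score at $\beta^\ast$. The paper's proof passes over the same point. One repair is to compare the empirical and population KKT conditions via monotonicity of $\partial\|\cdot\|_1$; this recovers a cone only under a dual-feasibility margin $\|z^\ast_{S^c}\|_\infty\le1-\delta$. The other is to redefine $\beta^\ast$ as a stationary-law score root.
(ii) The martingale representation with compensator $\Lambda_0^\ast$ presupposes that the completed data follow a Cox model at $(\beta^\ast,\Lambda_0^\ast)$, in a filtration containing the imputed covariates. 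Since imputations are drawn given $(Y_i,\Delta_i)$, that is a congeniality property the theorem does not grant.
(iii) Cauchy--Schwarz applied to $n^{-1}\sum_i\{X_{i,-j}^\top(\widehat\alpha_j-\alpha_j^0)\}^2=o_p(n^{-1/2})$ only gives $n^{-1}\sum_i|X_{i,-j}^\top(\widehat\alpha_j-\alpha_j^0)|=o_p(n^{-1/4})$, far larger than $\sqrt{\log p_n/n}$. The step is also beside the point: $\nu_{\theta^\ast}$ and $\beta^\ast$ are built from the fitted working models, and $\alpha_j^0$ matters only for $\beta^\ast$ versus $\beta^0$, which the theorem assumes.

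In Step~2, the curvature factor in Huang et al.'s lower bound, $(e^{-\eta}+\eta-1)/\eta^2$ with $\eta\le2K\|h\|_1$, is controlled by $\|h\|_1$, not by the bound on $X_i^\top\beta^0$. You therefore need a localisation first: rescale $h$ onto $\|h\|_1=r$ using convexity of the lasso objective, prove the bound there, and conclude by contradiction. Moreover, lower-bounding the stationary-law Hessian at $\beta^\ast$ through $\Sigma_{\beta^0}$ requires $\beta^\ast$ near $\beta^0$ and the stationary covariate law near the true one. Neither is a hypothesis of the first claim; the paper makes the same identification implicitly when it invokes Assumption~\ref{ass:eigen}.
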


  The results suggest that, after the IRO algorithm reaches stationarity, the Cox lasso estimator retains the standard high-dimensional $\ell_1$ convergence rate despite missing covariates and iterative imputation updates, while the associated Breslow baseline cumulative hazard estimator remains uniformly consistent. These rate results lead to the asymptotic normality of the debiased Rubin-pooled estimator in Theorem~\ref{thm:clt}.

 \begin{theorem}[Asymptotic normality for linear contrasts]
\label{thm:clt}
Suppose Assumptions~\ref{ass:dimension}--\ref{ass:burnin} hold. Assume $\lambda_n\asymp\sqrt{\log p_n/n}$ and $\gamma_n\asymp\|\Theta_{\beta^0}\|_{1,1}s_0\lambda_n$, and suppose the conclusions of Theorem~\ref{thm:lassorate} hold. Assume the SMC-FCS imputation procedure is asymptotically proper and congenial with the Cox substantive model in the sense of \cite{wang1998mi}.
Let $\bar\beta_M=M^{-1}\sum_{m=1}^M\widehat\beta_{\mathrm{db}}^{(m)}$ denote the Rubin-pooled debiased estimator. Then, for any nonrandom loading vector $c_n\in\mathbb R^{p_n}$ satisfying $\|c_n\|_2=1$ and $\|c_n\|_1\le a^\ast<\infty$,
 \[
\frac{\sqrt n\,
c_n^\top
(\bar\beta_M-\beta^0)} {
\sqrt{c_n^\top
\left[
V_{\mathrm{com}}
+
\left(
1+\frac1M
\right)
V_{\mathrm{mis}}
\right]
c_n}}
\overset{d}{\longrightarrow}
\mathcal N(0,1),
\]
 where $V_{\mathrm{com}}=\Theta_{\beta^0}$ and $V_{\mathrm{mis}}=V_{\mathrm{ML}}-V_{\mathrm{com}}$.
 Moreover, 
\[
n\,c_n^\top\widehat V_{\mathrm{total}}c_n  \Big/ c_n^\top
\left[
V_{\mathrm{com}}
+
\left(1+\frac1M\right)
V_{\mathrm{mis}}
\right]
c_n
\overset{p}{\longrightarrow}
 1,
\]
and
\[\frac{
\sqrt n\,
c_n^\top
(\bar\beta_M-\beta^0)
}{
\sqrt{
n\,
c_n^\top
\widehat V_{\mathrm{total}}
c_n
}
}
\overset{d}{\longrightarrow}
\mathcal N(0,1).
\]
Thus, Rubin-based Studentization is asymptotically valid for  fixed $M$ under the diverging-dimensional regime.
\end{theorem}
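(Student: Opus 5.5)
The plan is to reduce the pooled statistic to an average of $M$ per-chain linear expansions and then apply the Wang--Robins (1998) large-sample theory for proper, congenial multiple imputation to that average. For each chain $m$, write the usual debiased-lasso decomposition
\[
\sqrt n\,c_n^\top(\widehat\beta^{(m)}_{\mathrm{db}}-\beta^0)
=-\sqrt n\,c_n^\top\Theta_{\beta^0}\dot\ell_n(\beta^0;\widetilde X^{(m)})
-\sqrt n\,c_n^\top(\widehat\Theta^{(m)}-\Theta_{\beta^0})\dot\ell_n(\beta^0;\widetilde X^{(m)})
+\sqrt n\,c_n^\top(I-\widehat\Theta^{(m)}\widehat\Sigma^{(m)}_{\ast})(\widehat\beta^{(m)}_{L_1}-\beta^0),
\]
where $\widehat\Sigma^{(m)}_\ast$ is an integrated Hessian along the segment from $\beta^0$ to $\widehat\beta^{(m)}_{L_1}$. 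The remainder terms are controlled in the same way as in \citet{xia2022statistical}, applied to the completed design $\widetilde X^{(m)}$.

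\textbf{Step 1: the remainders vanish.} By Theorem~\ref{thm:lassorate} together with $\|\beta^\ast-\beta^0\|_1=O(s_0\lambda_n)$, which is implicit in the congeniality hypothesis, $\|\widehat\beta^{(m)}_{L_1}-\beta^0\|_1=O_p(s_0\lambda_n)$. Bounded covariates (Assumptions~\ref{ass:bounded}--\ref{ass:linearpred}) and the burn-in condition, Assumption~\ref{ass:burnin} with Theorem~\ref{thm:ergodicity}, make $\widetilde X^{(m)}$ a draw from the stationary law up to total-variation error $o(1)$. This allows concentration bounds of the form $\|\widehat\Sigma^{(m)}-\Sigma^{\mathrm{imp}}\|_{\max}=O_p(\sqrt{\log p/n})$ and $\|\dot\ell_n(\beta^0;\widetilde X^{(m)})\|_\infty=O_p(\sqrt{\log p/n})$. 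Here $\Sigma^{\mathrm{imp}}$ is the information matrix under the imputed-data law; by Assumption~\ref{ass:ridge} it differs from $\Sigma_{\beta^0}$ only at order $o_p(n^{-1/4})$ in max norm.

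Feasibility of $\Theta_{\beta^0}$ in \eqref{eq:qp} then follows from $\gamma_n\asymp\|\Theta_{\beta^0}\|_{1,1}s_0\lambda_n$. Next, bound $\|\widehat\Theta^{(m)}c_n\|_1\le\|\Theta_{\beta^0}\|_{1,1}a^\ast$ and apply H\"older's inequality. Both remainders become $O_p(\sqrt n\,\|\Theta_{\beta^0}\|_{1,1}s_0\lambda_n^2)$ plus a term carrying the $p$ and $s_\ast^2$ factors. Assumption~\ref{ass:sparsity} is designed to make all of these $o_p(1)$.

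\textbf{Step 2: the leading term via Wang--Robins.} Decompose the leading score as
\[
\dot\ell_n(\beta^0;\widetilde X^{(m)})=\E_{\mathrm{imp}}\{\dot\ell_n(\beta^0;\widetilde X)\mid \mathrm{obs}\}+\bigl[\dot\ell_n(\beta^0;\widetilde X^{(m)})-\E_{\mathrm{imp}}\{\cdot\mid\mathrm{obs}\}\bigr].
\]
The first piece is common to all chains. Under proper and congenial imputation it is asymptotically the observed-data efficient score, so $\sqrt n\,c_n^\top\Theta_{\beta^0}$ times it has variance $c_n^\top V_{\mathrm{ML}}c_n$.

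The $M$ second pieces are, conditionally on the observed data, independent across chains. This holds because the chains are independently initialized and burned in, and the stationary-law approximation applies by Theorem~\ref{thm:rate}. Each has conditional variance $c_n^\top V_{\mathrm{mis}}c_n$, following the standard Rubin/Wang--Robins identity $V_{\mathrm{ML}}=V_{\mathrm{com}}+V_{\mathrm{mis}}$ once congeniality is assumed.

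Averaging over $m$ gives total variance $c_n^\top\{V_{\mathrm{ML}}+M^{-1}V_{\mathrm{mis}}\}c_n$, which equals the stated $c_n^\top\{V_{\mathrm{com}}+(1+M^{-1})V_{\mathrm{mis}}\}c_n$. Asymptotic normality then follows from a Lindeberg CLT for the martingale (counting-process) representation of the observed-data score; the summands are bounded because $\|\Theta_{\beta^0}c_n\|_1$ is bounded and $X$ is bounded. The imputation noise is handled by a conditional CLT, or it is exactly Gaussian in the limit when $M$ is fixed. The eigenvalue bounds of Assumption~\ref{ass:eigen} keep the denominator bounded away from zero.

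\textbf{Step 3: consistency of the Rubin variance estimator.} Show that $n\,c_n^\top\widehat V_Wc_n\to c_n^\top V_{\mathrm{com}}c_n$, using $|c_n^\top(\widehat\Theta^{(m)}-\Theta_{\beta^0})c_n|=o_p(1)$ from the same $\ell_1$ bounds as in Step 1. For the between-imputation part, $n\,c_n^\top\widehat V_Bc_n$ is the sample variance of the $M$ conditionally i.i.d.\ imputation deviations and has conditional mean $c_n^\top V_{\mathrm{mis}}c_n$.

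For fixed $M$ this quantity is not consistent; it is a scaled $\chi^2_{M-1}$. The stated ratio limit $\overset{p}{\to}1$ therefore needs care. I would argue it in the regime where the fraction of missing information $c_n^\top V_{\mathrm{mis}}c_n$ is negligible relative to $c_n^\top V_{\mathrm{com}}c_n$, or else interpret the limit conditionally. This is the main obstacle. If that route fails, I would fall back on proving Studentized normality via the Barnard--Rubin argument and noting that the normal reference applies asymptotically. The final display then follows from Slutsky's lemma. A secondary difficulty is verifying uniform-in-$p$ concentration for $\widetilde X^{(m)}$ given that imputations correlate rows within a chain; I would handle this by conditioning on the observed data and using the TV bound to transfer to independent stationary draws.
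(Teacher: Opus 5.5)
Your Steps 1--2 are essentially the paper's argument. The paper also expands the per-chain score around $\beta^0$, controls the remainder with the lasso rate and the nodewise constraint, and applies a martingale CLT to the complete-data part. It then attributes the extra $(1+M^{-1})V_{\mathrm{mis}}$ to properness and congeniality without further derivation. Your split into the conditional mean given the observed data plus $M$ conditionally independent imputation deviations makes that step transparent. It also avoids the paper's questionable claim that the imputed-data score is a martingale with respect to a filtration containing imputed covariates generated from all subjects' outcomes.

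\textbf{The gap: Step 3.} Your diagnosis is correct, and the paper does not overcome the obstacle. It simply asserts $n\,c_n^\top\widehat V_Bc_n/c_n^\top V_{\mathrm{mis}}c_n\overset{p}{\to}1$ ``by properness and congeniality'', and for fixed $M$ this is false. Conditionally on the observed data, the $M$ centred chain estimates are asymptotically i.i.d.\ Gaussian. So that ratio converges in law to $\chi^2_{M-1}/(M-1)$, independently of the chain average. Hence, if $c_n^\top V_{\mathrm{mis}}c_n/c_n^\top V_{\mathrm{com}}c_n\to r>0$,
\[
\frac{\sqrt n\,c_n^\top(\bar\beta_M-\beta^0)}{\sqrt{n\,c_n^\top\widehat V_{\mathrm{total}}c_n}}
\overset{d}{\longrightarrow}
\frac{\sqrt{1+(1+M^{-1})r}\;Z}{\sqrt{1+(1+M^{-1})r\,\chi^2_{M-1}/(M-1)}},
\qquad Z\sim\mathcal N(0,1)\ \text{independent of}\ \chi^2_{M-1}.
\]
This limit is a nondegenerate scale mixture of normals, the law that Rubin's $t_\nu$ approximates, not $\mathcal N(0,1)$. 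Assumption~\ref{ass:missing} permits fixed missingness fractions, so $r>0$ is the generic case.

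None of your rescues proves the second and third displays as stated:
\begin{itemize}
\item A vanishing missing-information fraction, or $M=M_n\to\infty$, is an extra hypothesis.
\item Reading the limit conditionally does not help, because the conditional limit is still $\chi^2_{M-1}/(M-1)$.
\item The Barnard--Rubin route changes the conclusion to a $t$ reference.
\end{itemize}
The argument supports the first display (granting properness), and the Studentized normal limit only when $M\to\infty$ or $r=0$. The theorem's ``fixed $M$'' claim, and the paper's justification of the normal reference over a $t$ reference, do not survive.

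\textbf{Two smaller points.}

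In Step 2, the common part does not have variance $c_n^\top V_{\mathrm{ML}}c_n$ because it is the observed-data score. Write $I_{\mathrm{obs}}=V_{\mathrm{ML}}^{-1}$ for the observed-data information. If the common part were $S_{\mathrm{obs}}$ itself, $\sqrt n\,\Theta_{\beta^0}S_{\mathrm{obs}}$ would have variance $\Theta_{\beta^0}I_{\mathrm{obs}}\Theta_{\beta^0}$, not $I_{\mathrm{obs}}^{-1}$. Under proper imputation, the parameter-draw component makes the conditional mean approximately $\Sigma_{\beta^0}(\widehat\beta_{\mathrm{ML}}-\beta^0)$. The algorithm, however, holds $(\beta^\star_m,\widehat\Lambda^\star_{0m})$ fixed during burn-in. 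So this, like the per-chain variance $V_{\mathrm{ML}}-V_{\mathrm{com}}$, is carried entirely by the properness hypothesis rather than derived.

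In Step 1, the bound $\|\widehat\Theta^{(m)}c_n\|_1\le\|\Theta_{\beta^0}\|_{1,1}a^\ast$ does not follow from \eqref{eq:qp}. That programme minimises $u^\top\widehat\Sigma^{(m)}u$, not $\|u\|_1$; the $\ell_1$ bound is a CLIME property. Feasibility of $\Theta_{\beta^0}$ only gives $\widehat\Theta_k^{(m)\top}\widehat\Sigma^{(m)}\widehat\Theta_k^{(m)}\le\Theta_{\beta^0,k}^\top\widehat\Sigma^{(m)}\Theta_{\beta^0,k}$. With an eigenvalue bound this yields an $\ell_2$ control, and returning to $\ell_1$ costs a $\sqrt p$. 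That is the natural source of the factor $p$ in the rate condition of \citet{xia2022statistical}. The paper's own Step 1 is equally loose here.
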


Theorem~\ref{thm:clt} establishes asymptotically valid inference for linear  contrasts after debiasing and Rubin pooling across imputations. The result shows that the proposed SMC-DBL procedure correctly propagates uncertainty from missing covariates and attains the semiparametric observed-data efficiency bound when the number of imputations diverges, while remaining asymptotically valid for fixed finite $M$.

\section{Simulations}  \label{sec:simulation}
 
We investigated the operating characteristics of the proposed SMC-DBL procedure through simulation. Four competitors are considered, including an oracle debiased lasso fitted to the fully observed data, the IRO procedure of \citet{liang2018imputation} without debiasing, the standard SMC-FCS imputation of \citet{bartlett2015multiple} followed by per-imputation debiased lasso and Rubin pooling, and the mean-or-mode imputation followed by debiased lasso. The oracle is treated as a gold standard in the sense that it has access to data the other procedures must reconstruct from observed entries.  
 
\vspace{1ex}
 
We considered all combinations of $n \in \{500, 1000, 2000\}$.  For notational convenience, we suppress the dependence of \(p\) on \(n\) and consider \(p\in\{20,50,100,200\}\), yielding 12 simulation settings. The number of nonzero coefficients was fixed at $s_0 = 5$, with
\begin{equation*}
\beta^0 \;=\; (1.0,\, 1.0,\, 1.0,\, 0.5,\, 0.5,\, 0,\, \ldots,\, 0)\T.
\end{equation*}
Covariates were drawn from a multivariate Gaussian distribution with mean zero and AR(1) covariance $\Sigma_{ij} = \rho^{|i-j|}$, $\rho = 0.5$, then truncated coordinatewise to $[-5, 5]$ to satisfy the boundedness assumption underlying the Cox-model partial-likelihood theory. Failure times were drawn from an exponential distribution with rate $\exp(X\T \beta^0)$, and censoring times were drawn independently from an exponential distribution with rate 0.1. The marginal censoring fraction ranged from approximately 30\% to 45\% across settings.
 
\vspace{1ex}
 
Missingness was placed on coordinates $2,\dots,\lfloor 0.2 p \rfloor + 1$, so that 20\% of covariates were partially missing in each setting. The missingness indicator at coordinate $j$ followed a Bernoulli draw with probability $\mathrm{logit}^{-1}(-1.5 + 0.8 X_1)$ clipped to $[0.80, 0.95]$. The resulting mechanism is missing at random, with the fully observed first coordinate as the sole driver, matching the conditioning used in the theoretical development. The realised per-coordinate missing rate was approximately 5--20\% on affected columns, with a subject-level missing rate of roughly 50\% under union over coordinates and an overall cell-level rate of approximately 3--4\% on the design matrix.
 
\vspace{1ex}
 
For each $(n, p)$ pair, we generated $R = 100$ independent datasets and applied all five procedures. The SMC-DBL procedure used $M = 20$ inferential chains, each running an independent Phase~1 IRO convergence followed by a per-chain burn-in of $T_{\mathrm{burn}} = \max\{20, \lceil 3 \log(np)\rceil\}$ iterations, with $S_{\mathrm{inner}} = \max\{5, \lceil \log(np)\rceil\}$ inner sweeps per outer iteration. The ridge multiplier was tuned by five-fold cross-validation on the prediction error of the first variable in $\mcM$, with the response taken from rows on which that variable is genuinely observed and the predictor matrix read from the same rows of the imputed pseudo-complete dataset produced by a single tuning Phase~1 chain run at a small nominal ridge $b_0 = 0.1$. The minimum-MSE rule was used over a candidate grid for $b$ ranging from $0.01$ to $\max(2, 20\sqrt{p/n})$ on a 15-point log-spaced grid. The QP multiplier in the debiasing step was fixed at $a = 0.5$. The converged Phase~1 state of the tuning chain was discarded, and the $M$ inferential chains shared no Phase~1 endpoint and were initialised independently.

Three operating-characteristic figures summarise the comparison on the active set $\mathcal{S} = \{k : \beta^0_k \neq 0\}$. Per-coordinate detail across all settings is reported in Appendix Table \ref{tab:sim_nonzero_full}. The active set is the inferential target of practical interest, and reporting averages over $\mathcal{S}$ yields a compact summary that nevertheless preserves the essential contrast among methods. Mean absolute bias on $\mathcal{S}$ is $\overline{|\mathrm{Bias}|}_{\mathcal{S}} = |\mathcal{S}|^{-1} \sum_{k \in \mathcal{S}} |\mathrm{Bias}_k|$, with $\mathrm{Bias}_k = R^{-1} \sum_{r=1}^R (\hat{\beta}_k^{(r)} - \beta_k^0)$, and the active-set averages of empirical SD, mean estimated SE, and coverage are defined analogously.

\begin{figure}[t]
  \centering
  \includegraphics[width=\linewidth]{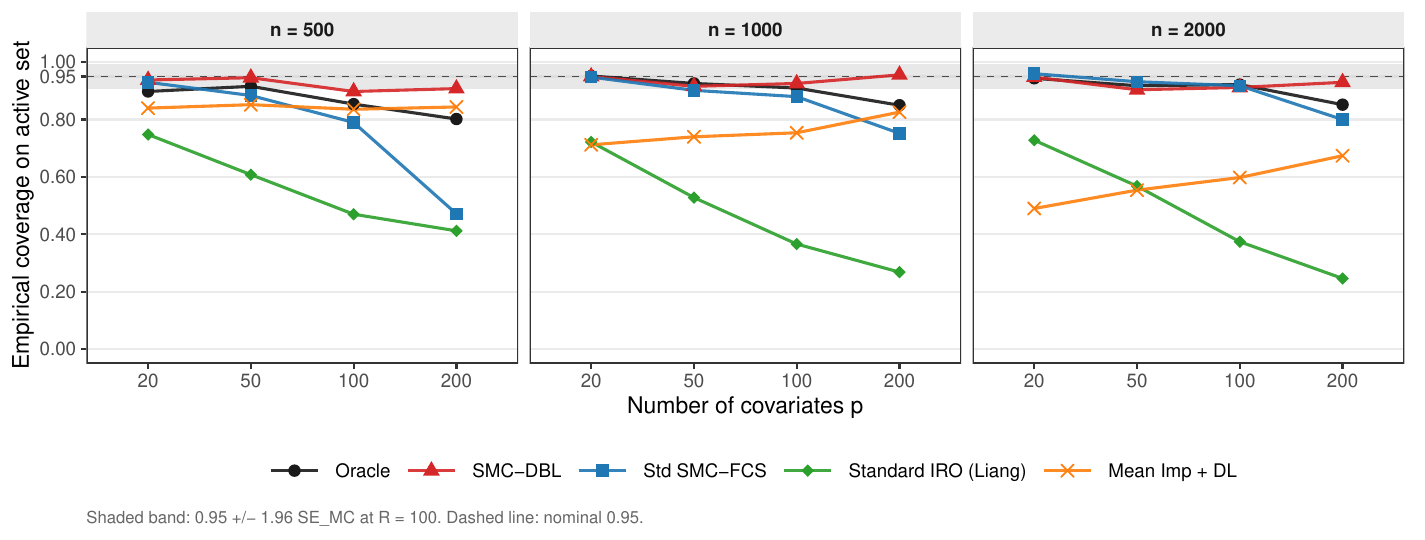}
  \caption{Empirical coverage on the active set across the twelve
    simulation settings.}
  \label{fig:coverage}
\end{figure}

\vspace{1ex}

SMC-DBL holds the nominal $0.95$ level within Monte-Carlo error in eleven of the twelve settings, Std~SMC-FCS achieves nominal coverage at $n = 1000$ and $n = 2000$ for $p \leq 100$ but drops to $0.47$ at $(n=500, p=200)$ and to $0.75$ at $(n=1000, p=200)$, reflecting the breakdown of the univariate normal imputer under  working models. The oracle, despite having access to the fully observed data, exhibits mild under-coverage (roughly $0.80$--$0.85$) at $p = 200$ for all $n$, a feature attributable to the diverging-$p$ remainder in the debiased-lasso decomposition rather than to the imputation step. Standard IRO under-covers severely and the gap widens with $p$: from $0.75$ at
$(n=500, p=20)$ down to $0.25$--$0.41$ at $p=200$, reflecting the absence of a debiasing correction. Mean Imp~+~DL produces low coverage that worsens with $n$ at small $p$, illustrating that single-imputation procedures fail to reflect imputation uncertainty: as $n$ grows the standard errors shrink at the $\sqrt{n}$ rate but the imputation bias does not, so the coverage gap widens.
 
\vspace{1ex}

\begin{figure}[t]
  \centering
  \includegraphics[width=\linewidth]{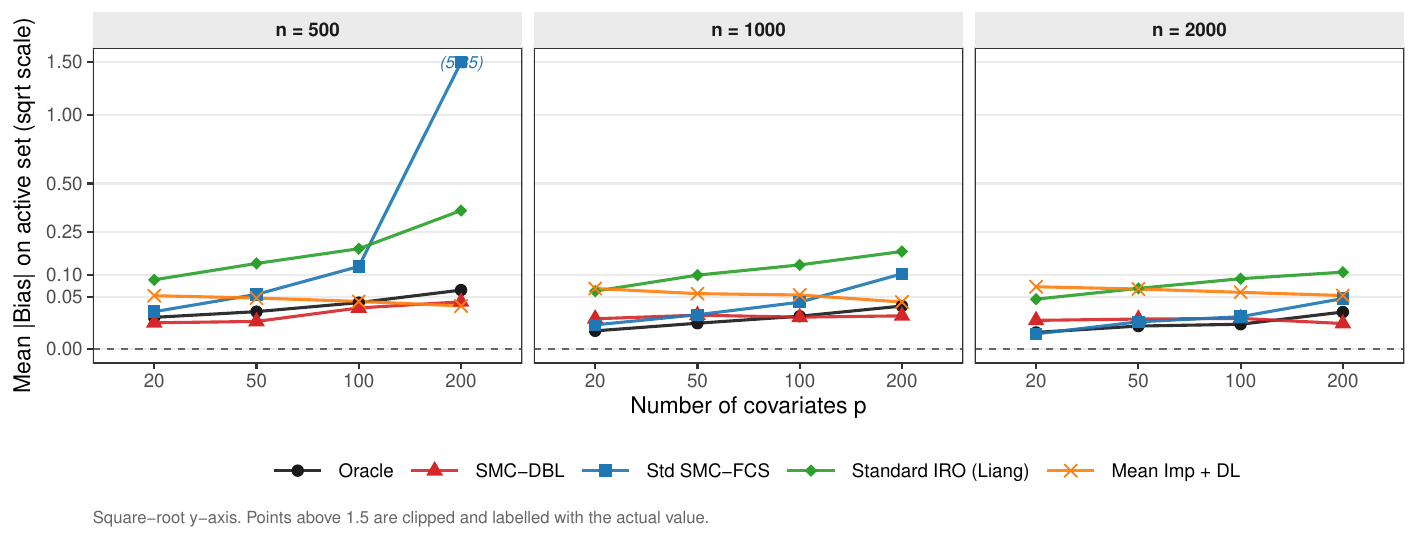}
  \caption{Mean absolute bias on the active set.}
  \label{fig:bias}
\end{figure}

On the square-root scale, SMC-DBL tracks the oracle closely across all twelve settings, with the active-set absolute bias in the range $0.02$ to $0.05$. The two methods are visually indistinguishable at $n \in \{1000, 2000\}$. Std~SMC-FCS tracks SMC-DBL at $p \leq 100$ but exhibits a catastrophic failure at $(n=500, p=200)$ where the active-set bias reaches approximately $5.5$. The failure reflects the breakdown of unregularised univariate
imputation in the regime $p \approx n/2$ with $20\%$ of columns missing. Standard IRO carries a non-trivial active-set bias at every setting, in the
range $0.05$ to $0.35$, increasing with $p$ and decreasing with $n$ as expected for a Cox-lasso point estimate without debiasing. Mean~Imp~+~DL has
bias comparable to the oracle for small $p$ but drifts upward as $p$ grows, again reflecting the loss of correlation information when missing entries are replaced by column means.

\vspace{1ex}

\begin{figure}[t]
  \centering
  \includegraphics[width=\linewidth]{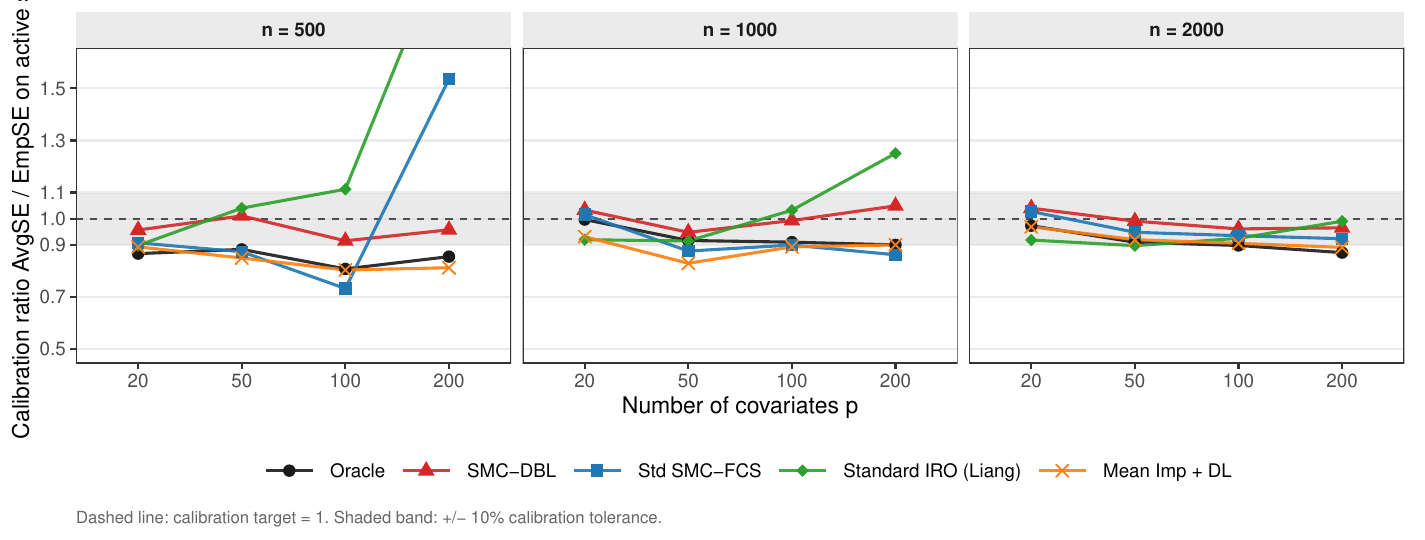}
  \caption{Variance calibration on the active set}
  \label{fig:calibration}
\end{figure}

The calibration ratio $\widehat{\mathrm{AvgSE}} / \widehat{\mathrm{EmpSE}}$ takes the value one when the model-based standard errors target the empirical sampling SD; values below one indicate under-estimation of sampling variability, values above one indicate conservatism. SMC-DBL falls inside the $\pm 10\%$ tolerance band in eleven of the twelve settings, with the largest deviation $0.92$ occurring at $(n=500, p=100)$. The ratio rises to $1.05$ at $(n=1000, p=200)$, in line with the proper-imputation limit $V_{\mathrm{ML}} + V_{\mathrm{mis}}/M$ from Theorem~\ref{thm:clt}, which inflates the within-imputation sandwich by a factor that grows with the missing-information fraction. Mean~Imp~+~DL clusters near $0.85$--$0.90$ for $n \leq 1000$ but moves into the calibration band at $n = 2000$, showing that single-imputation under-estimation persists at moderate $n$ but attenuates as the sampling variance dominates the imputation variance. Std~SMC-FCS is well calibrated at $p \leq 50$ but jumps to $1.55$ at $(n=500, p=200)$, a mechanical consequence of the catastrophic bias point in Figure~\ref{fig:bias}: the empirical SD on the active set is dominated by the bias spike, but the model-based sandwich does not register that spike, so the ratio is in fact misleadingly close to one only because both numerator and denominator are large rather than because the procedure is calibrated. Standard IRO produces ratios above the upper tolerance at $p \geq 100$ for $n \in \{500, 1000\}$, peaking near $1.7$ at $(n=500, p=200)$; the plug-in Hessian-inverse used in the IRO variance estimate is over-estimating the sampling SD in regimes where shrinkage bias dominates the Monte-Carlo fluctuation.

\section{Data Analysis} \label{sec:data}
\subsection{Data}
 
The application uses a non-small cell lung cancer sub-cohort comprising 979 European-ancestry patients with pathologically confirmed disease, available germline genotype data, and complete survival information. 5 patients with missing outcomes were excluded from the analysis. Genotyping was performed on Illumina arrays. Standard sample- and variant-level quality control was applied within the cohort, including filters on sample missingness, variant call rate, minor allele frequency, and Hardy-Weinberg equilibrium. Cryptic relatedness was addressed during upstream genotype quality control. Three within-cohort ancestry principal components, computed by principal component analysis on linkage-disequilibrium-pruned genome-wide markers (PLINK v1.9, --indep-pairwise 50 5 0.2), were retained for adjustment. Overall survival was defined as the time from pathological diagnosis to death from any cause, with patients still alive at last follow-up censored at that date.

 \vspace{1ex}
 
Seven clinical covariates were extracted: age at diagnosis (continuous), sex (binary), smoking status (never, former, or current), disease stage (early stages I--II vs late stages III--IV), surgical resection (yes or no), chemotherapy (yes or no), radiation therapy (yes or no), and the first three within-cohort ancestry principal components. The treatment indicators record whether each treatment was ever received, recorded at the time of data abstraction. Because treatment timing was not uniformly available, the indicators may be subject to immortal-time bias, and we therefore treat their coefficients as adjustment quantities rather than causal effect estimates. 

 \vspace{1ex}

The genetic predictors were drawn from a prespecified panel of 77 candidate single-nucleotide polymorphisms assembled from lung-function GWAS evidence, primarily from the \cite{kachuri2020lung} instrument set, supplemented with representative lead variants from established pleiotropic loci. Variants were located by chromosomal position under the GRCh37/hg19 reference build. After quality control, 51 of the 77 candidate variants were available in the cohort and entered the analysis. The remaining variants were either monomorphic, failed Hardy-Weinberg or call-rate filters, or were not present on the genotyping array. The detailed descriptive table is reported in Appendix Table \ref{tab:table1}.

\subsection{Comparisons of Results}
\label{sec:app_settings}

 We applied the SMC-DBL procedure described in Section~\ref{sec:method}, using a design vector that included both clinical covariates and HSPH SNP genotypes. Clinical covariates were standardised to unit variance prior to penalisation, whereas SNPs were retained on the additive \(0/1/2\) dosage scale. The resulting design matrix contained \(p=61\) covariates.

\vspace{1ex}

Table~\ref{tab:realdata} compares SMC-DBL with several competing approaches considered in the simulation studies and further illustrates the advantages of the proposed method. Compared with complete-case (CC) analysis, SMC-DBL generally produced smaller standard errors and stronger statistical significance for both clinical covariates and SNP effects, reflecting improved efficiency through the use of partially observed subjects rather than discarding incomplete cases. For example, the estimated effect of late-stage disease increased from \(0.772\) (SE \(=0.257\)) under CC analysis to \(0.929\) (SE \(=0.173\)) under SMC-DBL. SMC-DBL and standard SMC-FCS yielded highly similar estimates, suggesting that the proposed debiased lasso method preserves the substantive-model-compatible imputation structure while extending inference to diverging-dimensional settings. In contrast, the IRO approach substantially attenuated most SNP effects toward zero, frequently producing \(p\)-values close to one. For instance, rs11022690 was statistically significant under SMC-DBL (\(\hat\beta=-0.113\), \(p=0.022\)) but was considerably attenuated under IRO (\(\hat\beta=-0.054\), \(p=0.300\)). Similar patterns were observed for rs72743477, indicating that SMC-DBL better preserves moderate prognostic signals in the presence of missingness. Although mean imputation produced estimates numerically similar to SMC-DBL for some variables, it does not properly account for uncertainty due to missing data and therefore lacks formal inferential validity. In contrast, SMC-DBL combines compatible multiple imputation, Rubin's variance decomposition, and debiased penalized estimation in a unified fashion, providing principled inference for moderately high-dimensional Cox regression with missing covariates.

\vspace{1ex}

 Among the identified variants, rs72743477 is located within \textit{SMAD3}, a central mediator of the TGF-\(\beta\) signaling pathway that regulates cell proliferation, differentiation, immune modulation, and extracellular matrix remodeling. Aberrant TGF-\(\beta\)/\textit{SMAD3} signaling has been widely implicated in lung cancer progression, particularly in epithelial--mesenchymal transition (EMT), tumor invasion, metastatic dissemination, and resistance to therapy. In NSCLC, increased \textit{SMAD3} activity has been associated with aggressive tumor phenotypes, immune evasion, fibrosis-related stromal remodeling, and poorer prognosis. Experimental studies have also suggested that \textit{SMAD3} contributes to tumor-promoting inflammatory responses and may interact with smoking-related oxidative stress pathways, both of which are relevant to lung carcinogenesis. These biological roles support the plausibility that variation within the \textit{SMAD3} region could influence survival outcomes through effects on tumor progression and the tumor microenvironment. In contrast, rs11022690 currently has limited functional annotation and has not been extensively characterized in lung cancer studies. Nevertheless, the consistency of its estimated protective effects across multiple imputation-based analyses suggests a potentially stable prognostic signal rather than a spurious association driven by missing-data handling. It is possible that rs11022690 tags a nearby regulatory region or acts through linkage disequilibrium with functional variants affecting gene expression or immune-related pathways. Additional analyses integrating expression quantitative trait locus (eQTL) data, epigenomic annotations, and transcriptomic profiling may help clarify its biological relevance.

\begin{landscape}
\begingroup\footnotesize
\begin{longtable}{lrlrlrlrlrl}
\caption{Pooled estimates for all predictors across the five comparison procedures applied to the NSCLC cohort. }\label{tab:realdata}\\
\toprule
Variable & \multicolumn{2}{c}{CC} & \multicolumn{2}{c}{SMC-DBL} & \multicolumn{2}{c}{Std SMC-FCS} & \multicolumn{2}{c}{IRO} & \multicolumn{2}{c}{Mean Imp}\\
\cmidrule(lr){2-3} \cmidrule(lr){4-5} \cmidrule(lr){6-7} \cmidrule(lr){8-9} \cmidrule(lr){10-11}
 & $\hat{\beta}$ (SE) & P-value & $\hat{\beta}$ (SE) & P-value & $\hat{\beta}$ (SE) & P-value & $\hat{\beta}$ (SE) & P-value & $\hat{\beta}$ (SE) & P-value\\
\midrule
\endfirsthead
\multicolumn{11}{l}{\textit{Table~\ref{tab:realdata} continued}}\\
\toprule
Variable & \multicolumn{2}{c}{CC} & \multicolumn{2}{c}{SMC-DBL} & \multicolumn{2}{c}{Std SMC-FCS} & \multicolumn{2}{c}{IRO} & \multicolumn{2}{c}{Mean Imp}\\
\cmidrule(lr){2-3} \cmidrule(lr){4-5} \cmidrule(lr){6-7} \cmidrule(lr){8-9} \cmidrule(lr){10-11}
 & $\hat{\beta}$ (SE) & $p$ & $\hat{\beta}$ (SE) & $p$ & $\hat{\beta}$ (SE) & $p$ & $\hat{\beta}$ (SE) & $p$ & $\hat{\beta}$ (SE) & $p$\\
\midrule
\endhead
\midrule
\multicolumn{11}{r}{\textit{Continued on next page}}\\
\endfoot
\bottomrule
\endlastfoot
\multicolumn{11}{l}{\textit{Clinical covariates}}\\
Surgery (yes vs no) & -1.252 (0.179) & $<$0.001 & -1.117 (0.123) & $<$0.001 & -1.100 (0.124) & $<$0.001 & -1.000 (0.116) & $<$0.001 & -1.115 (0.123) & $<$0.001\\
Late stage (III-IV vs I-II) & 0.772 (0.257) & 0.003 & 0.929 (0.173) & $<$0.001 & 0.955 (0.176) & $<$0.001 & 0.586 (0.163) & $<$0.001 & 0.930 (0.173) & $<$0.001\\
Age (per SD) & 0.222 (0.054) & $<$0.001 & 0.259 (0.038) & $<$0.001 & 0.257 (0.038) & $<$0.001 & 0.194 (0.041) & $<$0.001 & 0.258 (0.038) & $<$0.001\\
Sex (male vs female) & -0.280 (0.104) & 0.007 & -0.328 (0.071) & $<$0.001 & -0.328 (0.071) & $<$0.001 & -0.200 (0.074) & 0.007 & -0.326 (0.071) & $<$0.001\\
Smoking (per level) & 0.205 (0.083) & 0.014 & 0.232 (0.059) & $<$0.001 & 0.232 (0.059) & $<$0.001 & 0.122 (0.061) & 0.047 & 0.231 (0.059) & $<$0.001\\
Radiation (yes vs no) & -0.093 (0.128) & 0.471 & -0.172 (0.088) & 0.051 & -0.175 (0.088) & 0.048 & 0.000 (0.100) & 1.000 & -0.175 (0.088) & 0.048\\
Chemotherapy (yes vs no) & -0.040 (0.217) & 0.854 & -0.025 (0.141) & 0.859 & -0.041 (0.142) & 0.775 & 0.000 (0.143) & 1.000 & -0.031 (0.141) & 0.829\\
PC1 (per SD) & 0.057 (0.057) & 0.320 & 0.058 (0.036) & 0.107 & 0.058 (0.036) & 0.111 & 0.022 (0.042) & 0.596 & 0.059 (0.036) & 0.104\\
PC2 (per SD) & -0.032 (0.059) & 0.594 & -0.005 (0.035) & 0.887 & -0.004 (0.035) & 0.906 & 0.000 (0.038) & 1.000 & -0.005 (0.035) & 0.897\\
PC3 (per SD) & 0.035 (0.050) & 0.484 & -0.016 (0.034) & 0.645 & -0.016 (0.034) & 0.639 & 0.000 (0.040) & 1.000 & -0.016 (0.034) & 0.651\\
\midrule
\multicolumn{11}{l}{\textit{SNPs (ordered by SMC-DBL $p$-value)}}\\
rs11022690 (C) & -0.129 (0.070) & 0.066 & -0.113 (0.049) & 0.022 & -0.112 (0.049) & 0.024 & -0.054 (0.052) & 0.300 & -0.114 (0.049) & 0.022\\
rs72743477 (G) & -0.196 (0.090) & 0.029 & -0.126 (0.061) & 0.038 & -0.127 (0.061) & 0.037 & -0.021 (0.062) & 0.732 & -0.126 (0.061) & 0.038\\
rs72490631 (C) & 0.084 (0.084) & 0.316 & -0.107 (0.060) & 0.074 & -0.108 (0.060) & 0.072 & -0.017 (0.063) & 0.789 & -0.108 (0.060) & 0.071\\
rs196025 (A) & -0.120 (0.074) & 0.104 & -0.086 (0.052) & 0.099 & -0.089 (0.053) & 0.091 & -0.028 (0.055) & 0.602 & -0.090 (0.052) & 0.083\\
rs4233430 (T) & -0.082 (0.099) & 0.409 & -0.108 (0.067) & 0.105 & -0.109 (0.067) & 0.105 & -0.002 (0.068) & 0.982 & -0.112 (0.067) & 0.095\\
rs17032590 (G) & -0.096 (0.090) & 0.290 & -0.102 (0.063) & 0.107 & -0.099 (0.063) & 0.117 & -0.009 (0.066) & 0.890 & -0.105 (0.064) & 0.100\\
rs77972916 (A) & -0.290 (0.163) & 0.075 & -0.180 (0.117) & 0.126 & -0.182 (0.117) & 0.121 & 0.000 (0.110) & 1.000 & -0.183 (0.117) & 0.118\\
rs9660890 (C) & -0.086 (0.089) & 0.338 & -0.089 (0.061) & 0.144 & -0.089 (0.061) & 0.146 & -0.003 (0.063) & 0.958 & -0.090 (0.061) & 0.141\\
rs7196853 (C) & 0.088 (0.134) & 0.512 & -0.124 (0.089) & 0.162 & -0.127 (0.088) & 0.147 & -0.000 (0.088) & 1.000 & -0.126 (0.088) & 0.152\\
rs72811372 (A) & -0.103 (0.145) & 0.478 & -0.129 (0.096) & 0.177 & -0.137 (0.096) & 0.153 & 0.000 (0.092) & 1.000 & -0.132 (0.095) & 0.164\\
rs11118683 (T) & 0.067 (0.074) & 0.363 & 0.067 (0.051) & 0.187 & 0.067 (0.051) & 0.186 & 0.002 (0.055) & 0.971 & 0.067 (0.051) & 0.185\\
rs17387279 (G) & 0.137 (0.090) & 0.128 & 0.077 (0.063) & 0.218 & 0.079 (0.063) & 0.209 & 0.000 (0.069) & 1.000 & 0.077 (0.063) & 0.222\\
rs28517513 (T) & 0.050 (0.090) & 0.582 & 0.067 (0.061) & 0.271 & 0.064 (0.061) & 0.295 & 0.000 (0.065) & 1.000 & 0.068 (0.061) & 0.260\\
rs1956028 (C) & 0.177 (0.112) & 0.116 & 0.081 (0.075) & 0.284 & 0.082 (0.075) & 0.278 & 0.000 (0.079) & 1.000 & 0.080 (0.075) & 0.290\\
rs11227223 (T) & 0.017 (0.163) & 0.917 & 0.119 (0.112) & 0.287 & 0.121 (0.112) & 0.279 & 0.000 (0.115) & 1.000 & 0.119 (0.112) & 0.289\\
rs4948502 (C) & 0.040 (0.073) & 0.586 & -0.049 (0.052) & 0.342 & -0.048 (0.052) & 0.348 & 0.000 (0.053) & 1.000 & -0.049 (0.052) & 0.342\\
rs10878300 (T) & 0.078 (0.091) & 0.396 & 0.060 (0.066) & 0.362 & 0.059 (0.066) & 0.374 & 0.000 (0.070) & 1.000 & 0.061 (0.066) & 0.354\\
rs4233284 (G) & 0.040 (0.078) & 0.610 & 0.043 (0.052) & 0.409 & 0.041 (0.052) & 0.433 & 0.000 (0.056) & 1.000 & 0.044 (0.052) & 0.401\\
rs10173269 (G) & -0.060 (0.077) & 0.431 & -0.042 (0.051) & 0.411 & -0.046 (0.051) & 0.368 & -0.000 (0.053) & 0.999 & -0.044 (0.052) & 0.393\\
rs12466981 (T) & 0.016 (0.081) & 0.847 & -0.042 (0.055) & 0.439 & -0.044 (0.055) & 0.426 & 0.000 (0.059) & 1.000 & -0.043 (0.055) & 0.428\\
rs10184235 (G) & 0.158 (0.089) & 0.075 & 0.044 (0.057) & 0.443 & 0.046 (0.057) & 0.421 & 0.000 (0.061) & 1.000 & 0.044 (0.058) & 0.444\\
rs1528624 (G) & -0.024 (0.071) & 0.739 & 0.035 (0.049) & 0.467 & 0.039 (0.049) & 0.425 & 0.000 (0.052) & 1.000 & 0.036 (0.049) & 0.461\\
rs4800410 (C) & 0.076 (0.078) & 0.326 & 0.034 (0.053) & 0.527 & 0.038 (0.055) & 0.482 & 0.000 (0.056) & 1.000 & 0.035 (0.053) & 0.511\\
rs6006399 (G) & -0.104 (0.137) & 0.450 & 0.051 (0.085) & 0.547 & 0.050 (0.085) & 0.552 & 0.000 (0.088) & 1.000 & 0.052 (0.085) & 0.543\\
rs7443323 (A) & -0.007 (0.080) & 0.935 & 0.034 (0.056) & 0.549 & 0.031 (0.056) & 0.580 & 0.000 (0.061) & 1.000 & 0.034 (0.056) & 0.551\\
rs4886509 (C) & -0.015 (0.080) & 0.849 & 0.031 (0.054) & 0.562 & 0.034 (0.054) & 0.531 & 0.000 (0.056) & 1.000 & 0.035 (0.054) & 0.521\\
rs17821105 (T) & -0.035 (0.101) & 0.732 & -0.039 (0.069) & 0.575 & -0.040 (0.069) & 0.560 & 0.000 (0.071) & 1.000 & -0.039 (0.070) & 0.577\\
rs11745375 (T) & 0.125 (0.073) & 0.088 & 0.023 (0.050) & 0.641 & 0.024 (0.050) & 0.627 & 0.000 (0.054) & 1.000 & 0.023 (0.050) & 0.645\\
rs4889526 (A) & -0.144 (0.080) & 0.073 & -0.024 (0.053) & 0.643 & -0.026 (0.053) & 0.621 & 0.000 (0.056) & 1.000 & -0.025 (0.053) & 0.634\\
rs9819463 (C) & -0.093 (0.091) & 0.306 & 0.028 (0.062) & 0.654 & 0.029 (0.062) & 0.634 & 0.000 (0.068) & 1.000 & 0.029 (0.062) & 0.643\\
rs10987386 (T) & 0.076 (0.104) & 0.464 & 0.030 (0.067) & 0.655 & 0.030 (0.068) & 0.657 & 0.000 (0.071) & 1.000 & 0.030 (0.067) & 0.655\\
rs1548029 (C) & 0.024 (0.074) & 0.748 & 0.023 (0.051) & 0.657 & 0.023 (0.051) & 0.654 & 0.000 (0.055) & 1.000 & 0.023 (0.051) & 0.653\\
rs12313454 (G) & -0.007 (0.109) & 0.952 & -0.033 (0.076) & 0.670 & -0.033 (0.077) & 0.664 & 0.000 (0.080) & 1.000 & -0.033 (0.077) & 0.670\\
rs12571363 (A) & -0.154 (0.122) & 0.207 & -0.035 (0.083) & 0.671 & -0.035 (0.083) & 0.673 & 0.000 (0.088) & 1.000 & -0.036 (0.083) & 0.666\\
rs58453446 (C) & 0.013 (0.078) & 0.869 & 0.021 (0.053) & 0.696 & 0.020 (0.053) & 0.704 & 0.000 (0.057) & 1.000 & 0.021 (0.053) & 0.696\\
rs2794359 (A) & 0.168 (0.146) & 0.252 & 0.039 (0.100) & 0.698 & 0.046 (0.101) & 0.651 & 0.000 (0.104) & 1.000 & 0.045 (0.100) & 0.652\\
rs55993676 (T) & -0.041 (0.083) & 0.624 & 0.019 (0.056) & 0.738 & 0.017 (0.056) & 0.759 & 0.000 (0.062) & 1.000 & 0.018 (0.056) & 0.749\\
rs1561073 (T) & 0.040 (0.083) & 0.634 & 0.017 (0.056) & 0.753 & 0.019 (0.056) & 0.735 & 0.000 (0.061) & 1.000 & 0.020 (0.056) & 0.718\\
rs659398 (T) & 0.126 (0.084) & 0.134 & 0.018 (0.058) & 0.756 & 0.020 (0.058) & 0.724 & 0.000 (0.060) & 1.000 & 0.018 (0.058) & 0.756\\
rs7927422 (C) & -0.053 (0.071) & 0.450 & -0.015 (0.050) & 0.767 & -0.016 (0.050) & 0.757 & 0.000 (0.053) & 1.000 & -0.014 (0.050) & 0.781\\
rs4444235 (T) & -0.010 (0.073) & 0.896 & 0.014 (0.049) & 0.771 & 0.014 (0.049) & 0.769 & 0.000 (0.055) & 1.000 & 0.014 (0.049) & 0.771\\
rs28719767 (C) & -0.109 (0.080) & 0.172 & -0.015 (0.055) & 0.782 & -0.014 (0.055) & 0.795 & 0.000 (0.059) & 1.000 & -0.017 (0.056) & 0.759\\
rs9393688 (T) & -0.133 (0.083) & 0.111 & -0.015 (0.058) & 0.795 & -0.016 (0.058) & 0.778 & 0.000 (0.060) & 1.000 & -0.013 (0.059) & 0.827\\
rs1425794 (T) & 0.007 (0.072) & 0.920 & 0.012 (0.049) & 0.810 & 0.011 (0.049) & 0.819 & 0.000 (0.053) & 1.000 & 0.012 (0.049) & 0.810\\
rs2125126 (A) & 0.108 (0.102) & 0.291 & -0.012 (0.069) & 0.860 & -0.010 (0.069) & 0.881 & 0.000 (0.073) & 1.000 & -0.011 (0.070) & 0.869\\
rs11887136 (A) & 0.197 (0.113) & 0.081 & 0.014 (0.078) & 0.860 & 0.017 (0.078) & 0.826 & 0.000 (0.081) & 1.000 & 0.014 (0.078) & 0.860\\
rs207672 (T) & 0.064 (0.076) & 0.399 & 0.007 (0.053) & 0.901 & 0.004 (0.053) & 0.936 & 0.000 (0.055) & 1.000 & 0.007 (0.053) & 0.896\\
rs113638840 (G) & -0.003 (0.086) & 0.969 & -0.007 (0.061) & 0.911 & -0.005 (0.060) & 0.929 & 0.000 (0.064) & 1.000 & -0.007 (0.060) & 0.904\\
rs750739 (G) & 0.076 (0.105) & 0.470 & 0.008 (0.076) & 0.917 & 0.006 (0.076) & 0.933 & 0.000 (0.077) & 1.000 & 0.005 (0.076) & 0.947\\
rs4636990 (G) & -0.063 (0.070) & 0.372 & 0.004 (0.048) & 0.925 & 0.005 (0.048) & 0.913 & 0.000 (0.050) & 1.000 & 0.005 (0.048) & 0.921\\
rs1179500 (C) & -0.049 (0.083) & 0.554 & -0.000 (0.057) & 0.996 & -0.000 (0.057) & 0.993 & 0.000 (0.059) & 1.000 & 0.001 (0.057) & 0.984\\
\end{longtable}
\endgroup
\end{landscape}

\section{Discussion} \label{sec:disc}

This paper extends Cox regression inference to settings with a diverging number of covariates subject to missingness, addressing an increasingly common challenge in modern biomedical and genomic survival studies. The central methodological strategy is  modular; we combine Cox-compatible semiparametric multiple imputation through substantive-model-compatible fully conditional specification (SMC-FCS) with rejection sampling, followed by debiased lasso inference within each completed dataset and final combination through Rubin's rules. This separation of imputation and inference allows the method to preserve the semiparametric structure of the Cox model while maintaining valid statistical inference in moderately high-dimensional settings.

\vspace{1ex}

A key advantage of the proposed approach is that it avoids reliance on restrictive parametric models for the conditional distribution of missing covariates. Instead, compatibility with the Cox proportional hazards model is enforced through acceptance probabilities derived from the Cox likelihood contribution. This substantially reduces the risk of incompatibility-induced bias that may arise when standard fully conditional imputation models are used with semiparametric survival outcomes. The resulting procedure remains computationally feasible while accommodating censoring, complex covariate dependence structures, and moderate dimensionality.

\vspace{1ex}

The theoretical justification integrates several complementary components. First, the procedure builds on recent asymptotic theory for debiased lasso estimators in Cox regression under diverging dimensionality. Second, it leverages large-sample properties of compatible multiple imputation procedures for semiparametric survival models. Third, Rubin's variance decomposition provides a principled mechanism for propagating uncertainty arising from both imputation and penalized estimation. Together, these ingredients support asymptotically valid estimation and inference under regularity conditions when the number of covariates increases with sample size.

\vspace{1ex}

Several limitations and future research directions remain. First, the current method assumes that covariates are missing at random (MAR). Although MAR is standard in the multiple imputation literature, it may be unrealistic in many biomedical studies where missingness depends on unobserved patient characteristics, latent disease severity, or unmeasured clinical decisions. Future work should therefore investigate extensions under missing-not-at-random (MNAR) mechanisms. Possible directions include selection models, pattern-mixture formulations, shared-parameter models, or sensitivity-analysis  tailored to high-dimensional survival settings. Developing semiparametric compatibility conditions under MNAR mechanisms would be particularly important for ensuring robust inference in observational cancer studies and electronic health record data.

\vspace{1ex}

Second, the current theory primarily focuses on the diverging-dimensional regime where the number of covariates grows with sample size but remains smaller than the effective sample size. Extending the method to high or ultra-high-dimensional settings with $p \gg n$ represents an important next step. Such extensions would require additional methodological developments at both the imputation and inference stages. For example, imputing high-dimensional covariates may require sparse graphical models, low-rank latent factor structures, or regularized conditional generators to stabilize estimation. On the inference side, stronger sparsity assumptions, improved precision matrix estimation, or sample-splitting and cross-fitting strategies may be needed to maintain valid debiasing properties under extreme dimensionality. The interaction between penalized survival estimation and multiply imputed high-dimensional data also raises new theoretical questions regarding uniform convergence, error propagation across imputations, and post-selection validity.

\vspace{1ex}

Additional future directions include extending the method to time-dependent covariates, competing risks, recurrent events, and clustered or distributed survival data. Incorporating machine learning--based imputation engines while preserving substantive-model compatibility is another promising direction. More broadly, the proposed method provides a foundation for integrating principled missing-data handling with modern high-dimensional survival inference, with potential applications in genomics, radiomics, multimodal cancer studies, and large-scale electronic health record analyses.

\bibliographystyle{plainnat} 
\bibliography{references} 


 \section{Appendix: Proofs, Demographics of BLCS and Additional Simulation Results}
\label{app:first}

This section presents  technical proofs of Theorems 1--4, descriptive analysis of BLCS and additional numerical evidence supporting the main results of the paper.
 
\subsection{Proofs}
\label{app:proofs}

\setcounter{equation}{0}
\renewcommand{\theequation}{S.\arabic{equation}}

  \begin{proof}[Proof of Theorem~\ref{thm:ergodicity}]
Fix $\beta$ satisfying $\|\beta-\beta^0\|_1\le r$. Since the analysis is under the triangular-array regime $p=p_n=O(n^\kappa)$, all constants below are required to be uniform in $(n,p_n)$. By Assumptions~\ref{ass:bounded} and~\ref{ass:linearpred},
\begin{equation}
\label{eq:linear-predictor-bound}
|X_i^\top\beta|
\le
|X_i^\top\beta^0|
+
\|X_i\|_\infty\|\beta-\beta^0\|_1
\le
K_1+Kr
\equiv B
\end{equation}
uniformly in $i$ and $n$. Thus $e^{-B}\le \exp(X_i^\top\beta)\le e^B$ for all $i$.

Let $\widehat\Lambda_0$ denote the Breslow estimator based on the current completed data,
\[
\widehat\Lambda_0(t)
=
\int_0^t
\frac{dN(u)}
{\sum_{i=1}^n I(Y_i\ge u)\exp(X_i^\top\beta)},
\]
where $N(u)=\sum_{i=1}^n I(Y_i\le u,\Delta_i=1)$. By Assumption~\ref{ass:followup}, $\Pr(Y\ge\tau)=\pi_0>0$. Therefore, by the law of large numbers, with probability tending to one,
\begin{equation}
\label{eq:riskset-event}
n^{-1}\sum_{i=1}^n I(Y_i\ge\tau)
\ge
\pi_0/2 .
\end{equation}
This bound depends only on the risk indicators and is unaffected by the fact that $p_n$ diverges. On the event in~\eqref{eq:riskset-event}, for every $u\le\tau$, monotonicity of the risk sets and~\eqref{eq:linear-predictor-bound} imply
\begin{equation}
\label{eq:denominator-lower-bound}
\sum_{i=1}^n I(Y_i\ge u)\exp(X_i^\top\beta)
\ge
\sum_{i=1}^n I(Y_i\ge\tau)\exp(X_i^\top\beta)
\ge
n e^{-B}\pi_0/2 .
\end{equation}
Combining the Breslow representation and~\eqref{eq:denominator-lower-bound}, and using $N(\tau)\le n$, gives
\begin{equation}
\label{eq:breslow-bound}
0\le \widehat\Lambda_0(\tau)
\le
\frac{N(\tau)}
{n e^{-B}\pi_0/2}
\le
\frac{2e^B}{\pi_0}
\equiv C_\Lambda .
\end{equation}
Similarly, at any observed failure time $u\le\tau$,
\[
0\le
\Delta\widehat\Lambda_0(u)
=
\frac{dN(u)}
{\sum_{i=1}^n I(Y_i\ge u)\exp(X_i^\top\beta)}
\le
\frac{\sum_{i=1}^n I(Y_i\ge u)}
{e^{-B}\sum_{i=1}^n I(Y_i\ge u)}
=
e^B
\equiv C_\Delta .
\]
Thus the Breslow cumulative hazard and its jumps are bounded by constants independent of $(n,p_n)$, with probability tending to one.

Now condition on the high-probability event on which~\eqref{eq:riskset-event} holds. The inner SMC-FCS chain updates only the missing covariate entries. Its state space is
\[
\mathcal X_n
=
\prod_{(i,j):R_{ij}=0}
\mathcal X_{ij},
\]
with dimension
\[
d_n
=
|\{(i,j):R_{ij}=0\}|.
\]
By Assumption~\ref{ass:missing}, $d_n\le \bar r\,n p_n$, so the dimension of the Markov chain is allowed to diverge with $n$.

For a missing coordinate $(i,j)$, write $x$ for a candidate value of $X_{ij}$ and $X_i(x)$ for the full covariate vector obtained by replacing $X_{ij}$ by $x$. In the Metropolis--Hastings ratio, the covariate-dependent part of the Cox full-likelihood contribution is
\[
g_i(x;\beta,\widehat\Lambda_0)
=
\exp\left\{
\Delta_i X_i(x)^\top\beta
-
\widehat\Lambda_0(Y_i)
\exp(X_i(x)^\top\beta)
\right\}.
\]
The jump factor $\{\Delta\widehat\Lambda_0(Y_i)\}^{\Delta_i}$ is omitted because it does not depend on $x$ and cancels in the Metropolis--Hastings ratio. By~\eqref{eq:linear-predictor-bound} and~\eqref{eq:breslow-bound}, there exist constants $0<c_f\le C_f<\infty$, independent of $(n,p_n,i,j)$, such that
\begin{equation}
\label{eq:g-bounds}
c_f
\le
g_i(x;\beta,\widehat\Lambda_0)
\le
C_f
\end{equation}
uniformly over all admissible $x$, $i$, and $j$.

By Assumption~\ref{ass:working}, the proposal density or mass function $q_{ij}$ satisfies $c_\theta\le q_{ij}(x)\le C_\theta$ uniformly over its support, with constants independent of $(n,p_n,i,j)$. The one-coordinate target distribution is proportional to
\[
q_{ij}(x)g_i(x;\beta,\widehat\Lambda_0).
\]
Using~\eqref{eq:g-bounds} and the bounds on $q_{ij}$, the proposal and target are uniformly equivalent. Hence there exists
\[
\varepsilon_a
=
\frac{c_\theta c_f}{C_\theta C_f}
\in(0,1),
\]
independent of $(n,p_n,i,j)$, such that, for every current value $x_{\mathrm{curr}}$,
\begin{equation}
\label{eq:coordinate-minorisation}
K_{ij}(x_{\mathrm{curr}},\cdot)
\ge
\varepsilon_a\,
\pi_{ij}(\cdot\mid X_{-(ij)};\beta,\widehat\Lambda_0),
\end{equation}
where $\pi_{ij}(\cdot\mid X_{-(ij)};\beta,\widehat\Lambda_0)$ denotes the normalized one-coordinate conditional distribution induced by the above target density. The same argument applies to discrete coordinates, with sums replacing integrals.

A full SMC-FCS sweep is the composition of the $d_n$ one-coordinate kernels. Repeatedly applying~\eqref{eq:coordinate-minorisation} over the $d_n$ missing coordinates yields the Doeblin minorisation  \citep{meyn2009markov}
\[
P_{\beta,\widehat\Lambda_0}(x,\cdot)
\ge
\varepsilon_a^{d_n}
\mu_{\beta,\widehat\Lambda_0}(\cdot)
\]
for some probability measure $\mu_{\beta,\widehat\Lambda_0}$ on $\mathcal X_n$. Therefore $P_{\beta,\widehat\Lambda_0}$ is uniformly ergodic for each $(n,p_n)$ and admits a unique stationary distribution $\nu_{\beta,\widehat\Lambda_0}$. Moreover, by the resulting Doeblin bound \citep{meyn2009markov},
\begin{equation}
\label{eq:tv-geometric-bound}
\left\|
P_{\beta,\widehat\Lambda_0}^S(x_0,\cdot)
-
\nu_{\beta,\widehat\Lambda_0}
\right\|_{\mathrm{TV}}
\le
(1-\varepsilon_a^{d_n})^S .
\end{equation}
Thus the theorem holds with $C_0=1$ and $\rho_{d_n}=1-\varepsilon_a^{d_n}$; allowing a generic constant $C_0<\infty$ gives the stated form.

The key difference from the fixed-dimensional case is that $d_n$ may diverge as $n$ increases. Consequently, the sweep-level minorisation constant $\varepsilon_a^{d_n}$ may become small, and the geometric rate $\rho_{d_n}=1-\varepsilon_a^{d_n}$ may approach one. Therefore, to achieve target total-variation accuracy $\delta>0$, it is sufficient that
\[
S
\ge
\frac{\log(C_0/\delta)}{\varepsilon_a^{d_n}},
\]
because $C_0(1-\varepsilon_a^{d_n})^S\le \delta$ follows from $\log(1-u)\le -u$ for $u\in(0,1)$. This completes the proof.
\end{proof}


 \begin{proof}[Proof of Theorem~\ref{thm:rate}]
We work under the triangular-array regime of Assumption~\ref{ass:dimension}. Thus, for each $n$, the observed data are
\[
\{(Y_{ni},\Delta_{ni},X_{ni},R_{ni}):i=1,\ldots,n\},
\]
where $X_{ni}\in\mathbb R^{p_n}$, $p_n\to\infty$, $p_n=O(n^\kappa)$ for some $\kappa\in(0,1)$, and all constants appearing below are uniform in $(n,p_n)$. To simplify notation, we suppress the subscript $n$ and write $(Y_i,\Delta_i,X_i,R_i)$ and $p$ when no confusion can arise.

Let $\theta=(\beta,\Lambda_0)$ and let $\theta^\ast=(\beta^\ast,\Lambda_0^\ast)$ denote the fixed point of the population IRO map at dimension $p_n$. Define
\[
\|\theta-\theta^\ast\|_{\mathcal H}
=
\|\beta-\beta^\ast\|_1
+
\|\Lambda_0-\Lambda_0^\ast\|_\infty .
\]
Let $\theta^{(\ell)}=(\widetilde\beta^{(\ell)},\widehat\Lambda_0^{(\ell)})$ denote the outer IRO iterate at iteration $\ell$. The sample outer update is
\[
\theta^{(\ell+1)}
=
\widehat{\mathcal M}_n(\theta^{(\ell)}),
\]
where $\widehat{\mathcal M}_n$ is the empirical map obtained by running $S_{\mathrm{in}}$ inner SMC-FCS sweeps conditional on $\theta^{(\ell)}$, refitting the Cox lasso, and updating the Breslow baseline cumulative hazard. The corresponding population map is denoted by $\mathcal M$ and is defined under the stationary SMC-FCS law at the same dimension $p_n$.

By Assumption~\ref{ass:contract}, there exist a neighbourhood $\mathcal N(\theta^\ast)$, a constant $\lambda_\ast\in(0,1)$, and a deterministic sequence $a_n=o(1)$ such that, for every $\theta\in\mathcal N(\theta^\ast)$,
\begin{equation}
\label{eq:approx-contract-rate}
\|\mathcal M(\theta)-\theta^\ast\|_{\mathcal H}
\le
\lambda_\ast
\|\theta-\theta^\ast\|_{\mathcal H}
+
a_n .
\end{equation}

We next show that $\widehat{\mathcal M}_n$ uniformly approximates $\mathcal M$ over $\mathcal N(\theta^\ast)$. In the triangular-array setting, this requires controlling both the finite-sweep error of the inner Markov chain and the empirical error accumulated over $p_n$ coordinates. By Theorem~\ref{thm:ergodicity}, after $S_{\mathrm{in}}$ inner sweeps, the total-variation error of the inner SMC-FCS chain is bounded by $C_0\rho_{d_n}^{S_{\mathrm{in}}}$, where $\rho_{d_n}=1-\varepsilon_a^{d_n}$ and $d_n=|\{(i,j):R_{ij}=0\}|$ may diverge with $(n,p_n)$. Under Assumption~\ref{ass:Sin},
\[
C_0\rho_{d_n}^{S_{\mathrm{in}}}
=
o\{(np_n)^{-1}\}.
\]
Thus the finite-sweep approximation error is negligible uniformly over the local neighbourhood and over the $p_n$ coordinates. It remains to control the regression and baseline components of the empirical map.

For the regression component, write the finite-sweep completed-data objective as
\[
Q_n(b;\theta)
=
\ell_n\{b;\widetilde X(\theta)\}
+
\lambda_n\|b\|_1,
\]
where $\widetilde X(\theta)$ is the completed dataset generated by the finite inner SMC-FCS chain conditional on $\theta$. Let $\nu_\theta$ denote the stationary SMC-FCS distribution of the missing covariates conditional on the observed data and $\theta$, and define
\[
Q(b;\theta)
=
\E_{\nu_\theta}
\bigl[
\ell_n^{\mathrm{comp}}(b;\widetilde X)
\bigr]
+
\lambda_n\|b\|_1 .
\]
Then $\widehat{\mathcal M}_{n,\beta}(\theta)=\argmin_b Q_n(b;\theta)$ and $\mathcal M_\beta(\theta)=\argmin_b Q(b;\theta)$.

Under Assumptions~\ref{ass:dimension}--\ref{ass:missing}, the completed covariates are uniformly bounded, the risk-set denominators are bounded away from zero on $[0,\tau]$, and the Cox partial likelihood is locally Lipschitz in $b$ with constants independent of $(n,p_n)$. For $b,b'\in\mathcal N_\beta(\beta^\ast)$,
\[
\bigl|
\ell_n(b;\widetilde X)
-
\ell_n(b';\widetilde X)
\bigr|
\le
C\|b-b'\|_1
\]
with probability tending to one, uniformly over completed datasets generated in $\mathcal N(\theta^\ast)$.

The diverging dimension enters through the empirical score process. Since the covariates are bounded and $p_n=O(n^\kappa)$, a coordinatewise Bernstein inequality followed by a union bound over $j=1,\ldots,p_n$ gives
\[
\Pr\left\{
\sup_{b\in\mathcal N_\beta(\beta^\ast),\,\theta\in\mathcal N(\theta^\ast)}
\|
\dot Q_n(b;\theta)-\dot Q(b;\theta)
\|_\infty
>
C\sqrt{\frac{\log p_n}{n}}+t_n
\right\}
\to0,
\]
where $t_n=o(1)$ collects the finite-sweep approximation error. Therefore
\[
\sup_{b\in\mathcal N_\beta(\beta^\ast),\,\theta\in\mathcal N(\theta^\ast)}
\|
\dot Q_n(b;\theta)-\dot Q(b;\theta)
\|_\infty
=
O_p\!\left(\sqrt{\frac{\log p_n}{n}}\right)
+
o_p(1)
=
o_p(1),
\]
because $\log p_n=O(\log n)$ and $\log p_n/n\to0$ under Assumption~\ref{ass:dimension}. Integrating the score bound over local line segments in $\mathcal N_\beta(\beta^\ast)$ gives
\[
\sup_{b\in\mathcal N_\beta(\beta^\ast),\,\theta\in\mathcal N(\theta^\ast)}
\bigl|
Q_n(b;\theta)-Q(b;\theta)
\bigr|
=
o_p(1).
\]

The Cox lasso objective is convex in $b$, and Assumption~\ref{ass:eigen} gives local curvature of the population risk around $\mathcal M_\beta(\theta)$. Thus, for some $\kappa_0>0$ independent of $(n,p_n)$,
\[
Q(b;\theta)-Q\{\mathcal M_\beta(\theta);\theta\}
\ge
\kappa_0
\|b-\mathcal M_\beta(\theta)\|_2^2
-
o(1),
\]
uniformly for $b$ in the local sparse neighbourhood and $\theta\in\mathcal N(\theta^\ast)$. By convex argmin stability,
\[
\sup_{\theta\in\mathcal N(\theta^\ast)}
\|
\widehat{\mathcal M}_{n,\beta}(\theta)
-
\mathcal M_\beta(\theta)
\|_2
=
o_p(1).
\]
The iterates are restricted to the local sparse neighbourhood by $\sup_{n,\ell}\E\|\widetilde\beta^{(\ell)}\|_1<\infty$ and the sparsity assumptions. Hence the same convergence holds in $\ell_1$ norm:
\begin{equation}
\label{eq:beta-map-rate}
\sup_{\theta\in\mathcal N(\theta^\ast)}
\|
\widehat{\mathcal M}_{n,\beta}(\theta)
-
\mathcal M_\beta(\theta)
\|_1
=
o_p(1).
\end{equation}

For the baseline component, for a completed dataset and coefficient vector $b$, write
\[
\widehat\Lambda_b(t)
=
\sum_{r:Y_r\le t,\Delta_r=1}
\{S_n^{(0)}(Y_r;b)\}^{-1},
\qquad
S_n^{(0)}(u;b)
=
\sum_{i:Y_i\ge u}
\exp(X_i^\top b).
\]
Let $b,b'\in\mathcal N_\beta(\beta^\ast)$. By Assumptions~\ref{ass:bounded} and~\ref{ass:linearpred}, there exists $B<\infty$, independent of $(n,p_n)$, such that $|X_i^\top b|\le B$ and $|X_i^\top b'|\le B$ uniformly in $i$ with probability tending to one. Hence $e^{-B}\le\exp(X_i^\top b),\exp(X_i^\top b')\le e^B$. By Assumption~\ref{ass:followup}, the risk sets remain nondegenerate, so for some $c_0>0$,
\begin{equation}
\label{eq:riskset-lower}
S_n^{(0)}(u;b)
\ge
nc_0
\end{equation}
uniformly over $u\le\tau$ and $b\in\mathcal N_\beta(\beta^\ast)$ with probability tending to one.

By the mean-value theorem and bounded covariates,
\[
\bigl|
\exp(X_i^\top b)-\exp(X_i^\top b')
\bigr|
\le
K e^B\|b-b'\|_1 .
\]
Therefore
\[
\bigl|
S_n^{(0)}(u;b)-S_n^{(0)}(u;b')
\bigr|
\le
nK e^B\|b-b'\|_1 .
\]
Using~\eqref{eq:riskset-lower},
\[
\begin{aligned}
\bigl|
\widehat\Lambda_b(t)-\widehat\Lambda_{b'}(t)
\bigr|
&\le
\sum_{r:Y_r\le t,\Delta_r=1}
\frac{
|S_n^{(0)}(Y_r;b')-S_n^{(0)}(Y_r;b)|
}{
S_n^{(0)}(Y_r;b)S_n^{(0)}(Y_r;b')
} \\
&\le
\sum_{r:Y_r\le t,\Delta_r=1}
\frac{
nK e^B\|b-b'\|_1
}{
n^2c_0^2
}
\le
\frac{K e^B}{c_0^2}
\|b-b'\|_1 .
\end{aligned}
\]
Taking the supremum over $t\le\tau$ yields
\begin{equation}
\label{eq:breslow-lipschitz-rate}
\|
\widehat\Lambda_b-\widehat\Lambda_{b'}
\|_\infty
\le
C\|b-b'\|_1
\end{equation}
with probability tending to one, where $C=K e^B/c_0^2$ is independent of $(n,p_n)$.

The empirical risk-set processes entering the Breslow estimator are bounded. Applying the same triangular-array concentration argument, with a union bound over the $p_n$ coordinates and uniformity over the local sparse neighbourhood, gives
\[
\sup_{\theta\in\mathcal N(\theta^\ast)}
\|
\widehat{\mathcal M}_{n,\Lambda}(\theta)
-
\mathcal M_\Lambda(\theta)
\|_\infty
\le
C
\sup_{\theta\in\mathcal N(\theta^\ast)}
\|
\widehat{\mathcal M}_{n,\beta}(\theta)
-
\mathcal M_\beta(\theta)
\|_1
+
o_p(1).
\]
Together with~\eqref{eq:beta-map-rate}, this implies
\begin{equation}
\label{eq:lambda-map-rate}
\sup_{\theta\in\mathcal N(\theta^\ast)}
\|
\widehat{\mathcal M}_{n,\Lambda}(\theta)
-
\mathcal M_\Lambda(\theta)
\|_\infty
=
o_p(1).
\end{equation}

Combining~\eqref{eq:beta-map-rate} and~\eqref{eq:lambda-map-rate},
\begin{equation}
\label{eq:full-map-rate}
\sup_{\theta\in\mathcal N(\theta^\ast)}
\|
\widehat{\mathcal M}_n(\theta)
-
\mathcal M(\theta)
\|_{\mathcal H}
=
o_p(1).
\end{equation}

Now consider one outer update. On the high-probability event that $\theta^{(\ell)}\in\mathcal N(\theta^\ast)$,
\[
\begin{aligned}
\|\theta^{(\ell+1)}-\theta^\ast\|_{\mathcal H}
&=
\|
\widehat{\mathcal M}_n(\theta^{(\ell)})
-
\theta^\ast
\|_{\mathcal H} \\
&\le
\|
\widehat{\mathcal M}_n(\theta^{(\ell)})
-
\mathcal M(\theta^{(\ell)})
\|_{\mathcal H}
+
\|
\mathcal M(\theta^{(\ell)})
-
\theta^\ast
\|_{\mathcal H}.
\end{aligned}
\]
Using~\eqref{eq:approx-contract-rate} and~\eqref{eq:full-map-rate},
\begin{equation}
\label{eq:recursion-rate}
\|\theta^{(\ell+1)}-\theta^\ast\|_{\mathcal H}
\le
\lambda_\ast
\|\theta^{(\ell)}-\theta^\ast\|_{\mathcal H}
+
a_n
+
o_p(1).
\end{equation}
Iterating~\eqref{eq:recursion-rate},
\[
\|\theta^{(\ell)}-\theta^\ast\|_{\mathcal H}
\le
\lambda_\ast^\ell
\|\theta^{(0)}-\theta^\ast\|_{\mathcal H}
+
\sum_{q=0}^{\ell-1}
\lambda_\ast^q\{a_n+o_p(1)\}.
\]
Since $\sum_{q=0}^{\ell-1}\lambda_\ast^q\le(1-\lambda_\ast)^{-1}$,
\begin{equation}
\label{eq:iterated-rate}
\|\theta^{(\ell)}-\theta^\ast\|_{\mathcal H}
\le
\lambda_\ast^\ell
\|\theta^{(0)}-\theta^\ast\|_{\mathcal H}
+
\frac{a_n+o_p(1)}{1-\lambda_\ast}.
\end{equation}

By Assumption~\ref{ass:Lout}, $\ell\ge c_\ell\log n/\log(1/\lambda_\ast)$ with $c_\ell>1$, so $\lambda_\ast^\ell=O(n^{-c_\ell})=o(1)$. The bounded-moment condition on $\widetilde\beta^{(\ell)}$, together with the boundedness of the Breslow estimator established in Theorem~\ref{thm:ergodicity}, implies $\|\theta^{(0)}-\theta^\ast\|_{\mathcal H}=O_p(1)$. Hence the initialization term in~\eqref{eq:iterated-rate} is $o_p(1)$. Since $a_n=o(1)$, the second term in~\eqref{eq:iterated-rate} is also $o_p(1)$. Therefore $\|\theta^{(\ell)}-\theta^\ast\|_{\mathcal H}=o_p(1)$.

Finally, substituting $\theta^{(\ell)}=(\widetilde\beta^{(\ell)},\widehat\Lambda_0^{(\ell)})$ and $\theta^\ast=(\beta^\ast,\Lambda_0^\ast)$ into the definition of $\|\cdot\|_{\mathcal H}$ gives
\[
\|\widetilde\beta^{(\ell)}-\beta^\ast\|_1
+
\|\widehat\Lambda_0^{(\ell)}-\Lambda_0^\ast\|_\infty
=
o_p(1).
\]
This proves the theorem.
\end{proof}

   \begin{proof}[Proof of Theorem~\ref{thm:lassorate}]

We work under the triangular-array regime in Assumption~\ref{ass:dimension}. For each $n$, the completed covariate matrix has dimension $p=p_n$, where $p_n\to\infty$ and $p_n=O(n^\kappa)$ for some fixed $\kappa\in(0,1)$. All stochastic orders below are understood along this sequence, and all constants are uniform in $(n,p_n)$.

At stationarity, $\widetilde\beta^{(\ell)}$ is computed from a completed dataset generated by the inner SMC-FCS chain conditional on the limiting parameters $(\beta^\ast,\Lambda_0^\ast)$. Let $\nu_{\theta^\ast}$ denote the corresponding stationary SMC-FCS distribution. The finite-sweep approximation error enters through the difference between the actual completed-data distribution produced after $S_{\mathrm{in}}$ inner sweeps and $\nu_{\theta^\ast}$. By Theorem~\ref{thm:ergodicity},
\[
\left\|
P_{\beta^\ast,\Lambda_0^\ast}^{S_{\mathrm{in}}}(x_0,\cdot)
-
\nu_{\theta^\ast}
\right\|_{\mathrm{TV}}
\le
C_0\rho_{d_n}^{S_{\mathrm{in}}},
\]
where $\rho_{d_n}=1-\varepsilon_a^{d_n}$ and $d_n=|\{(i,j):R_{ij}=0\}|$ may diverge with $(n,p_n)$. Assumption~\ref{ass:Sin} ensures
\[
C_0\rho_{d_n}^{S_{\mathrm{in}}}
=
o\{(np_n)^{-1}\}.
\]
Hence the finite-sweep discrepancy is asymptotically negligible uniformly over the $p_n$ coordinates and is absorbed into the $o_p(1)$ terms below. Consequently, the completed-data objective behaves asymptotically as if generated from the stationary imputation law.

Let
\[
L_n(b)
=
\ell_n^{\mathrm{comp}}(b)
\]
denote the completed-data empirical negative Cox partial log-likelihood, and define
\[
\widetilde\beta^{(\ell)}
=
\argmin_{b\in\mathbb R^{p_n}}
\{L_n(b)+\lambda_n\|b\|_1\}.
\]
Let
\[
\Delta
=
\widetilde\beta^{(\ell)}-\beta^\ast,
\qquad
S
=
\supp(\beta^\ast).
\]
Under Assumption~\ref{ass:sparsity}, the support size $|S|=s_0=s_{0n}$ is allowed to diverge with $n$ but satisfies the required growth conditions.

By optimality of $\widetilde\beta^{(\ell)}$,
\begin{equation}
\label{eq:lasso-basic-optimality}
L_n(\beta^\ast+\Delta)+\lambda_n\|\beta^\ast+\Delta\|_1
\le
L_n(\beta^\ast)+\lambda_n\|\beta^\ast\|_1 .
\end{equation}
Rearranging~\eqref{eq:lasso-basic-optimality} and using the standard support decomposition gives
\begin{equation}
\label{eq:basic-ineq}
L_n(\beta^\ast+\Delta)-L_n(\beta^\ast)
\le
\lambda_n(\|\Delta_S\|_1-\|\Delta_{S^c}\|_1).
\end{equation}

Next expand the empirical loss around $\beta^\ast$:
\begin{equation}
\label{eq:taylor-lasso}
L_n(\beta^\ast+\Delta)-L_n(\beta^\ast)
=
\dot L_n(\beta^\ast)^\top\Delta
+
R_n(\Delta),
\end{equation}
where $R_n(\Delta)$ is the empirical curvature remainder.

The key high-dimensional step is to control the score uniformly over the diverging coordinates. Under Assumptions~\ref{ass:dimension}, \ref{ass:bounded}, \ref{ass:eigen}, \ref{ass:ridge}, and~\ref{ass:sparsity}, the coordinatewise completed-data Cox scores are uniformly sub-exponential under the stationary completed-data law. Therefore, Bernstein's inequality together with a union bound over the $p_n$ coordinates yields
\begin{equation}
\label{eq:score-bound}
\|\dot L_n(\beta^\ast)\|_\infty
=
O_p\!\left(
\sqrt{\frac{\log p_n}{n}}
\right).
\end{equation}
The factor $\log p_n$ arises from controlling the maximum over the diverging coordinates. Since $p_n=O(n^\kappa)$, we have $\log p_n=O(\log n)$ and $\log p_n/n\to0$.

Choose
\[
\lambda_n
\asymp
\sqrt{\frac{\log p_n}{n}}
\]
with a sufficiently large constant. Then~\eqref{eq:score-bound} implies
\[
\|\dot L_n(\beta^\ast)\|_\infty
\le
\lambda_n/2
\]
with probability tending to one. Consequently,
\begin{equation}
\label{eq:score-inner-bound}
|\dot L_n(\beta^\ast)^\top\Delta|
\le
\|\dot L_n(\beta^\ast)\|_\infty\|\Delta\|_1
\le
\frac{\lambda_n}{2}\|\Delta\|_1 .
\end{equation}

Combining~\eqref{eq:basic-ineq}, \eqref{eq:taylor-lasso}, and~\eqref{eq:score-inner-bound}, with probability tending to one,
\[
R_n(\Delta)
\le
\frac{\lambda_n}{2}\|\Delta\|_1
+
\lambda_n(\|\Delta_S\|_1-\|\Delta_{S^c}\|_1).
\]
Since the completed-data Cox partial likelihood is locally convex under Assumption~\ref{ass:eigen}, $R_n(\Delta)\ge0$. Hence
\[
0
\le
\frac{\lambda_n}{2}(\|\Delta_S\|_1+\|\Delta_{S^c}\|_1)
+
\lambda_n(\|\Delta_S\|_1-\|\Delta_{S^c}\|_1),
\]
which yields the cone condition
\begin{equation}
\label{eq:cone-condition}
\|\Delta_{S^c}\|_1
\le
3\|\Delta_S\|_1 .
\end{equation}

We next establish the restricted-curvature bound under diverging dimension. Assumptions~\ref{ass:eigen} and~\ref{ass:sparsity}, together with bounded covariates and the rate condition on $p_n$, imply that the empirical Cox Hessian concentrates around its population counterpart uniformly over sparse directions satisfying the cone condition~\eqref{eq:cone-condition}. Therefore there exists $\kappa_0>0$, independent of $(n,p_n)$, such that
\begin{equation}
\label{eq:restricted-curvature}
R_n(\Delta)
\ge
\kappa_0\|\Delta\|_2^2
\end{equation}
with probability tending to one. Unlike the fixed-dimensional case, this step requires concentration over sparse subsets whose number grows with $p_n$, and the conditions in Assumption~\ref{ass:sparsity} ensure that this complexity remains asymptotically manageable.

Using~\eqref{eq:basic-ineq}, \eqref{eq:taylor-lasso}, \eqref{eq:score-inner-bound}, \eqref{eq:cone-condition}, and~\eqref{eq:restricted-curvature}, we obtain
\[
\kappa_0\|\Delta\|_2^2
\le
\frac{3}{2}\lambda_n\|\Delta_S\|_1
\le
\frac{3}{2}\lambda_n\sqrt{s_0}\|\Delta\|_2 .
\]
Therefore,
\begin{equation}
\label{eq:l2-rate}
\|\Delta\|_2
=
O_p(\sqrt{s_0}\lambda_n).
\end{equation}

By the cone condition~\eqref{eq:cone-condition},
\[
\|\Delta\|_1
\le
4\|\Delta_S\|_1
\le
4\sqrt{s_0}\|\Delta\|_2 .
\]
Combining this with~\eqref{eq:l2-rate} gives
\[
\|\Delta\|_1
=
O_p(s_0\lambda_n).
\]
Since $\lambda_n\asymp\sqrt{\log p_n/n}$,
\[
\|\widetilde\beta^{(\ell)}-\beta^\ast\|_1
=
O_p\!\left(
s_0\sqrt{\frac{\log p_n}{n}}
\right).
\]

If additionally
\[
\|\beta^\ast-\beta^0\|_1
=
O\!\left(
s_0\sqrt{\frac{\log p_n}{n}}
\right),
\]
then the triangle inequality yields
\[
\|\widetilde\beta^{(\ell)}-\beta^0\|_1
\le
\|\widetilde\beta^{(\ell)}-\beta^\ast\|_1
+
\|\beta^\ast-\beta^0\|_1
=
O_p\!\left(
s_0\sqrt{\frac{\log p_n}{n}}
\right).
\]

Finally, Theorem~\ref{thm:rate} implies
\[
\|\widehat\Lambda_0^{(\ell)}-\Lambda_0^\ast\|_\infty
=
o_p(1),
\]
under the same triangular-array regime and finite-sweep approximation control. Therefore the stated rates hold uniformly over the diverging-dimensional sequence $p_n=O(n^\kappa)$ and account for both the stationary imputation variability and the finite-sweep approximation error from the inner SMC-FCS chain. This completes the proof.

\end{proof}

   \begin{proof}[Proof of Theorem~\ref{thm:clt}]

We work under the triangular-array regime in Assumption~\ref{ass:dimension}. For each $n$, the observed data are
\[
\{(Y_{ni},\Delta_{ni},X_{ni},R_{ni}):i=1,\ldots,n\},
\]
where $X_{ni}\in\mathbb R^{p_n}$, $p_n\to\infty$, and $p_n=O(n^\kappa)$ for some $\kappa\in(0,1)$. All stochastic orders and probability statements below are understood along this sequence, and all constants are uniform in $(n,p_n)$.

Fix a loading vector $c_n\in\mathbb R^{p_n}$ satisfying $\|c_n\|_2=1$ and $\|c_n\|_1\le a^\ast<\infty$. The proof proceeds in the following steps.

Step 1: asymptotic linear expansion for one completed dataset.

Let $\widetilde X_n^{(m)}$ denote the $m$th retained completed dataset produced after the burn-in period of the inner SMC-FCS chain. Let $\nu_{\theta^\ast,n}$ denote the stationary SMC-FCS law corresponding to the limiting IRO fixed point $\theta^\ast=(\beta^\ast,\Lambda_0^\ast)$ at dimension $p_n$.

By Theorem~\ref{thm:ergodicity},
\[
\left\|
P_{\beta^\ast,\Lambda_0^\ast}^{S_{\mathrm{in}}}(x_0,\cdot)
-
\nu_{\theta^\ast,n}
\right\|_{\mathrm{TV}}
\le
C_0\rho_{d_n}^{S_{\mathrm{in}}},
\]
where $d_n=|\{(i,j):R_{ij}=0\}|$ and $\rho_{d_n}=1-\varepsilon_a^{d_n}$. Assumption~\ref{ass:Sin} implies
\begin{equation}
\label{eq:tv-small}
C_0\rho_{d_n}^{S_{\mathrm{in}}}
=
o\{(np_n)^{-1}\}.
\end{equation}
Hence all score, Hessian, and empirical-process quantities computed from the retained completed datasets differ from their stationary-law analogues by $o_p(n^{-1/2})$ uniformly over the $p_n$ coordinates. Therefore it suffices to analyze the stationary completed-data law.

For the $m$th completed dataset, define the Cox lasso estimator
\[
\widehat\beta_{L_1}^{(m)}
=
\argmin_{b\in\mathbb R^{p_n}}
\Bigl\{
\ell_n^{(m)}(b)
+
\lambda_n\|b\|_1
\Bigr\},
\]
where $\ell_n^{(m)}$ is the completed-data negative Cox partial log-likelihood normalized by $n$.

By Theorem~\ref{thm:lassorate},
\begin{equation}
\label{eq:l1-rate-clt}
\|
\widehat\beta_{L_1}^{(m)}
-
\beta^0
\|_1
=
O_p
\!\left(
s_0\sqrt{\frac{\log p_n}{n}}
\right).
\end{equation}

The debiased estimator is
\[
\widehat\beta_{\mathrm{db}}^{(m)}
=
\widehat\beta_{L_1}^{(m)}
-
\widehat\Theta^{(m)}
\dot\ell_n^{(m)}
\bigl(
\widehat\beta_{L_1}^{(m)}
\bigr),
\]
where $\widehat\Theta^{(m)}$ is the nodewise inverse approximation.

Apply a Taylor expansion of the score around $\beta^0$:
\begin{equation}
\label{eq:score-taylor}
\dot\ell_n^{(m)}
\bigl(
\widehat\beta_{L_1}^{(m)}
\bigr)
=
\dot\ell_n^{(m)}(\beta^0)
+
\ddot\ell_n^{(m)}(\beta^0)
\bigl(
\widehat\beta_{L_1}^{(m)}-\beta^0
\bigr)
+
R_n^{(m)},
\end{equation}
where
\[
R_n^{(m)}
=
\left[
\ddot\ell_n^{(m)}(\widetilde\beta_n)
-
\ddot\ell_n^{(m)}(\beta^0)
\right]
\bigl(
\widehat\beta_{L_1}^{(m)}-\beta^0
\bigr)
\]
for some intermediate point $\widetilde\beta_n$ on the segment joining $\widehat\beta_{L_1}^{(m)}$ and $\beta^0$.

Under Assumptions~\ref{ass:bounded} and~\ref{ass:eigen}, the Cox Hessian is locally Lipschitz:
\[
\|
\ddot\ell_n^{(m)}(b_1)
-
\ddot\ell_n^{(m)}(b_2)
\|_\infty
\le
C\|b_1-b_2\|_1
\]
uniformly over sparse neighbourhoods of $\beta^0$. Combining this with~\eqref{eq:l1-rate-clt},
\[
\|R_n^{(m)}\|_\infty
\le
C
\|
\widehat\beta_{L_1}^{(m)}-\beta^0
\|_1^2
=
O_p
\!\left(
s_0^2
\frac{\log p_n}{n}
\right).
\]
Assumption~\ref{ass:sparsity} implies
\[
s_0^2
\frac{\log p_n}{\sqrt n}
\to0,
\]
and therefore
\begin{equation}
\label{eq:taylor-remainder}
\sqrt n\,
\|R_n^{(m)}\|_\infty
=
o_p(1).
\end{equation}

Substituting~\eqref{eq:score-taylor} into the definition of the debiased estimator gives
\[
\widehat\beta_{\mathrm{db}}^{(m)}
-
\beta^0
=
-
\widehat\Theta^{(m)}
\dot\ell_n^{(m)}(\beta^0)
+
r_n^{(m)},
\]
where
\[
r_n^{(m)}
=
\bigl[
I_{p_n}
-
\widehat\Theta^{(m)}
\ddot\ell_n^{(m)}(\beta^0)
\bigr]
\bigl(
\widehat\beta_{L_1}^{(m)}-\beta^0
\bigr)
-
\widehat\Theta^{(m)}R_n^{(m)}.
\]

The nodewise regression construction together with
\[
\gamma_n
\asymp
\|\Theta_{\beta^0}\|_{1,1}
s_0\lambda_n
\]
implies
\begin{equation}
\label{eq:theta-approx}
\|
I_{p_n}
-
\widehat\Theta^{(m)}
\ddot\ell_n^{(m)}(\beta^0)
\|_\infty
=
O_p(\gamma_n).
\end{equation}

Combining~\eqref{eq:l1-rate-clt} and~\eqref{eq:theta-approx},
\[
\left\|
\bigl[
I_{p_n}
-
\widehat\Theta^{(m)}
\ddot\ell_n^{(m)}(\beta^0)
\bigr]
\bigl(
\widehat\beta_{L_1}^{(m)}-\beta^0
\bigr)
\right\|_\infty
=
O_p
\!\left(
\gamma_n
s_0
\sqrt{\frac{\log p_n}{n}}
\right).
\]
Using the definition of $\gamma_n$ and Assumption~\ref{ass:sparsity},
\[
\sqrt n\,
\gamma_n
s_0
\sqrt{\frac{\log p_n}{n}}
=
o(1).
\]
Similarly, by~\eqref{eq:taylor-remainder},
\[
\sqrt n\,
\|
\widehat\Theta^{(m)}R_n^{(m)}
\|_\infty
=
o_p(1).
\]
Therefore,
\begin{equation}
\label{eq:remainder-small}
\sqrt n\,
\|r_n^{(m)}\|_\infty
=
o_p(1).
\end{equation}

Since $\|c_n\|_1\le a^\ast$,
\[
\sqrt n\,
|c_n^\top r_n^{(m)}|
\le
\|c_n\|_1
\sqrt n\,
\|r_n^{(m)}\|_\infty
=
o_p(1).
\]
Hence
\begin{equation}
\label{eq:asymp-linear}
\sqrt n\,
c_n^\top
\bigl(
\widehat\beta_{\mathrm{db}}^{(m)}
-
\beta^0
\bigr)
=
-
\sqrt n\,
c_n^\top
\widehat\Theta^{(m)}
\dot\ell_n^{(m)}(\beta^0)
+
o_p(1).
\end{equation}

The Cox score admits the martingale representation
\[
\dot\ell_n^{(m)}(\beta^0)
=
\frac1n
\sum_{i=1}^n
\psi_{ni}^{(m)}
+
o_p(n^{-1/2}),
\]
where $\psi_{ni}^{(m)}$ is the completed-data efficient score contribution. Therefore
\begin{equation}
\label{eq:linear-expansion}
\sqrt n\,
c_n^\top
\bigl(
\widehat\beta_{\mathrm{db}}^{(m)}
-
\beta^0
\bigr)
=
\frac1{\sqrt n}
\sum_{i=1}^n
\phi_{ni}^{(m)}
+
o_p(1),
\end{equation}
where
\[
\phi_{ni}^{(m)}
=
-
c_n^\top
\Theta_{\beta^0}
\psi_{ni}^{(m)}.
\]

Step 2: asymptotic normality of the Rubin-pooled estimator.

Define
\[
\bar\beta_M
=
\frac1M
\sum_{m=1}^M
\widehat\beta_{\mathrm{db}}^{(m)}.
\]
Averaging the asymptotic linear expansion in~\eqref{eq:linear-expansion} gives
\[
\sqrt n\,
c_n^\top
(\bar\beta_M-\beta^0)
=
\frac1M
\sum_{m=1}^M
\frac1{\sqrt n}
\sum_{i=1}^n
\phi_{ni}^{(m)}
+
o_p(1).
\]

Let
\[
\mathcal F_{nt}
=
\sigma
\Bigl(
\{
N_{ni}(u),\,Y_{ni}(u),\,X_{ni}^{\mathrm{obs}},\,R_{ni},\,
\widetilde X_{ni}^{(m)}
:
0\le u\le t,\ 1\le i\le n
\}
\Bigr)
\]
denote the filtration generated by the counting processes, at-risk processes, observed covariates, missingness indicators, and completed covariates up to time $t$ in the $n$th triangular array. Under the counting-process formulation of the Cox model, the score process admits a martingale representation with respect to $\{\mathcal F_{nt}:0\le t\le\tau\}$. Consequently,
\(
\frac1{\sqrt n}
\sum_{i=1}^n
\phi_{ni}^{(m)}
\)
is a triangular-array martingale sum.

By Assumptions~\ref{ass:bounded} and~\ref{ass:followup}, together with $\|c_n\|_1\le a^\ast<\infty$, the variables $\phi_{ni}^{(m)}$ have uniformly bounded second moments. Moreover, the predictable quadratic variation satisfies
\[
\frac1n
\sum_{i=1}^n
\E
\Bigl[
(\phi_{ni}^{(m)})^2
\,\big|\,
\mathcal F_{n,t-}
\Bigr]
\overset{p}{\longrightarrow}
c_n^\top
V_{\mathrm{com}}
c_n .
\]

Further, for every $\epsilon>0$,
\[
\frac1n
\sum_{i=1}^n
\E
\left[
(\phi_{ni}^{(m)})^2
I
\left\{
|\phi_{ni}^{(m)}|
>
\epsilon\sqrt n
\right\}
\right]
\to0 ,
\]
so the Lindeberg condition holds. Therefore, the triangular-array martingale central limit theorem yields asymptotic normality of the complete-data contribution.

The properness and congeniality assumptions for the SMC-FCS procedure imply that the additional variability induced by imputing missing covariates contributes the Rubin missing-information component
\[
\left(
1+\frac1M
\right)
V_{\mathrm{mis}}
\]
to the limiting variance. Hence
\begin{equation}
\label{eq:finiteM-clt}
 \frac{\sqrt n\,
c_n^\top
(\bar\beta_M-\beta^0)} {
\sqrt{c_n^\top
\left[
V_{\mathrm{com}}
+
\left(
1+\frac1M
\right)
V_{\mathrm{mis}}
\right]
c_n}}
\overset{d}{\longrightarrow}
\mathcal N(0,1).
\end{equation}

Step 3: consistency of Rubin's variance estimator.

Rubin's total variance estimator is
\[
\widehat V_{\mathrm{total}}
=
\widehat V_W
+
\left(
1+\frac1M
\right)
\widehat V_B,
\]
where $\widehat V_W$ is the average within-imputation variance and $\widehat V_B$ is the between-imputation covariance matrix. By properness and congeniality,
\[
n\,
c_n^\top
\widehat V_W
c_n  \Big/ c_n^\top
V_{\mathrm{com}}
c_n
\overset{p}{\longrightarrow}
1 
\]
and
\[
n\,
c_n^\top
\widehat V_B
c_n  \Big/ c_n^\top
V_{\mathrm{mis}}
c_n 
\overset{p}{\longrightarrow}
 1.
\]
Hence
\begin{equation}
\label{eq:totalvar-convergence}
n\,
c_n^\top
\widehat V_{\mathrm{total}}
c_n  \Big/  c_n^\top
\left[
V_{\mathrm{com}}
+
\left(
1+\frac1M
\right)
V_{\mathrm{mis}}
\right]
c_n
\overset{p}{\longrightarrow}
 1.
\end{equation}

In fact, combining~\eqref{eq:finiteM-clt} and~\eqref{eq:totalvar-convergence}, Slutsky's theorem gives 
\[
\frac{
\sqrt n\,
c_n^\top
(\bar\beta_M-\beta^0)
}{
\sqrt{
n\,
c_n^\top
\widehat V_{\mathrm{total}}
c_n
}
}
\overset{d}{\longrightarrow}
\mathcal N(0,1).
\]

This completes the proof.
\end{proof}
 
\subsection{Additional Data and Simulation Results}
\label{app:data}

\renewcommand{\thetable}{S.\arabic{table}}
\setcounter{table}{0}

\begingroup\footnotesize
\setlength{\tabcolsep}{4pt}
\renewcommand{\arraystretch}{1.0}
\begin{longtable}{l c c c c}
\caption{Baseline characteristics of the NSCLC cohort (N = 977), stratified by vital status at the end of follow-up. Continuous variables are summarised as mean (SD); categorical variables as n (\%). Group comparisons use the two-sample $t$-test for continuous variables and the $\chi^2$ test (or Fisher's exact test when an expected cell count is below five) for categorical variables. Missingness is reported on a separate row for every variable. The 51 SNPs are coded on the additive 0/1/2 dosage scale, with genotype counts and missingness shown individually.}\\
\label{tab:table1}\\
\toprule
Characteristic & Overall (N=977) & Alive (n=206) & Deceased (n=771) & $p$-value \\
\midrule
\endfirsthead
\multicolumn{5}{l}{\emph{Table \ref{tab:table1} continued from previous page}}\\
\toprule
Characteristic & Overall (N=977) & Alive (n=206) & Deceased (n=771) & $p$-value \\
\midrule
\endhead
\midrule \multicolumn{5}{r}{\emph{continued on next page}} \\
\endfoot
\bottomrule
\endlastfoot
Age (years), mean (SD) & 65.48 (10.56) & 61.28 (10.83) & 66.60 (10.20) & $<$0.001 \\
Female sex, n (\%) & 502 (51.4\%) & 76 (36.9\%) & 426 (55.3\%) & $<$0.001 \\
Ever-smoker, n (\%) & 886 (90.7\%) & 176 (85.4\%) & 710 (92.1\%) & 0.005 \\
Late-stage disease, n (\%) & 454 (46.5\%) & 117 (56.8\%) & 337 (43.7\%) & 0.001 \\
Radiotherapy, n (\%) & 271 (27.7\%) & 44 (21.4\%) & 227 (29.4\%) & 0.024 \\
\quad Missing, n (\%) & 4 (0.4\%) & 0 (0.0\%) & 4 (0.5\%) &  \\
Chemotherapy, n (\%) & 497 (50.9\%) & 91 (44.2\%) & 406 (52.7\%) & 0.031 \\
\quad Missing, n (\%) & 4 (0.4\%) & 0 (0.0\%) & 4 (0.5\%) &  \\
Surgery, n (\%) & 564 (57.7\%) & 142 (68.9\%) & 422 (54.7\%) & $<$0.001 \\
\quad Missing, n (\%) & 4 (0.4\%) & 0 (0.0\%) & 4 (0.5\%) &  \\
Principal component 1, mean (SD) & 0.00 (0.02) & 0.00 (0.02) & 0.00 (0.02) & 0.504 \\
Principal component 2, mean (SD) & 0.00 (0.02) & 0.00 (0.02) & 0.00 (0.02) & 0.368 \\
Principal component 3, mean (SD) & 0.00 (0.02) & 0.00 (0.02) & 0.00 (0.02) & 0.633 \\
\midrule
\multicolumn{5}{l}{\emph{SNP markers (additive 0/1/2 dosage, n = 51)}} \\
\textit{rs2794359 (A)}, n (\%) &  &  &  & 0.778 \\
\quad genotype 0 & 801 (82.0\%) & 172 (83.5\%) & 629 (81.6\%) &  \\
\quad genotype 1 & 130 (13.3\%) & 25 (12.1\%) & 105 (13.6\%) &  \\
\quad genotype 2 & 2 (0.2\%) & 0 (0.0\%) & 2 (0.3\%) &  \\
\quad Missing, n (\%) & 44 (4.5\%) & 9 (4.4\%) & 35 (4.5\%) &  \\
\textit{rs9660890 (C)}, n (\%) &  &  &  & 0.429 \\
\quad genotype 0 & 600 (61.4\%) & 118 (57.3\%) & 482 (62.5\%) &  \\
\quad genotype 1 & 308 (31.5\%) & 72 (35.0\%) & 236 (30.6\%) &  \\
\quad genotype 2 & 48 (4.9\%) & 10 (4.9\%) & 38 (4.9\%) &  \\
\quad Missing, n (\%) & 21 (2.1\%) & 6 (2.9\%) & 15 (1.9\%) &  \\
\textit{rs4233284 (G)}, n (\%) &  &  &  & 0.753 \\
\quad genotype 0 & 451 (46.2\%) & 91 (44.2\%) & 360 (46.7\%) &  \\
\quad genotype 1 & 410 (42.0\%) & 89 (43.2\%) & 321 (41.6\%) &  \\
\quad genotype 2 & 113 (11.6\%) & 26 (12.6\%) & 87 (11.3\%) &  \\
\quad Missing, n (\%) & 3 (0.3\%) & 0 (0.0\%) & 3 (0.4\%) &  \\
\textit{rs2125126 (A)}, n (\%) &  &  &  & 0.770 \\
\quad genotype 0 & 708 (72.5\%) & 149 (72.3\%) & 559 (72.5\%) &  \\
\quad genotype 1 & 233 (23.8\%) & 49 (23.8\%) & 184 (23.9\%) &  \\
\quad genotype 2 & 26 (2.7\%) & 7 (3.4\%) & 19 (2.5\%) &  \\
\quad Missing, n (\%) & 10 (1.0\%) & 1 (0.5\%) & 9 (1.2\%) &  \\
\textit{rs4233430 (T)}, n (\%) &  &  &  & 0.476 \\
\quad genotype 0 & 642 (65.7\%) & 142 (68.9\%) & 500 (64.9\%) &  \\
\quad genotype 1 & 286 (29.3\%) & 54 (26.2\%) & 232 (30.1\%) &  \\
\quad genotype 2 & 28 (2.9\%) & 7 (3.4\%) & 21 (2.7\%) &  \\
\quad Missing, n (\%) & 21 (2.1\%) & 3 (1.5\%) & 18 (2.3\%) &  \\
\textit{rs11118683 (T)}, n (\%) &  &  &  & 0.960 \\
\quad genotype 0 & 338 (34.6\%) & 70 (34.0\%) & 268 (34.8\%) &  \\
\quad genotype 1 & 462 (47.3\%) & 99 (48.1\%) & 363 (47.1\%) &  \\
\quad genotype 2 & 160 (16.4\%) & 33 (16.0\%) & 127 (16.5\%) &  \\
\quad Missing, n (\%) & 17 (1.7\%) & 4 (1.9\%) & 13 (1.7\%) &  \\
\textit{rs1179500 (C)}, n (\%) &  &  &  & 0.913 \\
\quad genotype 0 & 501 (51.3\%) & 107 (51.9\%) & 394 (51.1\%) &  \\
\quad genotype 1 & 391 (40.0\%) & 82 (39.8\%) & 309 (40.1\%) &  \\
\quad genotype 2 & 68 (7.0\%) & 13 (6.3\%) & 55 (7.1\%) &  \\
\quad Missing, n (\%) & 17 (1.7\%) & 4 (1.9\%) & 13 (1.7\%) &  \\
\textit{rs12466981 (T)}, n (\%) &  &  &  & 0.232 \\
\quad genotype 0 & 528 (54.0\%) & 113 (54.9\%) & 415 (53.8\%) &  \\
\quad genotype 1 & 367 (37.6\%) & 82 (39.8\%) & 285 (37.0\%) &  \\
\quad genotype 2 & 74 (7.6\%) & 10 (4.9\%) & 64 (8.3\%) &  \\
\quad Missing, n (\%) & 8 (0.8\%) & 1 (0.5\%) & 7 (0.9\%) &  \\
\textit{rs77972916 (A)}, n (\%) &  &  &  & 0.192 \\
\quad genotype 0 & 867 (88.7\%) & 183 (88.8\%) & 684 (88.7\%) &  \\
\quad genotype 1 & 93 (9.5\%) & 20 (9.7\%) & 73 (9.5\%) &  \\
\quad genotype 2 & 3 (0.3\%) & 2 (1.0\%) & 1 (0.1\%) &  \\
\quad Missing, n (\%) & 14 (1.4\%) & 1 (0.5\%) & 13 (1.7\%) &  \\
\textit{rs17032590 (G)}, n (\%) &  &  &  & 0.237 \\
\quad genotype 0 & 605 (61.9\%) & 121 (58.7\%) & 484 (62.8\%) &  \\
\quad genotype 1 & 301 (30.8\%) & 75 (36.4\%) & 226 (29.3\%) &  \\
\quad genotype 2 & 38 (3.9\%) & 8 (3.9\%) & 30 (3.9\%) &  \\
\quad Missing, n (\%) & 33 (3.4\%) & 2 (1.0\%) & 31 (4.0\%) &  \\
\textit{rs10173269 (G)}, n (\%) &  &  &  & 0.260 \\
\quad genotype 0 & 260 (26.6\%) & 55 (26.7\%) & 205 (26.6\%) &  \\
\quad genotype 1 & 471 (48.2\%) & 93 (45.1\%) & 378 (49.0\%) &  \\
\quad genotype 2 & 205 (21.0\%) & 52 (25.2\%) & 153 (19.8\%) &  \\
\quad Missing, n (\%) & 41 (4.2\%) & 6 (2.9\%) & 35 (4.5\%) &  \\
\textit{rs11887136 (A)}, n (\%) &  &  &  & 0.897 \\
\quad genotype 0 & 746 (76.4\%) & 158 (76.7\%) & 588 (76.3\%) &  \\
\quad genotype 1 & 204 (20.9\%) & 41 (19.9\%) & 163 (21.1\%) &  \\
\quad genotype 2 & 13 (1.3\%) & 2 (1.0\%) & 11 (1.4\%) &  \\
\quad Missing, n (\%) & 14 (1.4\%) & 5 (2.4\%) & 9 (1.2\%) &  \\
\textit{rs10184235 (G)}, n (\%) &  &  &  & 0.721 \\
\quad genotype 0 & 552 (56.5\%) & 120 (58.3\%) & 432 (56.0\%) &  \\
\quad genotype 1 & 363 (37.2\%) & 76 (36.9\%) & 287 (37.2\%) &  \\
\quad genotype 2 & 58 (5.9\%) & 10 (4.9\%) & 48 (6.2\%) &  \\
\quad Missing, n (\%) & 4 (0.4\%) & 0 (0.0\%) & 4 (0.5\%) &  \\
\textit{rs207672 (T)}, n (\%) &  &  &  & 0.208 \\
\quad genotype 0 & 410 (42.0\%) & 96 (46.6\%) & 314 (40.7\%) &  \\
\quad genotype 1 & 426 (43.6\%) & 88 (42.7\%) & 338 (43.8\%) &  \\
\quad genotype 2 & 113 (11.6\%) & 18 (8.7\%) & 95 (12.3\%) &  \\
\quad Missing, n (\%) & 28 (2.9\%) & 4 (1.9\%) & 24 (3.1\%) &  \\
\textit{rs1561073 (T)}, n (\%) &  &  &  & 0.924 \\
\quad genotype 0 & 515 (52.7\%) & 112 (54.4\%) & 403 (52.3\%) &  \\
\quad genotype 1 & 362 (37.1\%) & 75 (36.4\%) & 287 (37.2\%) &  \\
\quad genotype 2 & 77 (7.9\%) & 17 (8.3\%) & 60 (7.8\%) &  \\
\quad Missing, n (\%) & 23 (2.4\%) & 2 (1.0\%) & 21 (2.7\%) &  \\
\textit{rs9819463 (C)}, n (\%) &  &  &  & 0.835 \\
\quad genotype 0 & 612 (62.6\%) & 127 (61.7\%) & 485 (62.9\%) &  \\
\quad genotype 1 & 318 (32.5\%) & 68 (33.0\%) & 250 (32.4\%) &  \\
\quad genotype 2 & 45 (4.6\%) & 11 (5.3\%) & 34 (4.4\%) &  \\
\quad Missing, n (\%) & 2 (0.2\%) & 0 (0.0\%) & 2 (0.3\%) &  \\
\textit{rs17821105 (T)}, n (\%) &  &  &  & 0.314 \\
\quad genotype 0 & 698 (71.4\%) & 139 (67.5\%) & 559 (72.5\%) &  \\
\quad genotype 1 & 240 (24.6\%) & 58 (28.2\%) & 182 (23.6\%) &  \\
\quad genotype 2 & 27 (2.8\%) & 7 (3.4\%) & 20 (2.6\%) &  \\
\quad Missing, n (\%) & 12 (1.2\%) & 2 (1.0\%) & 10 (1.3\%) &  \\
\textit{rs113638840 (G)}, n (\%) &  &  &  & 0.352 \\
\quad genotype 0 & 520 (53.2\%) & 98 (47.6\%) & 422 (54.7\%) &  \\
\quad genotype 1 & 362 (37.1\%) & 80 (38.8\%) & 282 (36.6\%) &  \\
\quad genotype 2 & 52 (5.3\%) & 13 (6.3\%) & 39 (5.1\%) &  \\
\quad Missing, n (\%) & 43 (4.4\%) & 15 (7.3\%) & 28 (3.6\%) &  \\
\textit{rs72811372 (A)}, n (\%) &  &  &  & 0.008 \\
\quad genotype 0 & 801 (82.0\%) & 162 (78.6\%) & 639 (82.9\%) &  \\
\quad genotype 1 & 169 (17.3\%) & 40 (19.4\%) & 129 (16.7\%) &  \\
\quad genotype 2 & 3 (0.3\%) & 3 (1.5\%) & 0 (0.0\%) &  \\
\quad Missing, n (\%) & 4 (0.4\%) & 1 (0.5\%) & 3 (0.4\%) &  \\
\textit{rs7443323 (A)}, n (\%) &  &  &  & 0.350 \\
\quad genotype 0 & 537 (55.0\%) & 119 (57.8\%) & 418 (54.2\%) &  \\
\quad genotype 1 & 365 (37.4\%) & 75 (36.4\%) & 290 (37.6\%) &  \\
\quad genotype 2 & 68 (7.0\%) & 10 (4.9\%) & 58 (7.5\%) &  \\
\quad Missing, n (\%) & 7 (0.7\%) & 2 (1.0\%) & 5 (0.6\%) &  \\
\textit{rs11745375 (T)}, n (\%) &  &  &  & 0.786 \\
\quad genotype 0 & 277 (28.4\%) & 56 (27.2\%) & 221 (28.7\%) &  \\
\quad genotype 1 & 496 (50.8\%) & 109 (52.9\%) & 387 (50.2\%) &  \\
\quad genotype 2 & 204 (20.9\%) & 41 (19.9\%) & 163 (21.1\%) &  \\
\textit{rs55993676 (T)}, n (\%) &  &  &  & 0.275 \\
\quad genotype 0 & 504 (51.6\%) & 112 (54.4\%) & 392 (50.8\%) &  \\
\quad genotype 1 & 389 (39.8\%) & 82 (39.8\%) & 307 (39.8\%) &  \\
\quad genotype 2 & 66 (6.8\%) & 9 (4.4\%) & 57 (7.4\%) &  \\
\quad Missing, n (\%) & 18 (1.8\%) & 3 (1.5\%) & 15 (1.9\%) &  \\
\textit{rs196025 (A)}, n (\%) &  &  &  & 0.394 \\
\quad genotype 0 & 358 (36.6\%) & 68 (33.0\%) & 290 (37.6\%) &  \\
\quad genotype 1 & 436 (44.6\%) & 94 (45.6\%) & 342 (44.4\%) &  \\
\quad genotype 2 & 140 (14.3\%) & 34 (16.5\%) & 106 (13.7\%) &  \\
\quad Missing, n (\%) & 43 (4.4\%) & 10 (4.9\%) & 33 (4.3\%) &  \\
\textit{rs9393688 (T)}, n (\%) &  &  &  & 0.051 \\
\quad genotype 0 & 495 (50.7\%) & 113 (54.9\%) & 382 (49.5\%) &  \\
\quad genotype 1 & 372 (38.1\%) & 65 (31.6\%) & 307 (39.8\%) &  \\
\quad genotype 2 & 71 (7.3\%) & 20 (9.7\%) & 51 (6.6\%) &  \\
\quad Missing, n (\%) & 39 (4.0\%) & 8 (3.9\%) & 31 (4.0\%) &  \\
\textit{rs58453446 (C)}, n (\%) &  &  &  & 0.563 \\
\quad genotype 0 & 414 (42.4\%) & 94 (45.6\%) & 320 (41.5\%) &  \\
\quad genotype 1 & 451 (46.2\%) & 89 (43.2\%) & 362 (47.0\%) &  \\
\quad genotype 2 & 110 (11.3\%) & 23 (11.2\%) & 87 (11.3\%) &  \\
\quad Missing, n (\%) & 2 (0.2\%) & 0 (0.0\%) & 2 (0.3\%) &  \\
\textit{rs28719767 (C)}, n (\%) &  &  &  & 0.295 \\
\quad genotype 0 & 481 (49.2\%) & 111 (53.9\%) & 370 (48.0\%) &  \\
\quad genotype 1 & 394 (40.3\%) & 74 (35.9\%) & 320 (41.5\%) &  \\
\quad genotype 2 & 81 (8.3\%) & 18 (8.7\%) & 63 (8.2\%) &  \\
\quad Missing, n (\%) & 21 (2.1\%) & 3 (1.5\%) & 18 (2.3\%) &  \\
\textit{rs28517513 (T)}, n (\%) &  &  &  & 0.970 \\
\quad genotype 0 & 568 (58.1\%) & 117 (56.8\%) & 451 (58.5\%) &  \\
\quad genotype 1 & 362 (37.1\%) & 77 (37.4\%) & 285 (37.0\%) &  \\
\quad genotype 2 & 43 (4.4\%) & 9 (4.4\%) & 34 (4.4\%) &  \\
\quad Missing, n (\%) & 4 (0.4\%) & 3 (1.5\%) & 1 (0.1\%) &  \\
\textit{rs1528624 (G)}, n (\%) &  &  &  & 0.912 \\
\quad genotype 0 & 258 (26.4\%) & 54 (26.2\%) & 204 (26.5\%) &  \\
\quad genotype 1 & 454 (46.5\%) & 93 (45.1\%) & 361 (46.8\%) &  \\
\quad genotype 2 & 247 (25.3\%) & 54 (26.2\%) & 193 (25.0\%) &  \\
\quad Missing, n (\%) & 18 (1.8\%) & 5 (2.4\%) & 13 (1.7\%) &  \\
\textit{rs1425794 (T)}, n (\%) &  &  &  & 0.747 \\
\quad genotype 0 & 330 (33.8\%) & 69 (33.5\%) & 261 (33.9\%) &  \\
\quad genotype 1 & 465 (47.6\%) & 102 (49.5\%) & 363 (47.1\%) &  \\
\quad genotype 2 & 182 (18.6\%) & 35 (17.0\%) & 147 (19.1\%) &  \\
\textit{rs17387279 (G)}, n (\%) &  &  &  & 0.446 \\
\quad genotype 0 & 654 (66.9\%) & 139 (67.5\%) & 515 (66.8\%) &  \\
\quad genotype 1 & 271 (27.7\%) & 60 (29.1\%) & 211 (27.4\%) &  \\
\quad genotype 2 & 38 (3.9\%) & 5 (2.4\%) & 33 (4.3\%) &  \\
\quad Missing, n (\%) & 14 (1.4\%) & 2 (1.0\%) & 12 (1.6\%) &  \\
\textit{rs659398 (T)}, n (\%) &  &  &  & 0.247 \\
\quad genotype 0 & 499 (51.1\%) & 101 (49.0\%) & 398 (51.6\%) &  \\
\quad genotype 1 & 368 (37.7\%) & 77 (37.4\%) & 291 (37.7\%) &  \\
\quad genotype 2 & 73 (7.5\%) & 21 (10.2\%) & 52 (6.7\%) &  \\
\quad Missing, n (\%) & 37 (3.8\%) & 7 (3.4\%) & 30 (3.9\%) &  \\
\textit{rs10987386 (T)}, n (\%) &  &  &  & 0.697 \\
\quad genotype 0 & 674 (69.0\%) & 146 (70.9\%) & 528 (68.5\%) &  \\
\quad genotype 1 & 278 (28.5\%) & 54 (26.2\%) & 224 (29.1\%) &  \\
\quad genotype 2 & 25 (2.6\%) & 6 (2.9\%) & 19 (2.5\%) &  \\
\textit{rs4948502 (C)}, n (\%) &  &  &  & 0.085 \\
\quad genotype 0 & 353 (36.1\%) & 83 (40.3\%) & 270 (35.0\%) &  \\
\quad genotype 1 & 465 (47.6\%) & 84 (40.8\%) & 381 (49.4\%) &  \\
\quad genotype 2 & 159 (16.3\%) & 39 (18.9\%) & 120 (15.6\%) &  \\
\textit{rs12571363 (A)}, n (\%) &  &  &  & 0.900 \\
\quad genotype 0 & 791 (81.0\%) & 165 (80.1\%) & 626 (81.2\%) &  \\
\quad genotype 1 & 176 (18.0\%) & 39 (18.9\%) & 137 (17.8\%) &  \\
\quad genotype 2 & 10 (1.0\%) & 2 (1.0\%) & 8 (1.0\%) &  \\
\textit{rs7927422 (C)}, n (\%) &  &  &  & 0.410 \\
\quad genotype 0 & 349 (35.7\%) & 65 (31.6\%) & 284 (36.8\%) &  \\
\quad genotype 1 & 445 (45.5\%) & 99 (48.1\%) & 346 (44.9\%) &  \\
\quad genotype 2 & 161 (16.5\%) & 36 (17.5\%) & 125 (16.2\%) &  \\
\quad Missing, n (\%) & 22 (2.3\%) & 6 (2.9\%) & 16 (2.1\%) &  \\
\textit{rs11022690 (C)}, n (\%) &  &  &  & 0.374 \\
\quad genotype 0 & 280 (28.7\%) & 53 (25.7\%) & 227 (29.4\%) &  \\
\quad genotype 1 & 467 (47.8\%) & 100 (48.5\%) & 367 (47.6\%) &  \\
\quad genotype 2 & 220 (22.5\%) & 53 (25.7\%) & 167 (21.7\%) &  \\
\quad Missing, n (\%) & 10 (1.0\%) & 0 (0.0\%) & 10 (1.3\%) &  \\
\textit{rs11227223 (T)}, n (\%) &  &  &  & 0.107 \\
\quad genotype 0 & 871 (89.2\%) & 188 (91.3\%) & 683 (88.6\%) &  \\
\quad genotype 1 & 96 (9.8\%) & 14 (6.8\%) & 82 (10.6\%) &  \\
\quad genotype 2 & 2 (0.2\%) & 1 (0.5\%) & 1 (0.1\%) &  \\
\quad Missing, n (\%) & 8 (0.8\%) & 3 (1.5\%) & 5 (0.6\%) &  \\
\textit{rs12313454 (G)}, n (\%) &  &  &  & 0.468 \\
\quad genotype 0 & 735 (75.2\%) & 152 (73.8\%) & 583 (75.6\%) &  \\
\quad genotype 1 & 225 (23.0\%) & 52 (25.2\%) & 173 (22.4\%) &  \\
\quad genotype 2 & 12 (1.2\%) & 1 (0.5\%) & 11 (1.4\%) &  \\
\quad Missing, n (\%) & 5 (0.5\%) & 1 (0.5\%) & 4 (0.5\%) &  \\
\textit{rs10878300 (T)}, n (\%) &  &  &  & 0.819 \\
\quad genotype 0 & 694 (71.0\%) & 144 (69.9\%) & 550 (71.3\%) &  \\
\quad genotype 1 & 247 (25.3\%) & 53 (25.7\%) & 194 (25.2\%) &  \\
\quad genotype 2 & 36 (3.7\%) & 9 (4.4\%) & 27 (3.5\%) &  \\
\textit{rs4444235 (T)}, n (\%) &  &  &  & 0.200 \\
\quad genotype 0 & 256 (26.2\%) & 44 (21.4\%) & 212 (27.5\%) &  \\
\quad genotype 1 & 486 (49.7\%) & 108 (52.4\%) & 378 (49.0\%) &  \\
\quad genotype 2 & 235 (24.1\%) & 54 (26.2\%) & 181 (23.5\%) &  \\
\textit{rs1956028 (C)}, n (\%) &  &  &  & 0.942 \\
\quad genotype 0 & 739 (75.6\%) & 157 (76.2\%) & 582 (75.5\%) &  \\
\quad genotype 1 & 222 (22.7\%) & 45 (21.8\%) & 177 (23.0\%) &  \\
\quad genotype 2 & 14 (1.4\%) & 3 (1.5\%) & 11 (1.4\%) &  \\
\quad Missing, n (\%) & 2 (0.2\%) & 1 (0.5\%) & 1 (0.1\%) &  \\
\textit{rs72743477 (G)}, n (\%) &  &  &  & 0.744 \\
\quad genotype 0 & 608 (62.2\%) & 124 (60.2\%) & 484 (62.8\%) &  \\
\quad genotype 1 & 320 (32.8\%) & 72 (35.0\%) & 248 (32.2\%) &  \\
\quad genotype 2 & 45 (4.6\%) & 9 (4.4\%) & 36 (4.7\%) &  \\
\quad Missing, n (\%) & 4 (0.4\%) & 1 (0.5\%) & 3 (0.4\%) &  \\
\textit{rs4886509 (C)}, n (\%) &  &  &  & 0.677 \\
\quad genotype 0 & 422 (43.2\%) & 92 (44.7\%) & 330 (42.8\%) &  \\
\quad genotype 1 & 417 (42.7\%) & 83 (40.3\%) & 334 (43.3\%) &  \\
\quad genotype 2 & 103 (10.5\%) & 24 (11.7\%) & 79 (10.2\%) &  \\
\quad Missing, n (\%) & 35 (3.6\%) & 7 (3.4\%) & 28 (3.6\%) &  \\
\textit{rs4889526 (A)}, n (\%) &  &  &  & 0.611 \\
\quad genotype 0 & 417 (42.7\%) & 94 (45.6\%) & 323 (41.9\%) &  \\
\quad genotype 1 & 439 (44.9\%) & 87 (42.2\%) & 352 (45.7\%) &  \\
\quad genotype 2 & 114 (11.7\%) & 25 (12.1\%) & 89 (11.5\%) &  \\
\quad Missing, n (\%) & 7 (0.7\%) & 0 (0.0\%) & 7 (0.9\%) &  \\
\textit{rs7196853 (C)}, n (\%) &  &  &  & 0.683 \\
\quad genotype 0 & 773 (79.1\%) & 157 (76.2\%) & 616 (79.9\%) &  \\
\quad genotype 1 & 170 (17.4\%) & 40 (19.4\%) & 130 (16.9\%) &  \\
\quad genotype 2 & 6 (0.6\%) & 1 (0.5\%) & 5 (0.6\%) &  \\
\quad Missing, n (\%) & 28 (2.9\%) & 8 (3.9\%) & 20 (2.6\%) &  \\
\textit{rs750739 (G)}, n (\%) &  &  &  & 0.628 \\
\quad genotype 0 & 716 (73.3\%) & 151 (73.3\%) & 565 (73.3\%) &  \\
\quad genotype 1 & 209 (21.4\%) & 41 (19.9\%) & 168 (21.8\%) &  \\
\quad genotype 2 & 18 (1.8\%) & 5 (2.4\%) & 13 (1.7\%) &  \\
\quad Missing, n (\%) & 34 (3.5\%) & 9 (4.4\%) & 25 (3.2\%) &  \\
\textit{rs4800410 (C)}, n (\%) &  &  &  & 0.299 \\
\quad genotype 0 & 358 (36.6\%) & 83 (40.3\%) & 275 (35.7\%) &  \\
\quad genotype 1 & 444 (45.4\%) & 84 (40.8\%) & 360 (46.7\%) &  \\
\quad genotype 2 & 132 (13.5\%) & 30 (14.6\%) & 102 (13.2\%) &  \\
\quad Missing, n (\%) & 43 (4.4\%) & 9 (4.4\%) & 34 (4.4\%) &  \\
\textit{rs4636990 (G)}, n (\%) &  &  &  & 0.890 \\
\quad genotype 0 & 309 (31.6\%) & 68 (33.0\%) & 241 (31.3\%) &  \\
\quad genotype 1 & 434 (44.4\%) & 90 (43.7\%) & 344 (44.6\%) &  \\
\quad genotype 2 & 229 (23.4\%) & 47 (22.8\%) & 182 (23.6\%) &  \\
\quad Missing, n (\%) & 5 (0.5\%) & 1 (0.5\%) & 4 (0.5\%) &  \\
\textit{rs1548029 (C)}, n (\%) &  &  &  & 0.687 \\
\quad genotype 0 & 383 (39.2\%) & 82 (39.8\%) & 301 (39.0\%) &  \\
\quad genotype 1 & 453 (46.4\%) & 91 (44.2\%) & 362 (47.0\%) &  \\
\quad genotype 2 & 141 (14.4\%) & 33 (16.0\%) & 108 (14.0\%) &  \\
\textit{rs72490631 (C)}, n (\%) &  &  &  & 0.091 \\
\quad genotype 0 & 592 (60.6\%) & 113 (54.9\%) & 479 (62.1\%) &  \\
\quad genotype 1 & 331 (33.9\%) & 83 (40.3\%) & 248 (32.2\%) &  \\
\quad genotype 2 & 49 (5.0\%) & 9 (4.4\%) & 40 (5.2\%) &  \\
\quad Missing, n (\%) & 5 (0.5\%) & 1 (0.5\%) & 4 (0.5\%) &  \\
\textit{rs6006399 (G)}, n (\%) &  &  &  & 0.479 \\
\quad genotype 0 & 801 (82.0\%) & 175 (85.0\%) & 626 (81.2\%) &  \\
\quad genotype 1 & 164 (16.8\%) & 29 (14.1\%) & 135 (17.5\%) &  \\
\quad genotype 2 & 9 (0.9\%) & 2 (1.0\%) & 7 (0.9\%) &  \\
\quad Missing, n (\%) & 3 (0.3\%) & 0 (0.0\%) & 3 (0.4\%) &  \\
\end{longtable}
\endgroup

\begingroup\footnotesize
\setlength{\tabcolsep}{4pt}
\renewcommand{\arraystretch}{1.0}

\begin{longtable}{rr l r r r r r r}
\caption{Simulation results for the five non-zero covariates, averaged over
the replicates indicated in column $n_{\mathrm{valid}}$.
Reported quantities are absolute bias ($|\mathrm{Bias}|$), root mean squared
error (RMSE), empirical standard error (EmpSE), the average model-based
standard error (AvgSE) and the empirical coverage probability of the nominal
95\% confidence intervals (Cov).}
\label{tab:sim_nonzero_full}\\

\toprule
$n$ & $p$ & Method & $n_{\mathrm{valid}}$ & $|\mathrm{Bias}|$ & RMSE & EmpSE & AvgSE & Cov \\
\midrule
\endfirsthead

\multicolumn{9}{l}{\emph{Table \ref{tab:sim_nonzero_full} continued from previous page}}\\
\toprule
$n$ & $p$ & Method & $n_{\mathrm{valid}}$ & $|\mathrm{Bias}|$ & RMSE & EmpSE & AvgSE & Cov \\
\midrule
\endhead

\midrule
\multicolumn{9}{r}{\emph{continued on next page}}\\
\endfoot

\bottomrule
\endlastfoot

500  & 20  & Oracle           & 100 & 0.018 & 0.082 & 0.081 & 0.070 & 0.898 \\
     &     & SMC-DBL          & 100 & 0.013 & 0.088 & 0.088 & 0.084 & 0.938 \\
     &     & Standard IRO     & 100 & 0.087 & 0.121 & 0.083 & 0.075 & 0.748 \\
     &     & Std SMC-FCS      & 100 & 0.025 & 0.099 & 0.096 & 0.087 & 0.930 \\
     &     & Mean Imp + DL    & 100 & 0.052 & 0.096 & 0.080 & 0.071 & 0.840 \\
\cmidrule(lr){2-9}
     & 50  & Oracle           & 100 & 0.025 & 0.083 & 0.078 & 0.069 & 0.916 \\
     &     & SMC-DBL          & 100 & 0.014 & 0.087 & 0.084 & 0.084 & 0.946 \\
     &     & Standard IRO     & 100 & 0.133 & 0.154 & 0.076 & 0.079 & 0.608 \\
     &     & Std SMC-FCS      & 100 & 0.054 & 0.118 & 0.104 & 0.091 & 0.884 \\
     &     & Mean Imp + DL    & 100 & 0.047 & 0.096 & 0.083 & 0.070 & 0.852 \\
\cmidrule(lr){2-9}
     & 100 & Oracle           & 100 & 0.039 & 0.096 & 0.086 & 0.070 & 0.854 \\
     &     & SMC-DBL          & 100 & 0.031 & 0.099 & 0.093 & 0.085 & 0.898 \\
     &     & Standard IRO     & 100 & 0.184 & 0.201 & 0.081 & 0.090 & 0.470 \\
     &     & Std SMC-FCS      & 100 & 0.124 & 0.200 & 0.155 & 0.113 & 0.790 \\
     &     & Mean Imp + DL    & 100 & 0.041 & 0.100 & 0.088 & 0.071 & 0.836 \\
\cmidrule(lr){2-9}
     & 200 & Oracle           & 100 & 0.063 & 0.108 & 0.083 & 0.071 & 0.802 \\
     &     & SMC-DBL          & 100 & 0.041 & 0.104 & 0.092 & 0.088 & 0.908 \\
     &     & Standard IRO     & 100 & 0.350 & 0.359 & 0.077 & 0.160 & 0.412 \\
     &     & Std SMC-FCS      & 14  & 5.354 & 5.639 & 1.680 & 2.580 & 0.471 \\
     &     & Mean Imp + DL    & 100 & 0.034 & 0.097 & 0.089 & 0.072 & 0.844 \\
\midrule
1000 & 20  & Oracle           & 100 & 0.006 & 0.050 & 0.049 & 0.049 & 0.952 \\
     &     & SMC-DBL          & 100 & 0.017 & 0.060 & 0.057 & 0.059 & 0.950 \\
     &     & Standard IRO     & 100 & 0.061 & 0.083 & 0.056 & 0.051 & 0.722 \\
     &     & Std SMC-FCS      & 100 & 0.011 & 0.061 & 0.060 & 0.061 & 0.948 \\
     &     & Mean Imp + DL    & 100 & 0.067 & 0.087 & 0.054 & 0.050 & 0.712 \\
\cmidrule(lr){2-9}
     & 50  & Oracle           & 100 & 0.012 & 0.055 & 0.054 & 0.049 & 0.926 \\
     &     & SMC-DBL          & 100 & 0.021 & 0.066 & 0.062 & 0.059 & 0.916 \\
     &     & Standard IRO     & 100 & 0.100 & 0.116 & 0.058 & 0.053 & 0.528 \\
     &     & Std SMC-FCS      & 100 & 0.021 & 0.074 & 0.071 & 0.062 & 0.902 \\
     &     & Mean Imp + DL    & 100 & 0.056 & 0.084 & 0.060 & 0.050 & 0.740 \\
\cmidrule(lr){2-9}
     & 100 & Oracle           & 100 & 0.020 & 0.058 & 0.054 & 0.049 & 0.910 \\
     &     & SMC-DBL          & 100 & 0.019 & 0.064 & 0.060 & 0.059 & 0.926 \\
     &     & Standard IRO     & 100 & 0.129 & 0.140 & 0.054 & 0.056 & 0.366 \\
     &     & Std SMC-FCS      & 100 & 0.040 & 0.083 & 0.072 & 0.065 & 0.880 \\
     &     & Mean Imp + DL    & 100 & 0.053 & 0.078 & 0.056 & 0.050 & 0.754 \\
\cmidrule(lr){2-9}
     & 200 & Oracle           & 100 & 0.034 & 0.066 & 0.055 & 0.049 & 0.850 \\
     &     & SMC-DBL          & 100 & 0.020 & 0.062 & 0.057 & 0.060 & 0.956 \\
     &     & Standard IRO     & 100 & 0.173 & 0.182 & 0.054 & 0.068 & 0.268 \\
     &     & Std SMC-FCS      & 100 & 0.103 & 0.139 & 0.091 & 0.078 & 0.752 \\
     &     & Mean Imp + DL    & 100 & 0.040 & 0.071 & 0.056 & 0.050 & 0.826 \\
\midrule
2000 & 20  & Oracle           & 100 & 0.005 & 0.036 & 0.036 & 0.035 & 0.944 \\
     &     & SMC-DBL          & 100 & 0.015 & 0.043 & 0.040 & 0.042 & 0.948 \\
     &     & Standard IRO     & 100 & 0.045 & 0.060 & 0.039 & 0.036 & 0.728 \\
     &     & Std SMC-FCS      & 100 & 0.004 & 0.042 & 0.042 & 0.043 & 0.960 \\
     &     & Mean Imp + DL    & 100 & 0.071 & 0.081 & 0.037 & 0.036 & 0.490 \\
\cmidrule(lr){2-9}
     & 50  & Oracle           & 100 & 0.010 & 0.040 & 0.039 & 0.035 & 0.918 \\
     &     & SMC-DBL          & 100 & 0.016 & 0.046 & 0.042 & 0.042 & 0.904 \\
     &     & Standard IRO     & 100 & 0.067 & 0.079 & 0.041 & 0.036 & 0.568 \\
     &     & Std SMC-FCS      & 100 & 0.013 & 0.048 & 0.046 & 0.043 & 0.932 \\
     &     & Mean Imp + DL    & 100 & 0.066 & 0.077 & 0.039 & 0.036 & 0.554 \\
\cmidrule(lr){2-9}
     & 100 & Oracle           & 100 & 0.011 & 0.041 & 0.039 & 0.035 & 0.922 \\
     &     & SMC-DBL          & 100 & 0.017 & 0.048 & 0.044 & 0.042 & 0.912 \\
     &     & Standard IRO     & 100 & 0.090 & 0.099 & 0.040 & 0.037 & 0.374 \\
     &     & Std SMC-FCS      & 100 & 0.019 & 0.052 & 0.048 & 0.045 & 0.918 \\
     &     & Mean Imp + DL    & 100 & 0.059 & 0.071 & 0.039 & 0.036 & 0.598 \\
\cmidrule(lr){2-9}
     & 200 & Oracle           & 100 & 0.025 & 0.048 & 0.040 & 0.035 & 0.852 \\
     &     & SMC-DBL          & 100 & 0.012 & 0.046 & 0.043 & 0.042 & 0.930 \\
     &     & Standard IRO     & 100 & 0.108 & 0.115 & 0.040 & 0.039 & 0.246 \\
     &     & Std SMC-FCS      & 100 & 0.046 & 0.069 & 0.050 & 0.046 & 0.800 \\
     &     & Mean Imp + DL    & 100 & 0.052 & 0.066 & 0.040 & 0.036 & 0.674 \\

\end{longtable}
\endgroup.
 
 

\end{document}